\DeclareMathOperator{\E}{\mathbbm{E}}
\titleclass{\part}{top} 
\titleformat{\part}[display]
{\centering}
{}
{20pt} 
{\resizebox{!}{1cm}{\uline{\partname\ \thepart}}\\\vspace*{1em}\Huge} 
[\vspace*{4cm}\thispagestyle{empty}\clearpage] 
\newtheorem{example}{Example}
\newtheorem{defi}{Definition}
\newtheorem{assume}{Assumption}
\newtheorem{remark}{Remark}
\newcolumntype{+}{!{\vrule width 2pt}}
\newlength\savedwidth
\newcommand{\ind}{{\mathbbm{1}}}
\newcommand{\Prob}{\mathbbm{P}} 
\newcommand{\vect}[1]{\boldsymbol{#1}}
\DeclareMathOperator*{\argmax}{\arg\max}
\DeclareMathOperator*{\argmin}{\arg\min}
\newcommand{\syn}{\textrm{syn}}
\title{A Causal Inference Approach of Monosynapses from Spike Trains
}
\author{
  Zach Saccomano \\
  \textit{School of Neuroscience, Virginia Tech} \\
  \texttt{zachsaccomano@vt.edu} \\
  \And
  Sam McKenzie \\
  \textit{Health Science Center, University of New Mexico} \\
  \texttt{samckenzie@salud.unm.edu} \\
  \And
  Horacio G. Rotstein \\
  \textit{Federated Department of Biological Sciences,} \\
  \textit{New Jersey Institute of Technology \& Rutgers University} \\
  \And
  Asohan Amarasingham \\
  \textit{Department of Mathematics, The City College of NY} \\
  \textit{Depts. of Biology and Computer Science, The Graduate Center} \\
  \textit{City University of New York} \\
  \texttt{aamarasingham@ccny.cuny.edu} \\
}
\begin{document}
\maketitle

\begin{abstract}
Neuroscientists have worked on the problem of estimating synaptic properties, such as connectivity and strength, from simultaneously recorded spike trains since the 1960s. Recent years have seen renewed interest in the problem, coinciding with rapid advances in the technology of high-density neural recordings and optogenetics, which can be used to calibrate causal hypotheses about functional connectivity. Here, a rigorous causal inference framework for pairwise excitatory and inhibitory monosynaptic effects between spike trains is developed. Causal interactions are identified by separating spike interactions in pairwise spike trains by their timescales. Fast algorithms for computing accurate estimates of associated quantities are also developed. Through the lens of this framework, the link between biophysical parameters and statistical definitions of causality between spike trains is examined across a spectrum of dynamical systems simulations. In an idealized setting, we demonstrate a correspondence between the synaptic causal metric developed here and the probabilities of causation developed by \citet{tian2000probabilities}. Since the probabilities of causation are derived under distinct assumptions and include data from experimental randomization, this opens up the possibility of testing the synaptic inference framework's assumptions with juxtacellular or optogenetic stimulation. We simulate such an experiment with a biophysically detailed channelrhodopsin model and show that randomization is not achieved; strong confounding persists even with strong stimulations. A principal goal is to ask how carefully articulated causal assumptions might better inform the design of neural stimulation experiments and, in turn, support experimental tests of those assumptions. \color{black} 
\end{abstract}

\keywords{functional connectivity \and causal inference \and spike trains \and dynamical systems}

\section{Introduction}

Various lines of experimental evidence suggest that, in some neuronal pairs,  monosynaptic input can reliably produce a postsynaptic spike response \textit{in vivo} \citep{English2017,Csicsvari1998} with a delay and precision that acts on millisecond timescales. Moreover, it appears that the most plausible explanation for corresponding observations of appropriately-timed millisecond-timescale correlations is the presence of a monosynaptic connection between the two cells. What is more, there is evidence that the magnitude of such fine-timescale correlations co-vary with the synapse's strength~\citep{jouhanneau2018single}. This suggests that a careful study of millisecond-timescale correlations in simultaneously-recorded spike trains might be a tool for studying synaptic dynamics during behavior \cite{Fujisawa2008}.

In practice, the hypothesis that monosynaptic effects can act on millisecond timescales is often incorporated into their analysis by a statistical formulation of a separation of timescale hypothesis. This formulation can be appreciated by looking at anecdotal examples of cross-correlograms (CCGs) from studies~\cite{English2017} that offer support for a causal interpretation by juxtacellular and optogenetic stimulation of putative presynaptic neurons \textit{in vivo} (see Figure~\ref{fig:neural_interventions}). Nevertheless, causal claims in highly connected systems ought to be treated delicately. It has been suggested that isolating fine timescale effects might be a way to sidestep such concerns~\citep{mehler2018lure,platkiewicz2021monosynaptic}. While many methods have been proposed for monosynaptic inference, relatively few have modeled causal relationships explicitly (but see ~\citep{lepperod2023inferring,platkiewicz2021monosynaptic}). Furthermore, many such methods operate on the CCG, and even under a timescale-separation assumption, the CCG is insufficient to identify synaptic properties even in quite simple models (Figure~\ref{fig:neural_interventions})~\citep{stevenson2008inferring}.

The primary focus of this study is to contribute to the development of robust and rigorous approaches to monosynaptic inference in which the causal inference is explicit. We develop a causal inference framework for monosynaptic interactions that is based on separation of timescale hypotheses that are robust to strong forms of nonstationarity in the background dynamics (the concern we have most heavily emphasized in the context of spike train analysis more generally \citep{Amarasingham2006,Amarasingham2012,Amarasingham2015}), among other forms of model misspecification. Unbiased estimators and confidence intervals for causal quantities are derived under rigorously-articulated statistical assumptions, and we develop accurate, efficient algorithms for computing these quantities. The performance of causal inference is then examined over broad parameter ranges in simulations of increasing complexity, ranging from point process models to adaptive exponential integrate-and-fire (AdEx) neuron models. 

We also use simulations to examine the correspondence between the causal metrics for monosynaptic interaction developed here and the {probabilities of causation,} as developed in \citet{tian2000probabilities} and elsewhere, which quantify the necessity and sufficiency of causation probabilistically. While the correspondence is studied in a setting that relies on strong idealizations, variations that are more finely tuned to experimental work that incorporates system-specific constraints might use an analogous correspondence to test the model's assumptions \textit{in vivo} or calibrate its free parameters via stimulation. Toward this goal, we use a biophysically detailed opsin model to simulate such an experiment \textit{in silico}. Using a theoretically motivated stimulation paradigm, we demonstrate that common input correlations might be difficult to disentangle from common causal influences with current experimental technologies, motivating future research on that point.

\section{Preliminary considerations and general architecture}
\label{sec:general_causal_model}

We begin with a general discussion of causal models to facilitate a uniform comparison between models and simulations instantiated at different levels of abstraction. As has been widely discussed, the key motivation for explicitly modeling causation is to distinguish association from causation. \, 
An intuitive model for doing so can be described by potential outcomes~\citep{neyman1923application,rubin1974,imbens2015,wasserman2013all}. We write $\vect Y^{(\vect X=\vect x)}$ as the `potential outcome' of the random variable $\vect Y$ if the variable $\vect X$ is `forced' to take the value $\vect x.$ The random 
variables $\vect X, \vect Y,$ as well as those of the form $\vect Y^{(\vect X=\vect x)}$ are presumed defined on a common probability space. We distinguish observational from experimental trials in this way. In experimental trials, the behavior of an agent external to the system (i.e., an agent that intervenes
{\it on} the system) is explicitly-modeled; in observational trials, there is no such intervention.  For example, in a drug efficacy trial, let $\{\vect X_k=0\}$ represent the event that patient $k$ takes the treatment and let $\{\vect X_k=1\}$ represent the event that patient $k$ takes a placebo. Then, if $\vect Y_k$ represents the measured outcome (mortality, for example) for patient $k$ in an {\it observational trial}, then $\vect Y_k^{(\vect X_k=1)}$ represents the measured outcome for patient $k$ in an {\it experimental trial}, such as a randomized control trial (RCT), in which patient $k$ has been assigned to take the treatment by a mechanism or agent external to the modeled system. Potential outcomes represent answers to questions of the form `\textit{What would happen if an external agent intervened on the system?}' and are used to define causal relations. In this case, the causal effect of the drug on the measured outcome, for patient $k$, is $\vect Y_k^{(\vect X_k=1)} - \vect Y_k^{(\vect X_k=0)}.$ It is commonly pointed out that the challenge of causal inference is that one of $\vect Y_k^{(\vect X_k=1)}$ and $\vect Y_k^{(\vect X_k=0)}$ is unobservable. The potential outcomes notation makes it straightforward to demonstrate that an RCT is designed to infer {\it average} causal effects across a population, e.g., ${\mathbb E} \left[ Y_R^{(\vect X_R=1)} - \vect Y_R^{(\vect X_R=0)} \right],$ where $R$ is a patient chosen in a simple random sample. (The latter demonstration assumes {\it consistency,} which is the assumption that the events $\{\vect Y_k=\vect y,\vect X_k=\vect x\}$  and $\{\vect Y_k^{(\vect X_k=\vect x)}=\vect y\}$ are identical.) 

A {\it constructive} way to model potential outcomes is by explicitly modeling interventions in terms of how a structural causal model~\citep{pearl2009} is simulated. In this approach, the structure of a system is modeled via the relationships among its variables, specified by a set of functions, as in a dynamical system. This set of functions can be put in correspondence with a directed graph by associating each variable with a vertex: a source vertex and a target vertex has a directed edge if one of the functions has the source in its domain and the target in its range. It is required that the directed graph is acyclic, which is equivalent to requiring that there is a consistent (sequential) method of simulating the system, whose ordering respects the graph. The {\it background} variables (noise variables) -- those variables whose corresponding vertices do not have incoming edges -- are instantiated as independent random variables and represent the influence of the world external to the system, in the absence of interventions.  We can then think of the simulation as a closed system. Random variables can be sampled by simulation. The background variables are sampled as noise terms. Probability propagates via the functional relationships to induce a joint probability distribution on the entire system of variables. This joint distribution specifies all probability distributions (conditional and marginal distributions) of interest. All such probability distributions can then, in principle, be estimated by simulation.  This is a more or less standard probabilistic point of view. The language of association is the language of conditional distributions. The association between random variables $\vect X$ and $\vect Y$ describes the distribution $P(\vect X|\vect Y);$ there is no association if $P(\vect X|\vect Y)=P(\vect X).$  

 We can describe {\it interventions} explicitly in the sequential simulation just identified. In this description, intervening on some variables means explicitly resetting their values before the functions that call them are evaluated (in the sequential method of simulation). These resets are the interventions; interventions model the action of agents external to the system. The random variables sampled in simulating the intervened system are potential outcomes. The $\vect Y^{(\vect X=k)}$ encodes the outcome for variable $\bf Y$ but in the {\it intervened} system in which $\vect X$ is reset to the value $k$ before its use in function calls. As before, the probabilities propagate via the functional relationships and the interventions to induce a new joint probability distribution on the entire system. This joint distribution specifies all probability distributions of interest in the intervened system. All such probability distributions can again, in principle, be estimated by simulation. Causal language can then be understood as a vocabulary for discussing how interventions modify probabilities of interest. Pearl \cite{pearl2009} uses the term $do(\cdot)$ to specify probability distributions for intervened systems. $P(\vect Y| do(\vect X=\vect x),\vect Z)$ represents the conditional distribution $P(\vect Y|\vect Z)$ when the system is intervened upon by assigning random variable $\vect X$ to the value $\vect x,$ where assigning is taken in the sense of  `resetting' above. Thus $P(\vect Y| do(\vect X=\vect x),\vect Z)$ is another way of writing $P(\vect Y^{(\vect X=x)}  | \vect Z^{(\vect X=x)}  )$. In what follows, we use either notation freely, for convenience.

\begin{figure}[!htbp]
    \centering
\includegraphics[width=.7\textwidth]{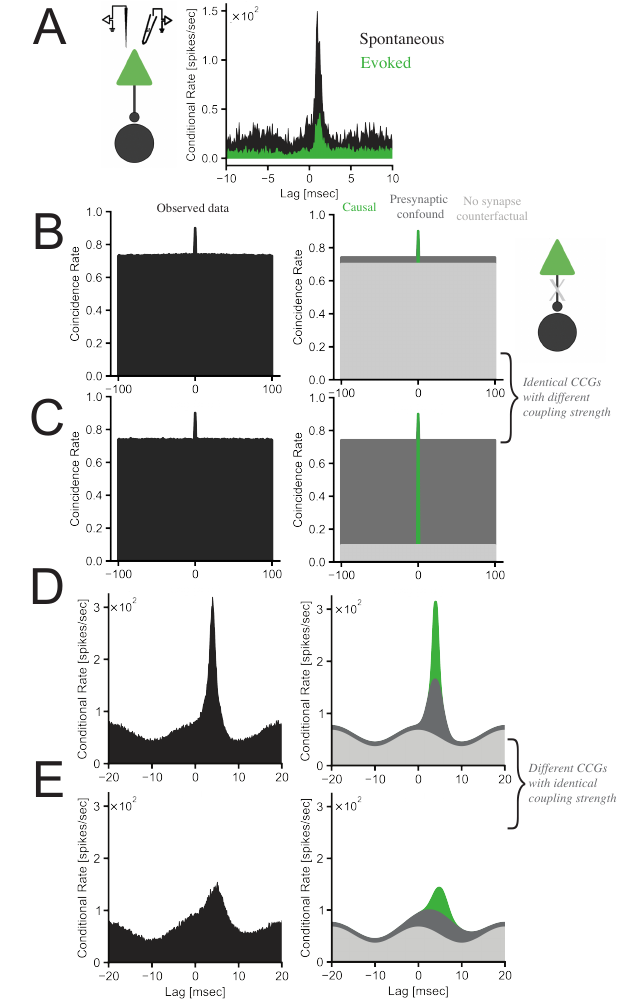}
    \caption[Neural interventions and confounding]{{\bf Toy examples of confounding in monosynaptic interactions.} Simulations for examples are explained in detail in Appendix~\ref{app:ccg_exam}. \textbf{A:} A CCG from a hippocampal pyramidal cell and interneuron \textit{in vivo} from the study of \citet{English2017}. The pyramidal neuron (hypothesized to be presynaptic) 
    spontaneously spikes (black) or spikes in response to experimental juxtacellular stimulation (green). Data like these motivate timescale separation assumptions. \textbf{B: } CCG from {Example~\ref{example:simple_ccg_degeneracy}} for \textbf{Situation A} where $\epsilon_A = 0.2, \lambda_{R,A} = 0.2, \lambda_{T,A} = 0.7$. Left panel is a simulation, right panel is analytic for all rows. \textbf{C: } \textbf{Situation B} in {Example~\ref{example:simple_ccg_degeneracy}} where $\epsilon_B = 0.8, \lambda_{R,B} = 0.8, \lambda_{T,B} = 0.1$. Note the observed CCGs are the same in \textbf{B} and \textbf{C} but the causal stories generating the data are quite different. \textbf{D: } \textbf{Situation A} in {Example~\ref{example:complex_ccg_degeneracy}} \textbf{E: } \textbf{Situation B} in {Example~\ref{example:complex_ccg_degeneracy}}. For \textbf{D-E} the parameters are $\sigma_A^2 = 10$ ms, $\sigma_B^2 = 90$ ms, $\sigma_s^2 = 2.5$ ms, $d = 4$ ms, $\alpha = 0.005$, $\omega = 20$ Hz. The observed CCGs in the left panels of \textbf{D} and \textbf{E} are quite different but in the right panels the number of causal spikes (the areas of the green regions) are identical and the presynaptic trains, and thus their ACGs, are also identical.} 
\label{fig:neural_interventions}
\end{figure}

Where does this language  -- in which interventions by agents are explicitly modeled -- improve upon more familiar statistical modeling? A textbook example is ``Simpson's Paradox'', a phenomenon where the observed statistical association between variables in a population is the opposite of that observed within each subgroup of a partition of the population (see~\citep{wasserman2013all,pearl2016causal} for more explanation). In neurophysiology, perhaps the most familiar object used in the laboratory for synaptic inference is the cross-correlogram (CCG). To motivate the framework of this article, we construct examples where the CCG and presynaptic ACG~\cite {spivak2022deconvolution,stevenson2023circumstantial} are insufficient for causal inference. The CCG hides information about time-dependent background correlations and presynaptic bursts that produce associations in the CCG that are not causal. The situation is analogous to \textit{Simpson's Paradox} where statistical associations must be isolated in strata of confounders that have specific properties. The CCG collapses over these strata of the confounders. {Figure~\ref{fig:neural_interventions}} illustrates these ideas in some point process examples, with a detailed mathematical explanation of the simulations given in Appendix~\ref{app:ccg_exam}.

\begin{table}[htp]
\centering
\caption{Description of symbols and notation in monosynaptic causal inference model}
\label{tab:symbol_descriptions}
\begin{tabular}{ll}
\toprule
\textbf{Symbol} & \textbf{Description} \\
\midrule
$(A_k)_{k \in \mathbbm{Z}^*}$ & An ordered sequence $(A_0,A_1,...)$ \\
$\vect A$ & A set of numbers $\{A_1,A_2,...\}$ \\
$\vec{\vect A}$ & A matrix \\
$\vec{a}$ & A vector \\
$|A|$ & The cardinality of a set $A$ or $A$'s absolute values if $A$ is a scalar \\
$\mathbbm{1}\{A\}$ & The indicator function of the event A \\
$\vect A^{[n]}$ & The set of all subsets of cardinality $n$ from a set $\vect A$ \\
$\vect X, \vect Y$ & Generic sets of points (e.g., spike times) often reused \\
$N_A(\vect X)$  & Number of spikes from $\vect X$ in the temporal regions A  \\
$\gamma(t)$  & The $\Delta$ coarse temporal interval containing $t$  \\
$\vect \Gamma(\vect X)$  & An abbreviation for $(N_{\gamma(\Delta k)}(\vect X))_{k \in \mathbbm{Z}^*}$  \\
$S(\vect X,\delta,\tau)$  & The union of all $\delta$ length intervals centered around the $\tau$ shifted elements of $\vect X$ \\
$\delta, \tau, \Delta$ & The synaptic timescale and lag, and background timescale (model parameters) \\
$\delta_d$ & The Dirac delta function \\
$\vect Y^{(\vect x)}$  & The potential outcome of a spike train $\vect Y$ if a spike train $\vect X$ is forced to be $\vect x$ \\
$\vect R, \vect B, \vect I, \vect T$  & The reference, background, interaction, and target events, respectively \\
$\theta_{syn}$  & The number of interactions caused by a synapse \\
$q(\vect R, x)$  & The conditional probability some $x \in \vect B$ will be found in $S(\vect R,\delta,\tau)$ \\
$\vect q(\vect R, \vect T^{(\vect R)})$  & The collection $(q(\vect R, x))_{x \in \vect T^{(\vect R)}}$  \\
$\vect G$  & The set of reference spikes with no target spike found in their interaction regions  \\
$\vect K$  & An index set for $\vect T$ \\
$\vect J$  & The indices of $\vect K$ labeling which elements of $\vect T$ equal $\vect B$ when $\theta_{syn} \geq 0$\\
$\vect L$ & The indices of $\vect K$ labeling synchronous elements of $\vect T$ when $\theta_{syn} \geq 0$ \\
$\vect U_h$, $\vect V_h$ & Indices of $\vect L$ corresponding to two limiting cases for the hypothesis $\theta_{syn} = h \geq 0$ \\
$\alpha$ & One minus the confidence level \\
$\vect J_h^-$, $\vect J_h^+$  & Hypotheses for $\vect J$ given the hypothesis $\theta_{syn} = h \geq 0$ \\
$\vect Z$ & Candidate points that might be near inhibitory events $\vect T^{(\emptyset)} \setminus \vect T^{(\vect R)}$ \\
$\Tilde{\vect K}$ & An index set for $\vect Z$ \\
$\Tilde{\vect J}$ & The indices in $\Tilde{\vect K}$ for points in $\vect Z$ near $\vect T^{(\emptyset)} \setminus \vect T^{(\vect R)}$ \\
$\Tilde{\vect J}_h^-$, $\Tilde{\vect J}_h^+$  & Hypotheses for $\Tilde{\vect J}$ given the hypothesis $\theta_{syn} = h$ \\
$\chi(\ell,\vect X, \vect Y)$ & The sample cross-correlation function between spike trains $\vect X$ and $\vect Y$ \\
$\mathcal{C}(\vect R, \vect T, \alpha)$ & $1-\alpha$ confidence interval for $\theta_{syn} \geq 0$ \\
\end{tabular}
\end{table}

\newpage
\section{Monosynaptic causal inference model}
\subsection{Formulation for primary model}
\label{sec:mono_causal_infer_model}

Let $\vect X$ be a finite set of spike times for an experiment of duration $D$. For any $A = \cup_i [a_i,b_i)$, let
$N_A(\vect X) \coloneqq |\vect X \cap A|$, termed the \textit{increment} of the point process $\vect X$ in $A$~\citep{Kass2005,daley2008introduction}. Assuming the time origin is randomized in the experimental sense, implicitly define a partition of $\mathbbm{R}^+$ into  $\Delta$ length intervals with a function that for any time point $t \in \mathbbm{R}^+$ retrieves the unique coarse interval containing $t$,
\begin{equation}
    \gamma(t) \coloneqq \bigg\{ I \in \{[k\Delta,k\Delta+\Delta) : k \in \mathbbm{Z}^*\} : t \in I \bigg\}.
\end{equation}
Thus the $k$-th interval will frequently be written by taking $t$ at its left endpoint, $\gamma(k\Delta)$. However, in future sections, it will be equally convenient to access interval $k$ by letting $t$ equal any spike time in interval $k$. Abbreviate the sequence of spike counts for some spike times $\vect X$ in the coarse intervals with the special symbol
\begin{equation}
    \vect \Gamma(\vect X) \coloneqq (N_{\gamma(k\Delta)}(\vect X))_{k \in \mathbbm{Z}^*} = (N_{[0,\Delta)}(\vect X), N_{[\Delta,2\Delta)}(\vect X),...)
\end{equation}
\noindent and for any $\vect X$ denote the union of all $\delta$ length intervals centered around the $\tau$ shifted elements of $\vect X$ as 
\begin{align}
S(\vect X,\delta,\tau) \coloneqq \bigcup_{x \in \vect X} \left\{s : |x + \tau - s| \leq \frac{\delta}{2}\right\}    
\label{eq:synch_region}
\end{align}
\noindent where the second two arguments will often be suppressed from the notation when the context is clear, i.e., $S(\vect X)$. Let $\vect Y^{(\vect x)}$ be the potential outcome of a spike train $\vect Y$ if a spike train $\vect X$ is forced to spike at a set of times $\vect x$ with otherwise fixed background conditions. That is, introduce the counterfactual notion that $\vect Y^{(\vect x)}$ \textit{would have been} the set of times $\vect Y$ spiked if $\vect X$ had been $\vect x$~\citep{imbens2015}. As previewed above, we will work with a reference spike train, $\vect R$, and a target train, $\vect T$, and the scientific question is to ask if $\vect R$ acts on $\vect T$ with a monosynapse. Further suppose the target train is constructed from latent point processes $\vect B$ and $\vect I$, termed background events and interactions, respectively. A causal model is then defined with the deterministic relation,
\begin{align}
&\forall \vect r, \vect T^{(\vect r)} \coloneqq \bigg\{\begin{array}{lr}
        \vect B \cup \big(\vect I^{(\vect r)} \setminus \cup_{r \in \vect r} \{S(r) : N_{S(r)}(\vect B) > 0\}\big), \text{ excitatory model }\\
        \vect B \setminus \cup_{r \in \vect r} \{S(r) : N_{S(r)}(\vect I^{(\vect r)}) > 0\}, \text{ inhibitory model } 
        \end{array} \label{eq:synon}
\end{align}

where $\vect T^{(\vect r)}$ references the intervention $do(\vect R = \vect r)$. Notice, by construction, the background events $\vect B$ are invariant to any action $do(\vect R = \vect r)$. While one might argue this is a strong simplification, it is appropriate to compare it to the assumption that smooth features in the CCG are non-causal~\citep{kobayashi2019} which, for Poisson-based models, is necessarily a subset of the simplification just made (Figure~{\ref{fig:neural_interventions}C-D})\color{black}. We will be concerned with estimation of the parameter $\theta_{syn} = N_{S(\vect R)}(\vect T)  - N_{S(\vect R)}(\vect T^{(\emptyset)})$ where $\theta_{syn} > 0$ for excitatory interactions, $\theta_{syn} < 0$ for inhibitory interactions, and $\theta_{syn} = 0$ for non-interacting neurons. In the following, it will be useful to define,
\begin{equation}
    q(\vect R, x)  \coloneqq \frac{1}{\Delta} \int_{t \in \mathbbm{R}^+} \mathbbm{1}\{t \in S(\vect R) \cap \gamma(x)\} \; \; dt.
\end{equation}

That is, $q(\vect R, x)$ is the proportion of times $t \in \gamma(x)$ that are within a distance $\delta/2$ of a point in $\{r + \tau : r \in \vect R\}$. Constructing confidence intervals for $\theta_{syn}$ will require some additional notation. First, let us set up objects that will be used for an exact excitatory confidence interval. Denoting $T_k=f_0(k)$, fix any bijective mapping  $f_0: \vect K \mapsto \vect T^{(\vect R)}$ that satisfies
\begin{align}
\label{eq:rankord}
\centering
    &q(\vect R, T_{1}) \leq q(\vect R, T_{2}) \leq ... \leq q(\vect R, T_{|\vect T|}).
\end{align}

\noindent We will write
\begin{align} 
\label{eq:qdef}
\centering
    &\vect q( \vect R, \vect T^{(\vect R)} ) = ( q(\vect R, T_{1}),  q(\vect R, T_{2}), ...,q(\vect R, T_{|\vect T|})).
\end{align}

Let $\vect J$ denote the subset of $\vect K$ indexing the true background events, $\vect B$, in the sense that $\vect J$ satisfies $\{T_j : j \in \vect J \text{ and } \vect J \subseteq \vect K\} = \vect B$.

With this notation fresh in mind, we define a background model using the principle of conditional uniformity, which has been motivated and developed as a canonical assumption in previous work~\citep{amarasingham2011conditional,Harrison2014}. For our purposes here, the following technical definition is sufficient (see~\cite{daley2008introduction} for more on point processes and their characterization).

\begin{defi}
\textbf{Conditionally uniform point process}: Define $g(\vect Y,A) = | \vect Y \cap \gamma( \inf A ) |,$ where $\vect Y$ is a point process and $A$ is a subset of $\mathbb{R}.$ A point process \( \vect Y \) is {\it conditionally uniform, conditioned on \( \vect \Gamma(\vect Y) \) and \( \vect X\) }, if
\begin{equation}
\label{eq:cond-unif}
\mathbbm{P}\left( \cap_{k=1}^m \{ | \vect Y \cap A_k | = n_k \} | \vect \Gamma(\vect Y), \vect X \right) = \prod_{k=1}^m \binom{ g(\vect Y,A_k) }{ n_k} \left( \frac{ |A_k| }{\Delta} \right)^{n_k} \left( 1 - \frac{ |A_k| }{\Delta} \right)^{g(\vect Y,A_k) - n_k}, 
\end{equation}
if $n_k \leq g(\vect Y,A_k)$ for all $k \in \{1,2,..., m\},$ for any disjoint finite collection \( A_1, A_2, ..., A_m \) of subsets of \(\mathbb{R}\) that satisfies: i) for each \( j, A_j \subseteq [ k_j\Delta, k_j\Delta + \Delta) \) for some integer \( k_j \) and ii) \( \gamma( \inf A_{j_1} ) \not= \gamma( \inf A_{j_2} )  \) whenever \( j_1 \not= j_2.\)
\end{defi}

A common way of modeling point processes is with \textit{conditional intensity functions}~\citep{Kass2005}. While the formulation just outlined does not make use of them, later we will simulate from conditional intensity function models to demonstrate this formulation is compatible. In continuous time, the conditional intensity function $\lambda_{\vect X}(t)$ for a point process $\vect X$ is, $\lambda_{\vect X}(t) \coloneqq \lim_{\Delta t \rightarrow 0} \E[N_{[t,t+\Delta t)}(\vect X) | \mathcal{H}_t]/\Delta t$ where $\mathcal{H}_t$ is the history of the system prior to time $t$.

 \begin{remark}
 In neuroscience, the term ``rate'' might refer to one of several ideas~\cite {Amarasingham2015}. In the current work, we use the word rate with regard to samples of a conditional intensity function and highlight the normalization when used.
 \end{remark}

\subsection{Assumptions}

\begin{assume} \label{as:4}
\textbf{Conditional uniformity:} $\vect B$ is a conditionally uniform point process, conditioned on $\vect \Gamma(\vect B)$ and $\vect J$.
\end{assume}

\begin{assume} \label{as:2}
\textbf{Timescale separation: } For some $\tau$, $\delta$, and $\Delta, \vect I^{(\vect r)} \subset S(\vect r)$, for all $\vect r$, where $\delta < \Delta$. ($\tau$, $\delta$, and $\Delta$ are model parameters.)
\end{assume}

\begin{assume} \label{as:3}
\textbf{Positivity: } $0 < q(\vect R, k\Delta) < 1$, for all $k \in \mathbbm{Z}^*$.
\end{assume}

\begin{assume} \label{as:1}
\textbf{Consistency: } $\vect I = \vect I^{(\vect R)} \text{ and } \vect T = \vect T^{(\vect R)} $.
\end{assume}

Assumption $i$ will be abbreviated as $\mathcal{A}.i$. There have been debates about whether $\mathcal{A}.$\ref{as:1} (consistency) is an assumption or axiom of causal inference \citep{pearl2010brief,vanderweele2009concerning,cole2009consistency}. Here, we take it as an assumption in the sense of highlighting where it is invoked or self-evident. Similarly, one can view $\mathcal{A}.$\ref{as:3} (positivity) as an identifiability condition, and in our case, its validity can be determined with observational data. For this reason, one could simply define $\theta_{syn}$ in terms of regions of an experiment that provide identifiable causal information. However, following the causal inference literature~\citep{hernan2010causal} and for full conceptual clarity, we make it an assumption which more easily accommodates an explanation of both perspectives, leaving scientists to make their own judgment within the context of specific questions. In particular, the assumption interrelates with various other issues, including choosing free parameters, which will be discussed at length. For this purpose, the following will be useful.

\begin{defi} 
\textbf{Synchrony saturation:} Synchrony saturation refers to an observation of the model where $\mathcal{A}.$\ref{as:4} (conditional uniformity) and $\mathcal{A}.$\ref{as:2} (timescale separation) are true but $\mathcal{A}.$\ref{as:3} (positivity) is violated. That is, we say an observation is synchrony saturated if there is an interval identified by $k \in \mathbbm{Z}^*$ such that $q(\vect R, k\Delta) = 1$.
\end{defi}

The primary motivation for $\mathcal{A}.$\ref{as:2} (timescale separation) is empirical~\citep{Csicsvari1998,Fujisawa2008,English2017,jouhanneau2018single} although this assumption will be investigated in simulations of dynamical systems in later sections. Finally, $\mathcal{A}.$\ref{as:4} (conditional uniformity) is motivated by the observation that \textit{in vivo} spike trains are nonstationary~\citep{shinomoto1999ornstein,Softky1993}, and likely rapidly-varying. Hence, distinct points in time cannot be averaged to estimate conditional intensity functions or their variants, such as the cross-correlation function~\cite {Amarasingham2015,amarasingham2011conditional}, a matter made worse by confounding. As discussed in past work~\citep{Harrison2014}, the particular use of \textit{uniformity} is motivated by the fact that the uniform distribution is the maximum entropy distribution on a finite interval.

\subsection{Point estimation}

Perhaps the key task of causal inference is to identify confounders and adjust for them. In the assumptions just put forth, it is conceived that the processes $\vect B$, $\vect R$, and $\vect I^{(\vect R)}$ may have non-trivial correlations that confound $\theta_{syn}$ on a $\Delta$ timescale. In causal inference, adjustment often ensues by stratifying the probability of the outcome variable conditioned on the treatment variable into different levels of the confounder. Similarly, here, the key to estimation will be to stratify time into $\Delta$ length neighborhoods and perform statistical adjustments locally (i.e., in time) via conditioning on the spike counts in those intervals. Note that a quite different approach would be to use the CCG for estimation where the spiking activity has already been averaged across levels of confounding. Figure \ref{fig:neural_interventions}, and its associated examples, essentially demonstrated that the decomposition of the CCG into causal parts is an ill-posed problem since information about confounding is hidden after averaging. Notice in the formulation section no assumptions were made about $\vect R$ and no assumptions were made about $\vect I^{(\vect R)}$ except for $\mathcal{A}.$\ref{as:2} (timescale separation). In the following theorem, an unbiased estimator is provided for $\theta_{syn}$. This precise deconfounding of the synaptic effect comes at the cost of not modeling the time-dependent shape of the synaptic gain onto the postsynaptic neuron. Instead, $\mathcal{A}.$\ref{as:2} (timescale separation) simply requires interactions $\vect I^{(\vect R)}$ to be a subset of $S(\vect R)$. This does not require that no shape exists, it is simply not inside the model. Another potential source of confusion is that the idealization that there exist two processes $\vect B$ and $\vect I^{(\vect R)}$ does not mean there are two levels of synaptic efficacy. Since the events $\vect B$ are invariant under all the actions $do(\vect R = \vect r)$, they indeed have zero effective synaptic weight. However, every event in $\vect I^{(\vect R)}$ may be generated from a different state-dependent effective synaptic weight. That is, $\mathcal{A}.$\ref{as:2} (timescale separation) is a statement about timescale, and no assumptions about synaptic gain were made or its dependence on other factors in the model. 

\begin{restatable}{theorem}{mytheoremone}
\label{thm:my_theorem_1}
Under $\mathcal{A}.$\ref{as:4} (conditional uniformity), $\mathcal{A}.$\ref{as:2} (timescale separation), $\mathcal{A}.$\ref{as:3} (positivity) and $\mathcal{A}.$\ref{as:1} (consistency) an unbiased point estimate of $\theta_{syn}$ in the excitatory and inhibitory models Eq.~\eqref{eq:synon} is given by one expression,
\begin{align}
\hat{\theta}_{syn} = \sum_{k \in \mathbbm{Z}^*}\frac{N_{\gamma(k\Delta) \cap S(\vect R)}\big(\vect T\big) -  q(\vect R, k\Delta) N_{\gamma(k\Delta)}\big(\vect T\big)}{1- q(\vect R, k\Delta)}.\label{eq:pointestimator}
\end{align}
\end{restatable}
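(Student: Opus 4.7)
The plan is to verify unbiasedness by algebraically decomposing each summand of $\hat{\theta}_{syn}$ into two pieces: a \emph{background contrast} with conditional mean zero, and a \emph{local causal contribution} whose sum over $k$ equals $\theta_{syn}$ exactly.

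First, I would use consistency ($\mathcal{A}.$\ref{as:1}) to identify $\vect T$ with $\vect T^{(\vect R)}$, and note that $\vect T^{(\emptyset)} = \vect B$ under both model families, since timescale separation ($\mathcal{A}.$\ref{as:2}) forces $\vect I^{(\emptyset)} \subset S(\emptyset) = \emptyset$. Hence $\theta_{syn} = N_{S(\vect R)}(\vect T) - N_{S(\vect R)}(\vect B)$, which in the excitatory model equals $|\vect I'|$ with $\vect I' := \vect I^{(\vect R)} \setminus \cup_{r \in \vect R}\{S(r) : N_{S(r)}(\vect B) > 0\}$ (disjoint from $\vect B$ by the absorption rule), and in the inhibitory model equals $-N_D(\vect B)$ with $D := \cup_{r \in \vect R}\{S(r) : N_{S(r)}(\vect I^{(\vect R)}) > 0\} \subset S(\vect R)$.

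Second, fixing $k$ and abbreviating $q_k := q(\vect R, k\Delta)$, I would expand the numerator of the summand using \eqref{eq:synon}. Because $\vect I', D \subset S(\vect R)$ (from $\mathcal{A}.$\ref{as:2}), a short calculation produces the identity, valid in both models,
\begin{equation*}
N_{\gamma(k\Delta) \cap S(\vect R)}(\vect T) - q_k N_{\gamma(k\Delta)}(\vect T) = \bigl[N_{\gamma(k\Delta) \cap S(\vect R)}(\vect B) - q_k N_{\gamma(k\Delta)}(\vect B)\bigr] + (1-q_k)\, C_k,
\end{equation*}
where $C_k = N_{\gamma(k\Delta)}(\vect I')$ excitatorily and $C_k = -N_{\gamma(k\Delta) \cap D}(\vect B)$ inhibitorily. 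Dividing through by $1 - q_k$, strictly positive by $\mathcal{A}.$\ref{as:3}, splits the summand into a background contrast plus $C_k$.

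Third, I would zero out the background contrast in expectation by invoking $\mathcal{A}.$\ref{as:4}. Conditioning on $(\vect R, \vect \Gamma(\vect B), \vect J)$ makes $q_k$ and $N_{\gamma(k\Delta)}(\vect B)$ measurable; conditional uniformity then places the $N_{\gamma(k\Delta)}(\vect B)$ background spikes of $\gamma(k\Delta)$ independently and uniformly on $\gamma(k\Delta)$, so each falls in $S(\vect R) \cap \gamma(k\Delta)$ with probability $|S(\vect R) \cap \gamma(k\Delta)|/\Delta = q_k$. Thus $\E\bigl[N_{\gamma(k\Delta) \cap S(\vect R)}(\vect B) \mid \vect R, \vect \Gamma(\vect B), \vect J\bigr] = q_k N_{\gamma(k\Delta)}(\vect B)$, and the tower property drives the unconditional expectation of each background contrast to zero. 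Summing $C_k$ over $k$ reconstructs $\theta_{syn}$ in both cases, and linearity finishes $\E[\hat{\theta}_{syn}] = \E[\theta_{syn}]$.

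The main obstacle is the conditional expectation step: justifying that additional conditioning on $\vect R$ beyond $(\vect \Gamma(\vect B), \vect J)$ preserves conditional uniformity. Because $\vect J$ is itself $\vect R$-dependent through the ordering in \eqref{eq:rankord}, the argument amounts to reading $\mathcal{A}.$\ref{as:4} as a statement about the joint law of the full causal model, so that $\vect R$ carries no information about the fine positions of $\vect B$ beyond what $(\vect \Gamma(\vect B), \vect J)$ already encodes. A secondary, more mechanical subtlety is verifying the Step~2 identity uniformly across models; in the excitatory case this rests on $\vect B \cap \vect I' = \emptyset$, which follows from the absorption rule (each $r \in \vect R$ either has $\vect B \cap S(r) = \emptyset$ or has all of $\vect I^{(\vect R)} \cap S(r)$ removed).
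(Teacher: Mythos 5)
Your proof is correct and takes essentially the same route as the paper's: decompose $\vect T$ via Eq.~\eqref{eq:synon} under $\mathcal{A}.$\ref{as:2}, use $\mathcal{A}.$\ref{as:4} to equate the expected background synchrony in each $\Delta$-bin with $q(\vect R,k\Delta)$ times that bin's background count, and recover $\theta_{syn}$ by linearity over bins in both the excitatory and inhibitory cases. The only organizational difference is that you establish a pathwise identity first and then apply the tower property conditioning on $(\vect R,\vect\Gamma(\vect B),\vect J)$, whereas the paper rearranges conditional expectations given $(\vect\Gamma(\vect T),\vect R)$; the subtlety you flag (licensing the extra conditioning on $\vect R$) is present, and handled implicitly, in the paper's argument as well.
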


\textbf{Proof Idea:} The observed (confounded) synchrony in $\Delta$-length temporal intervals can be expressed as a function of the hidden variables for each outcome. An appropriate conditional expectation yields calculations that isolate the causal effect as a function of observational data. Linearity of expectation across temporal intervals then recovers $\theta_{syn}$ (see Appendix \ref{app:proof}).

Under violations of the identifiability assumption $\mathcal{A}.$\ref{as:3} (positivity), one can easily salvage estimates from segments of the observation from which causal information is available.

\begin{restatable}{coro}{mycoroone}
\label{thm:my_coro1}
Suppose synchrony saturation occurs such that $q(\vect R, k \Delta) = 1$ where $k \in X, X \subseteq \mathbbm{Z}^*$. Define the set, $\nu = \cup_{k \in X} \gamma(k\Delta)$ and the parameter $\theta_{syn}' = N_{S(\vect R) \cap \nu}(\vect T)  - N_{S(\vect R) \cap \nu}(\vect T^{(\emptyset)})$. Then,

\begin{align}
\hat{\theta}_{syn}' = \sum_{k \in \mathbbm{Z}^* \setminus X}\frac{N_{\gamma(k\Delta) \cap S(\vect R)}\big(\vect T\big) -  q(\vect R, k\Delta) N_{\gamma(k\Delta)}\big(\vect T\big)}{1- q(\vect R, k\Delta)} 
\end{align}

\noindent is unbiased under $\mathcal{A}.$\ref{as:4} (conditional uniformity), $\mathcal{A}.$\ref{as:2} (timescale separation), and $\mathcal{A}.$\ref{as:1} (consistency).
\end{restatable}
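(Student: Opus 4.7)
The plan is to localize the proof of Theorem~\ref{thm:my_theorem_1} to individual $\Delta$-intervals and sum only over those indices where the per-interval summand is well-defined. The central ingredient, already embedded in the proof of Theorem~\ref{thm:my_theorem_1}, is the per-interval unbiasedness identity: for any $k$ with $q(\vect R, k\Delta) < 1$, under $\mathcal{A}.$\ref{as:4}, $\mathcal{A}.$\ref{as:2}, and $\mathcal{A}.$\ref{as:1},
\begin{equation*}
\mathbbm{E}\!\left[\frac{N_{\gamma(k\Delta) \cap S(\vect R)}(\vect T) - q(\vect R, k\Delta)\, N_{\gamma(k\Delta)}(\vect T)}{1 - q(\vect R, k\Delta)}\right] = N_{S(\vect R) \cap \gamma(k\Delta)}(\vect T) - N_{S(\vect R) \cap \gamma(k\Delta)}(\vect T^{(\emptyset)}).
\end{equation*}
Because this local identity invokes positivity only at the single index $k$, it survives failures of $\mathcal{A}.$\ref{as:3} at any other coarse interval.

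First I would extract this identity directly from the proof of Theorem~\ref{thm:my_theorem_1} by conditioning on $\vect \Gamma(\vect B)$ and $\vect J$: conditional uniformity places each background spike in $\gamma(k\Delta)$ into $S(\vect R)$ independently with probability $q(\vect R, k\Delta)$; timescale separation confines $\vect I^{(\vect R)}$ to $S(\vect R)$; and consistency identifies $\vect T$ with $\vect T^{(\vect R)}$. Second, I would apply linearity of expectation, summing over $k \in \mathbbm{Z}^* \setminus X$. The exclusion of $X$ is forced: on any saturated interval, $\gamma(k\Delta) \subseteq S(\vect R)$ up to measure zero, so $N_{\gamma(k\Delta) \cap S(\vect R)}(\vect T) = N_{\gamma(k\Delta)}(\vect T)$ and the summand degenerates to $0/0$. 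Summing the identity over the unsaturated indices gives
\begin{equation*}
\mathbbm{E}[\hat{\theta}_{syn}'] = \sum_{k \in \mathbbm{Z}^* \setminus X} \bigl[N_{S(\vect R) \cap \gamma(k\Delta)}(\vect T) - N_{S(\vect R) \cap \gamma(k\Delta)}(\vect T^{(\emptyset)})\bigr] = N_{S(\vect R) \setminus \nu}(\vect T) - N_{S(\vect R) \setminus \nu}(\vect T^{(\emptyset)}).
\end{equation*}

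The hard part will be reconciling this expected value with $\theta_{syn}'$ as stated in the corollary. Read literally, $\theta_{syn}' = N_{S(\vect R) \cap \nu}(\vect T) - N_{S(\vect R) \cap \nu}(\vect T^{(\emptyset)})$ is the causal effect on the saturated slice of $S(\vect R)$; yet saturation makes background spikes in $\nu$ observationally indistinguishable from synchronous interactions there, so no estimator of the given form can recover that quantity, whereas $\hat{\theta}_{syn}'$ cleanly identifies the complementary contribution on $S(\vect R) \setminus \nu$. The identity therefore closes once $\theta_{syn}'$ is read as the identifiable residual --- equivalently, $\cap$ in its definition is exchanged for $\setminus$, or $\nu$ is taken to be the discarded saturated region. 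Under that convention the result reduces to a direct per-interval application of Theorem~\ref{thm:my_theorem_1} combined with linearity of expectation.
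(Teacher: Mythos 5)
Your proposal is correct and matches the paper's (unstated) proof exactly: the paper only remarks that the argument is ``exactly as before'' via linearity of expectation, and restricting the per-interval unbiasedness identity from Theorem~\ref{thm:my_theorem_1} to the unsaturated indices $k \in \mathbbm{Z}^* \setminus X$ and summing is precisely that argument. Your observation about the mismatch between the estimator's target and the literal definition of $\theta_{syn}'$ is well taken --- since the sum runs over $k \in \mathbbm{Z}^* \setminus X$, the quantity identified is the causal effect on $S(\vect R) \setminus \nu$, so the $\cap$ in the corollary's definition should be read as $\setminus$ (equivalently, $\nu$ should denote the union of the \emph{unsaturated} intervals); this is a typo in the statement, not a gap in your argument.
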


\textbf{Proof Idea:} The proof (not shown) is exactly as before while highlighting the use of linearity of expectation mentioned in the previous proof idea.

\subsection{Confidence intervals}
\label{sec:monosynaptic_confidence_intervals}
The intuition behind the confidence intervals proposed here can be understood by explaining a naive algorithm for computing them. The algorithm's task is to explain the monosynaptic synchrony in a spike train pair in terms of the model. Consider the classical technique of obtaining a confidence interval by inverting a hypothesis test~\citep{casella2002statistical}. Intuitively, a confidence interval for $\theta_{syn}$ is the set of hypotheses for which we fail to reject the null hypothesis $H_0: \theta_{syn} = j$. A naive algorithm for calculating this interval would be to start with the hypothesis $H_0: \theta_{syn} = 0$. In this case, we assume all the observed spikes in the target train arise from the process $\vect B$. Taking monosynaptic synchrony as our test statistic, the test might reject the null hypothesis that is placed under the supposition that all spikes are non-causal and thus conditionally uniform ($\mathcal{A}.$\ref{as:4}). In that case, for an excitatory interval, we proceed to conduct more hypothesis tests $j=(1,2,3,...)$. Computationally, one can imagine that before each new hypothesis test, we delete synchronous target spikes from the target train, subtract away their contribution to the test statistic, and calculate the null distribution of the test statistic under the supposition that the remaining data are conditionally uniform ($\mathcal{A}.$\ref{as:4}). We continue this process until we fail to reject the null (i.e. until the observed synchrony is explained). 

The key question is, which spikes should be iteratively deleted from the target spike train in this naive algorithm? As will be shown analytically, we want to iteratively select synchronous target spikes that minimize or maximize the change in the tail probability of the test statistic. Under the model's assumption, this corresponds to removing synchronous target spikes that occur when the reference neuron's spike counts are either lowest or highest. Intuitively, suppose confounding events, $\vect B$, tend to occur when the reference neuron's firing is highest; the confounding synchrony will be maximal. In that case, $\theta_{syn}$ needs to be minimal to explain the synchrony. At the other extreme, if confounding events, $\vect B$, tend to occur when the reference neuron's firing is lowest, the confounding synchrony will be minimized, and thus $\theta_{syn}$ needs to be maximal to explain the synchrony. This is how $\theta_{syn}$ is bounded. We will now proceed to a more formal explanation of these intervals and prove they are exact. While the naive algorithm just described was for the purpose of intuition, a more sophisticated algorithm will also be developed for implementation later.

\subsubsection{Formulation and derivation for exact excitatory confidence interval}
\label{sec:excitatory_ci}


Define $\vect L \coloneqq \{ l \in \vect K : T_l \in S(\vect R) \}$ and let $\vect U_h$ be any set that satisfies $ \vect U_h  \subset \vect L$ such that $|\vect U_h| = N_{S(\vect R)}(\vect T)-h$ and $\max_i \{q(\vect R, T_i) : i \in  \vect U_h\} \leq \min_i \{q(\vect R, T_i) : i \in  \vect L \setminus \vect U_h\}.$ Similarly, let $\vect V_h$ be any set that satisfies $ \vect V_h  \subset \vect L$ such that $|\vect V_h| = N_{S(\vect R)}(\vect T)-h$ and $\min_i \{q(\vect R, T_i) : i \in  \vect V_h\} \geq \max_i \{q(\vect R, T_i) : i \in  \vect L \setminus \vect V_h\}$. $\vect U_h$ and $\vect V_h$ identify $N_{S(\vect R)}(\vect T)-h$ indices (specified by $f_0$) of the synchronous target spikes with the smallest and largest $q(\vect R, \cdot)$ values, respectively.  Then define
\begin{align}
    \label{eq:rankord1}
     \vect J^{-}_h &:=
       \vect U_h \cup  (\vect K \setminus \vect L)  \\
      \vect J^{+}_h &:= 
        \vect V_h \cup  (\vect K \setminus \vect L) .
    \label{eq:rankord2}
\end{align}

The conditional \textit{pmf} for $N_{S(\vect R)} (\vect T) - \theta_{syn} = N_{S(\vect R)} (\vect B),$ conditioned on $\vect q( \vect R, \vect T )$ and $\vect J$, is

\begin{equation} \label{bernoulli_eq}
\mathbbm{P}\bigg(N_{S(\vect R)} (\vect B) = n \bigg| \vect q( \vect R, \vect T ), \vect J \bigg) = \sum_{Q\in \vect J^{[\text{n}]}} \prod_{i \in Q} q(\vect R, T_i) \prod_{k\in {\vect J \backslash Q}}(1-q(\vect R, T_k)).
\end{equation}

Let $c^{-}( \vect q( \vect r, \vect t ),\vect j)$ specify a lower (conditional) critical threshold for $N_{S(\vect R)} (\vect B)$ in the sense,
\begin{align}
    \label{eq:critical_thres}
   c^{-}(\vect q( \vect r, \vect t ),\vect j) &\coloneqq \max_k \bigg \{ k : \mathbbm{P}\bigg(N_{S(\vect R)} (\vect B)  \leq  k \bigg | \vect q( \vect R, \vect T) = \vect q( \vect r, \vect t), \vect J = \vect j \bigg) \leq \alpha/2 \bigg \} \\
   &= \max_k \bigg\{ k : \sum_{n=0}^{k} \sum_{Q\in \vect j^{[\text{n}]}}^{} \prod_{i \in Q}^{} q(\vect r, t_i) \prod_{m\in (\vect j \setminus Q)} (1-q(\vect r, t_m)) \leq \alpha/2 \bigg\}.
\end{align}

In the same sense, let $c^{+}( \vect q( \vect r, \vect t ),\vect j)$ specify an upper (conditional) critical threshold for $N_{S(\vect R)} (\vect B)$,
\begin{align}
   c^{+}(\vect q( \vect r, \vect t ),\vect j) &\coloneqq \min_k \bigg \{ k : \mathbbm{P}\bigg(N_{S(\vect R)} (\vect B)  \geq  k \bigg | \vect q( \vect R, \vect T) = \vect q( \vect r, \vect t), \vect J = \vect j \bigg) \leq \alpha/2 \bigg \}. 
\end{align}

\noindent We now develop confidence intervals for $\theta_{syn}$. 

\begin{restatable}{lemma}{lemmaone}
\label{argmin-lemma}
Let
\begin{equation}
\mathcal{D}(\vect R, \vect T) = \bigg\{ \vect j : |\vect j| = |\vect J|, (\vect K \setminus \vect L) \subseteq \vect j   \bigg\}.
\end{equation}
Abbreviate $\mathcal{D}(\vect R, \vect T)$ as $\mathcal{D}$. Under $\mathcal{A}.$\ref{as:4} (conditional uniformity)
\begin{equation}
    \vect J^{-}_{\theta_{syn}} \in \argmin_{\vect j \in \mathcal{D}} \bigg \{ c^{-}( \vect q( \vect r, \vect t ),\vect j) \bigg \}
\label{eq:l_minus}
\end{equation}
and 
\begin{equation}
    \vect J^{+}_{\theta_{syn}} \in \argmax_{\vect j \in \mathcal{D}} \bigg \{ c^{+}( \vect q( \vect r, \vect t ),\vect j) \bigg \}.
\label{eq:argmax_plus}
\end{equation}
\end{restatable}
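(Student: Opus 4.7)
The plan is to reduce the lemma to a stochastic-dominance statement about Poisson-binomial sums, and then to verify that $\vect J^-_{\theta_{syn}}$ and $\vect J^+_{\theta_{syn}}$ extremize that ordering on $\mathcal{D}$. Under $\mathcal{A}.$\ref{as:4}, equation~\eqref{bernoulli_eq} identifies the conditional law of $N_{S(\vect R)}(\vect B)$ given $\vect q(\vect R, \vect T)$ and $\vect J = \vect j$ with the law of $\sum_{j \in \vect j} B_j$, where $\{B_j\}_{j \in \vect j}$ are independent Bernoullis with parameters $\{q(\vect R, T_j)\}_{j \in \vect j}$. The thresholds $c^{\pm}(\vect q(\vect r, \vect t), \vect j)$ are determined entirely by this Poisson-binomial law, and an elementary inclusion-of-qualifying-sets argument shows that if $N^{(1)} \preceq N^{(2)}$ in the usual stochastic order, then both $c^-(\vect q, \vect j^{(1)}) \leq c^-(\vect q, \vect j^{(2)})$ and $c^+(\vect q, \vect j^{(1)}) \leq c^+(\vect q, \vect j^{(2)})$. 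Minimizing $c^-$ thus reduces to finding the $\vect j \in \mathcal{D}$ giving the stochastically smallest sum, and maximizing $c^+$ to finding the stochastically largest.

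The key technical step is then a swap lemma for Poisson binomials: if $\vect j, \vect j' \in \mathcal{D}$ differ by replacing an index $a \in \vect j \cap \vect L$ with an index $b \in \vect L \setminus \vect j$ satisfying $q(\vect R, T_b) \leq q(\vect R, T_a)$, then the Poisson-binomial sum for $\vect j'$ is stochastically no larger than that for $\vect j$. I would prove this by conditioning on the sum $W$ of the Bernoullis over the common indices and computing $P(W + B_p \leq k) = P(W \leq k) - p\, P(W = k)$, which is monotone decreasing in the Bernoulli parameter $p$. Iterating such swaps inside $\vect L$---the only degree of freedom, because membership in $\mathcal{D}$ forces $\vect K \setminus \vect L \subseteq \vect j$---transforms any $\vect j \in \mathcal{D}$ into $\vect J^-_{\theta_{syn}}$ via a sequence of weakly stochastically-decreasing steps, since by construction $\vect U_{\theta_{syn}}$ collects the $|\vect L| - \theta_{syn}$ indices in $\vect L$ with the smallest $q$ values. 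This establishes $\vect J^-_{\theta_{syn}}$ as stochastically minimal on $\mathcal{D}$; the symmetric argument (replacing small $q$ by large $q$) establishes $\vect J^+_{\theta_{syn}}$ as stochastically maximal. Combined with the first paragraph, this yields the two claimed argmin/argmax identities.

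The main obstacle I anticipate is careful bookkeeping around ties in the $q(\vect R, T_j)$ values, since these force both $\vect U_h, \vect V_h$ and hence $\vect J^\pm_{\theta_{syn}}$ to be defined only up to non-unique tie-breaking---which is precisely why the lemma asserts set membership in $\argmin$ and $\argmax$ rather than equality. The swap argument handles ties gracefully, because swapping indices with equal $q$ values leaves the Poisson-binomial law unchanged, so every valid choice of $\vect J^\pm_{\theta_{syn}}$ lies in the corresponding argmin or argmax set without needing to characterize that set uniquely. Beyond this the argument is standard manipulation of Poisson-binomial distributions on the conditional probability space---conditioning on $\vect q(\vect R, \vect T)$ and $\vect J$---on which $c^\pm$ were originally defined.
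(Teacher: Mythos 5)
Your proposal is correct and follows essentially the same route as the paper: the heart of both arguments is the monotonicity of the conditional Poisson-binomial \textit{cdf} in each success probability (your identity $P(W+B_p\le k)=P(W\le k)-p\,P(W=k)$ is exactly the paper's computation that the derivative of the \textit{cdf} with respect to $q(\vect R,T_x)$ equals $-P(W=c)\le 0$), combined with an exchange of indices inside $\vect L$ whose $q$-values are ordered. The only difference is organizational: the paper performs a single swap in a proof by contradiction against an arbitrary minimizer of $c^{-}$ (and symmetrically for $c^{+}$), whereas you iterate swaps to carry an arbitrary $\vect j\in\mathcal{D}$ to $\vect J^{\pm}_{\theta_{syn}}$ and phrase the conclusion through stochastic dominance, which also handles ties in the same way the paper's construction does.
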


\textbf{Proof Idea:} The \textit{cdf} corresponding to the critical regions can be explicitly differentiated with respect to the labeling. Proving that $\vect J^{-}_{\theta_{syn}}$ minimizes Eq.~\eqref{eq:l_minus} follows from contradiction. Eq.~\eqref{eq:argmax_plus} is shown in the same way (see Appendix \ref{app:proof}).

An elementary idea embedded in Lemma \ref{argmin-lemma} above is that if $X_1, X_2, ..., X_n$ are independent Bernoulli random variables with parameters $q_1, q_2, ..., q_n$, respectively, and those parameters satisfy $q_i \geq r_i,$ for all $i \in \{1,2,...,n\}$, then $\mathbbm{P}( \sum_{i=1}^N X_i \leq c)$ is a lower bound for the same tail probability derived from a sum of independent Bernoulli random variables with parameters $r_1, r_2, ..., r_n, $ respectively. All of these Bernoulli random variables are presumed to be mutually independent. In fact, there is a more general result ~\citep{amarasingham_diss} which implies that the confidence intervals developed are robust to certain violations of $\mathcal{A}.$\ref{as:4} (conditional uniformity).

\begin{restatable}{lemma}{lemmatwo}
Denote $X_1, X_2, ..., X_{i-1}, X_{i+1}, ..., X_n$ as $_i X.$ Then suppose $X_1, X_2, ..., X_n$ are Bernoulli random variables ({\rm \bf not} necessarily independent) satisfying
\begin{equation}
    \mathbbm{P}( X_i=1 | _i X, Z ) \geq p_i(Z), 
\end{equation}
for all $i \in \{1,2, ..., n\},$ and for some random variable $Z$. Then
\begin{equation}
    \mathbbm{P}\left( \sum_{i=1}^n X_i \leq k \bigg| Z \right) \leq \mathbbm{P} \left( \sum_{i=1}^n Y_i \leq k \right), \forall k
\end{equation}
if $Y_1, Y_2, ..., Y_n$ are conditionally independent Bernoulli random variables, conditioned on $Z$, with parameters $p_1(Z), p_2(Z)$, $\ldots, p_n(Z)$ respectively, and $X_{1:n}$ and $Y_{1:n}$ are conditionally independent, conditioned on $Z$.
\end{restatable}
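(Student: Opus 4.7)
I would prove this by induction on $n$, the number of Bernoulli variables. The enabling observation is that the stochastic-dominance hypothesis $\mathbbm{P}(X_i = 1 \mid {}_i X, Z) \geq p_i(Z)$ is preserved under enlarging the conditioning: for $i < n$ the assumption may be rewritten as $\mathbbm{P}(X_i = 1 \mid X_1, \ldots, X_{i-1}, X_{i+1}, \ldots, X_{n-1}, (X_n, Z)) \geq p_i(Z)$, so I can treat $(X_n, Z)$ as the new conditioning variable and invoke the inductive hypothesis on the remaining $n-1$ Bernoullis. Since the $Y_i$'s are conditionally independent of $X_{1:n}$ given $Z$, their joint distribution conditional on $(X_n, Z)$ coincides with their joint distribution conditional on $Z$, so the hypotheses for the inductive step are really satisfied.

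\textbf{Base case and inductive step.} The case $n = 1$ is immediate, since $\mathbbm{P}(X_1 = 1 \mid Z) \geq p_1(Z) = \mathbbm{P}(Y_1 = 1 \mid Z)$ implies the claim for every $k$. For the inductive step, set $q := \mathbbm{P}(X_n = 1 \mid Z)$, which satisfies $q \geq p_n(Z)$ by integrating the hypothesis (with $i = n$) over ${}_n X$. Define $F_j(k) := \mathbbm{P}\bigl( \sum_{i < n} X_i \leq k \bigm| X_n = j, Z \bigr)$ for $j \in \{0,1\}$, and $G(k) := \mathbbm{P}\bigl( \sum_{i < n} Y_i \leq k \bigm| Z \bigr)$. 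Applying the inductive hypothesis to $X_1, \ldots, X_{n-1}$ conditioned on $(X_n, Z)$ yields $F_0(k) \leq G(k)$ and $F_1(k) \leq G(k)$ for every $k$.

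\textbf{Combination.} A direct decomposition along the value of $X_n$ (resp.\ $Y_n$) gives
\begin{align*}
    \mathbbm{P}\!\left( \sum_{i=1}^n X_i \leq k \,\Big|\, Z \right) &= (1-q) F_0(k) + q F_1(k-1) \;\leq\; (1-q) G(k) + q G(k-1), \\
    \mathbbm{P}\!\left( \sum_{i=1}^n Y_i \leq k \,\Big|\, Z \right) &= (1 - p_n(Z)) G(k) + p_n(Z) G(k-1).
\end{align*}
Subtracting, the difference of the right-hand sides equals $(p_n(Z) - q)\bigl(G(k) - G(k-1)\bigr)$, which is non-positive because $q \geq p_n(Z)$ and $G$ is non-decreasing as a CDF. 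This closes the induction.

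\textbf{Main obstacle.} The only delicate point is the inductive step's conditioning manipulation --- verifying that both the lemma's hypothesis and the joint law of the $Y_i$ behave correctly when $(X_n, Z)$ replaces $Z$ as the conditioning variable. This is precisely where the non-independence of the $X_i$'s needs to be handled with care; everything afterward is algebra with Bernoulli CDFs and the monotonicity of $G$.
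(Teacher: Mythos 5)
Your proof is correct and follows essentially the same route as the paper's: induction on $n$, decomposing on the value of $X_n$, applying the inductive hypothesis with $(X_n,Z)$ absorbed into the conditioning, and combining via $\mathbbm{P}(X_n=1\mid Z)\geq p_n(Z)$ together with monotonicity of the CDF of $\sum_{i<n} Y_i$. Your explicit verification that the conditional law of the $Y_i$'s is unchanged when conditioning on $(X_n,Z)$ is a point the paper leaves implicit, but the argument is the same.
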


\textbf{Proof Idea:} The proof is by induction (see Appendix \ref{app:proof}).

First we first demonstrate that an exact hypothesis test for $H_0: \theta_{syn} = h$ follows from the previous results. 

\begin{restatable}{prop}{propone}
\label{prop-critregion}
Under $\mathcal{A}.$\ref{as:4} (conditional uniformity), $\mathcal{A}.$\ref{as:2} (timescale separation), and  $\mathcal{A}.$\ref{as:1} (consistency) $\{ N_{S(\vect R)}(\vect T) - h \leq c^{-}( \vect q( \vect R, \vect T), \vect J^{-}_{h}) \}$ and $\{ N_{S(\vect R)}(\vect T) - h \geq c^{+}( \vect q( \vect R, \vect T), \vect J^{+}_{h}) \}$ are $\alpha/2$-level critical region for all $\mathbbm{P}$ in $H_0: \theta_{syn}=h.$ That is, for all $\mathbbm{P}$ in $H_0: \theta_{syn}=h,$

\begin{equation} \label{left-tail-bd}
    \mathbbm{P}\left(  N_{S(\vect R)}(\vect T) - h \leq c^{-}( \vect q( \vect R, \vect T), \vect J^{-}_{h} ) \right) \leq \alpha/2
\end{equation}
and
 \begin{equation} \label{right-tail-bd}
    \mathbbm{P} \left(  N_{S(\vect R)}(\vect T) - h \geq c^{+}( \vect q( \vect R, \vect T), \vect J^{+}_{h} ) \right) \leq \alpha/2.
\end{equation}
\end{restatable}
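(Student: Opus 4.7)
The plan is to reduce $H_0: \theta_{syn} = h$ to a statement about the latent count $N_{S(\vect R)}(\vect B)$, exploit the Poisson-binomial structure guaranteed by conditional uniformity, and then replace the unobservable $\vect J$ by its worst-case surrogate $\vect J^{\pm}_h$ using Lemma~\ref{argmin-lemma}. First, by $\mathcal{A}.$\ref{as:1} and the excitatory form of Eq.~\eqref{eq:synon} applied at $\vect r = \emptyset$, the interaction set is empty and $\vect T^{(\emptyset)} = \vect B$; this gives $\theta_{syn} = N_{S(\vect R)}(\vect T) - N_{S(\vect R)}(\vect B)$. Under $H_0$ this rearranges to $N_{S(\vect R)}(\vect B) = N_{S(\vect R)}(\vect T) - h$, so the event in Eq.~\eqref{left-tail-bd} coincides with $\{N_{S(\vect R)}(\vect B) \leq c^{-}(\vect q(\vect R,\vect T), \vect J^{-}_h)\}$.

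Next I would condition on $\vect q(\vect R,\vect T)$ and $\vect J$. Under $\mathcal{A}.$\ref{as:4}, the Poisson-binomial mass in Eq.~\eqref{bernoulli_eq} supplies the conditional law of $N_{S(\vect R)}(\vect B)$, and the definition of $c^{-}$ directly yields
\[ \mathbbm{P}\!\left(N_{S(\vect R)}(\vect B) \leq c^{-}(\vect q(\vect R,\vect T), \vect J) \,\big|\, \vect q(\vect R,\vect T), \vect J\right) \leq \alpha/2. \]
The complication is that $\vect J^{-}_h$, not $\vect J$, appears inside the critical region. However, $\vect J^{-}_h$ is a functional only of the observables $\vect R, \vect T$ and of $h$, and under $H_0$ the true $\vect J$ lies in $\mathcal{D}(\vect R,\vect T)$: its cardinality is $|\vect T|-h$, and $\mathcal{A}.$\ref{as:2} forces $\vect K \setminus \vect L \subseteq \vect J$ because all interaction events live inside $S(\vect R)$. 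Lemma~\ref{argmin-lemma} then gives $c^{-}(\vect q(\vect R,\vect T), \vect J^{-}_h) \leq c^{-}(\vect q(\vect R,\vect T), \vect J)$ almost surely.

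Combining the two inequalities via monotonicity of the tail event gives
\[ \mathbbm{P}\!\left(N_{S(\vect R)}(\vect B) \leq c^{-}(\vect q(\vect R,\vect T), \vect J^{-}_h) \,\big|\, \vect q(\vect R,\vect T), \vect J\right) \leq \alpha/2, \]
and taking total expectation produces Eq.~\eqref{left-tail-bd}. The upper-tail bound Eq.~\eqref{right-tail-bd} follows symmetrically from the argmax half of Lemma~\ref{argmin-lemma} together with $c^{+}$ and $\vect J^{+}_h$. The main obstacle I anticipate is the transition from the conditional bound (where $\vect J$ is treated as known) to the unconditional one: one must verify that $\vect J^{-}_h$ genuinely has no dependence on the unobserved labeling $\vect J$, so that the monotonicity comparison holds uniformly in $\vect J$ rather than for a single realization. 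Once that worst-case-over-the-nuisance argument is articulated cleanly via Lemma~\ref{argmin-lemma}, the remainder is elementary bookkeeping on the Poisson-binomial tail.
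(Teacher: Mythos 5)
Your proposal is correct and follows essentially the same route as the paper's proof: reduce the event under $H_0$ to a statement about $N_{S(\vect R)}(\vect B)$, obtain the conditional $\alpha/2$ bound from the definition of $c^{\pm}$ given $(\vect q(\vect R,\vect T),\vect J)$, invoke Lemma~\ref{argmin-lemma} with the fact that the true $\vect J \in \mathcal{D}(\vect R,\vect T)$ to pass to the worst-case $\vect J^{\pm}_h$ pointwise, and conclude by taking total expectation. The measurability concern you flag at the end is exactly the point the paper addresses in its remark following the proof, and your pointwise-inequality resolution handles it the same way.
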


\textbf{Proof:} Appendix \ref{app:proof}.

\noindent Finally, an exact confidence interval is constructed by inverting the hypothesis tests established in Proposition \ref{prop-critregion}.

\begin{restatable}{theorem}{theoremtwo}[Confidence interval for $\theta_{syn}$.]
Define
\begin{equation}
\mathcal{C}(\vect R, \vect T, \alpha) := \bigg\{ h: c^{-}( \vect q( \vect R, \vect T ), \vect J^-_h) \leq N_{S(\vect R)}(\vect T) - h \leq c^{+}( \vect q( \vect R, \vect T ), \vect J^+_h) \bigg\}.
\end{equation}

Then under $\mathcal{A}.$\ref{as:4} (conditional uniformity), $\mathcal{A}.$\ref{as:2} (timescale separation), and $\mathcal{A}.$\ref{as:1} (consistency) $\mathcal{C}(\vect R, \vect T, \alpha)$ is a $(1-\alpha)$-level confidence interval for $\theta_{syn} \geq 0.$ That is,

\begin{equation} \label{eq:CI}
\mathbbm{P}(\theta_{syn} \in \mathcal{C}(\vect R, \vect T, \alpha)) \geq 1 - \alpha.
\end{equation}
\end{restatable}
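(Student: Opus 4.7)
The plan is to derive the coverage guarantee by inverting the two one-sided hypothesis tests established in Proposition~\ref{prop-critregion}, following the classical duality between acceptance regions and confidence sets \citep{casella2002statistical}. Fix any probability measure $\mathbbm{P}$ satisfying $\mathcal{A}.$\ref{as:4}, $\mathcal{A}.$\ref{as:2}, and $\mathcal{A}.$\ref{as:1} for which $\theta_{syn} = h^\star \geq 0$. Since $\mathbbm{P}$ lies in the null $H_0 : \theta_{syn} = h^\star$, Proposition~\ref{prop-critregion} applies at the specific hypothesized value $h = h^\star$, yielding $\alpha/2$ bounds on each of the two tail events anchored at $h^\star$.

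Next I would translate the ``miss'' event $\{\theta_{syn} \notin \mathcal{C}(\vect R, \vect T, \alpha)\}$, which coincides with $\{h^\star \notin \mathcal{C}(\vect R, \vect T, \alpha)\}$, into the union of the two rejection events appearing in Proposition~\ref{prop-critregion}. Unpacking the definition of $\mathcal{C}$, the miss event is contained in
\begin{equation*}
\{N_{S(\vect R)}(\vect T) - h^\star \leq c^{-}(\vect q(\vect R, \vect T), \vect J^-_{h^\star})\} \cup \{N_{S(\vect R)}(\vect T) - h^\star \geq c^{+}(\vect q(\vect R, \vect T), \vect J^+_{h^\star})\}.
\end{equation*}
A union bound combined with the two $\alpha/2$ tail bounds from Proposition~\ref{prop-critregion} then delivers $\mathbbm{P}(\theta_{syn} \notin \mathcal{C}(\vect R, \vect T, \alpha)) \leq \alpha$, which rearranges to Eq.~\eqref{eq:CI}.

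The main obstacle, if any, is conceptual rather than computational: the substantive difficulty has already been absorbed into Proposition~\ref{prop-critregion} and Lemma~\ref{argmin-lemma}, which bounded the two tail probabilities uniformly over the unknown labeling $\vect J$ by exhibiting the worst-case configurations $\vect J^-_h$ and $\vect J^+_h$. What remains here is essentially bookkeeping: one verifies that the labelings appearing in the thresholds $c^{\pm}$ used to define $\mathcal{C}$ at $h = h^\star$ are precisely the $\vect J^{\pm}_{h^\star}$ that drive Proposition~\ref{prop-critregion} under $H_0: \theta_{syn} = h^\star$, so that the $\alpha/2$ bounds transfer to the arbitrary $\mathbbm{P}$ without modification. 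Minor care is also needed to track the weak-versus-strict inequalities in the discrete setting (since $N_{S(\vect R)}(\vect T)$ is integer-valued), but this only strengthens the inclusion of the miss event into the union displayed above.
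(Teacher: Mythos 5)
Your proposal is correct and follows essentially the same route as the paper: the paper likewise observes that, by construction, $\{\theta_{syn} \in \mathcal{C}(\vect R, \vect T, \alpha)\}$ is exactly the event that $N_{S(\vect R)}(\vect T) - \theta_{syn}$ lies between $c^{-}(\vect q(\vect R,\vect T), \vect J^-_{\theta_{syn}})$ and $c^{+}(\vect q(\vect R,\vect T), \vect J^+_{\theta_{syn}})$, and then invokes Proposition~\ref{prop-critregion} at the true value, with the complement/union bound over the two $\alpha/2$ tails giving coverage at least $1-\alpha$. Your explicit handling of the strict-versus-weak inequalities in the miss event is a correct (and slightly more careful) rendering of the same bookkeeping.
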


\begin{proof}
By construction,
\begin{equation}
\left \{ \theta_{syn} \in \mathcal{C}(\vect R, \vect T, \alpha) \right\} = \left\{ c^-( \vect q( \vect R, \vect T ), \vect J^-_{\theta_{syn}}) \leq N_{S(\vect R)}(\vect T) - \theta_{syn} \leq c^+( \vect q( \vect R, \vect T ), \vect J^+_{\theta_{syn}} )\right\}.
\end{equation}
Therefore, 
\begin{equation}
\Prob ( \theta_{syn} \in \mathcal{C}(\vect R, \vect T, \alpha)) = 
\Prob \left( c^-( \vect q( \vect R, \vect T ), \vect J^-_{\theta_{syn}})  \leq N_{S(\vect R)}(\vect T) - \theta_{\syn} \leq c^+( \vect q( \vect R, \vect T ), \vect J^+_{\theta_{syn}})\right) \geq 1-\alpha.
\end{equation}
The final inequality follows, under the model, from Proposition \ref{prop-critregion}.
\end{proof}

\subsubsection{Sketch of inhibitory confidence interval and computational implementation}

In Section~\ref{sec:mono_causal_infer_model}, we modeled inhibition as a process that censors the elements of $\vect B$ where $\vect B$ is given the same properties as in the excitatory case. This approach is inspired in part by~\citet{spivak2022deconvolution} and, because of the superposition principle of Poisson processes~\citep{daley2008introduction}, similar models are implied by CCG-based methods with heuristic reliance on Poisson assumptions that identify inhibition via short-latency troughs in the CCG~\citep{Bartho2004,tamura2004presumed,kobayashi2019}. However, we should regard this model with much greater skepticism, as inhibitory neurons may play a greater role in regulating downstream spike timing~\citep{wehr2003balanced,pouille2001enforcement} and few experimental studies exist that include causal manipulations of inhibitory neurons~\citep{dudok2021recruitment} in a manner relevant to functional connectivity. Nonetheless, we will sketch an algorithm for computing a bound for inhibition, particularly since it has been hypothesized that axo-axonic cells may function to precisely censor principal cell output \textit{in vivo}~\citep{dudok2021recruitment}. Since the problem has a tenuous empirical foundation, we forgo any rigorous probabilistic interpretation of the algorithm's output. Hence, we prioritize assumptions here that permit the simplest, rather than the most precise, articulation of a concept. Perhaps experimentalists who wish to provide ground truth data for this problem may find this sketch useful while designing their experiments. Varying assumptions and comparing the resulting bound with measurements taken under experimental interventions might help to refine the assumptions needed for an inhibitory model (see Section~\ref{sec:ideal_experiment}).  We emphasize that, for now, this sketch is wholly supported by intuition and simulation, guided by the philosophy that mathematical precision should yield to scientific constraints, which, as just explained, are scarcely available for inhibition.

As in Section~\ref{sec:excitatory_ci}, we must again consider what beliefs about $\vect B$ reasonably explain how the data were generated for inhibition. In particular, here we will suppose some elements of $\vect B$ are censored in empty synchrony regions and we must calculate null distributions corresponding to those suppositions and identify limiting cases to bound $\theta_{syn} < 0$. The problem is not directly analogous to before because, in the excitatory case, the candidate hypotheses for $\vect B$ include all possible subsets of $\vect T$. However, for inhibition, the candidate hypotheses for inhibition include all possible subsets of $S(\vect G)$ where $\vect G \coloneqq \{r : r \in \vect R, N_{S(r)}(\vect T) = 0\}$; that is, all time points in empty synchrony regions in the observed train $\vect T$. It is here that we will make a significant simplification and suppose, by assumption, that censored elements of $\vect B$ are simply a subset of $\vect G$, and compute intervals in simulation thinking of this as an approximation. Inherent in this approximation is we ignore edge effects and assume that no more than one event of $\vect B$ is censored per synchrony region.

Define $\vect Z \coloneqq \vect T \cup \vect G$. $\vect Z$ now represents approximate candidate locations of background events for the excitatory and inhibitory models simultaneously (no generality is lost for the excitatory case in this notation). As earlier let an index set be $\Tilde{\vect K} = \{1,2,...,|\vect Z|\}$ and define a bijective mapping  $f_1: \Tilde{\vect K} \mapsto \vect Z$ as follows. Referring to $f_1(k)$ as $Z_k,$ let $f_1$ be any such mapping that satisfies,
\vspace{-\smallskipamount}
\begin{equation}
\begin{gathered}
   \underbrace{q(\vect R, Z_1) \geq q(\vect R, Z_2) \geq ... \geq q(\vect R, Z_{|\vect G|})}_{Z_k \in \vect G \text{ for } 1 \leq k \leq |\vect G|} \\
    \underbrace{q(\vect R, Z_{|\vect G| + 1}) \leq q(\vect R, Z_{|\vect G| + 2}) \leq ... \leq q(\vect R, Z_{|\vect G| + N_{S(\vect R)}(\vect T)})}_{Z_k \in \vect T \cap S(\vect R) \text{ for } |\vect G| + 1 \leq k \leq |\vect G| + N_{S(\vect R)}(\vect T)}.
\end{gathered}
\label{eq:rankord_inhib}
\end{equation}
\vspace{-\smallskipamount}
\noindent Notice there are no constraints on the mapping for the largest $|\vect T| - N_{S(\vect R)}(\vect T) - 1$ elements of $\Tilde{\vect K}$ which $f_1$ maps to the (non-synchronous) points in  $\vect T \setminus S(\vect R)$. (A  mapping such as $f_1$ always exists since $\vect G$ and $\vect T \cap S(\vect R)$ are mutually exclusive.) With these simplifications, we can once again imagine two limiting cases of how the data might have been generated given a hypothesis that $\theta_{syn} = h$,

\begin{align}
    \Tilde{\vect J}^{-}_h &:=  \begin{cases} 
       \big(\bigcup\limits_{i =  |\vect G| + 1}^{|\vect G| + N_{S(\vect R)}(\vect T) - h} i\big) \cup \big(\bigcup\limits_{i = |\vect G| + N_{S(\vect R)}(\vect T) +  1}^{|\vect G| + |\bm{T}|} i\big), \text{ if } h > 0\\ 
       \bigcup\limits_{i = |\vect G| + h + 1}^{|\vect G|+ |\bm{T}|} i, \text{ if } h \leq 0
       \end{cases} \\
    \Tilde{\vect J}^{+}_h &:= \begin{cases}
        \bigcup\limits_{i=|\vect G|   + h + 1}^{|\vect G| + |\bm{T}|} i, \text{ if } h \geq 0\\ 
        \big( \bigcup\limits_{i= 1}^{-h} i \big) \cup \big(\bigcup\limits_{i = |\vect G| +  1}^{|\vect G| + |\bm{T}|} i\big), \text{ if } h < 0.
        \end{cases}
\end{align}

For a hypothesis of the form $\theta_{syn} = h$ we will posit the existence of some censored background events, with associated probabilities that will then need to be convolved with a function representing a proposal about the distribution of $N_{S(\vect R)}(\vect B)$. As before, these hypotheses are made at limiting cases where the spike counts of $\vect R$ on $\Delta$ timescales are either minimal or maximal (this is builtin to Eqs.~\eqref{eq:rankord1}-\eqref{eq:rankord2}). Noting that we give no rigorous interpretation of these probability statements, let us use the notation $(\ast_{i \in \mathbbm{N}} \vec{v}_i)(k)= \vec{v}_1 * \vec{v}_2 * \vec{v}_3 ...$ to denote the convolution of many vectors where $k$ runs over the support of the resulting vector. In particular, consider the vectors $\vec{v}_i = (1-q(Z_i),q(Z_i))$ for $i \in \Tilde{\vect K}$ and define,
\begin{align}
   \Tilde{c}^{-}(\vect q( \vect r, \vect z ),\vect j) &\coloneqq \max_k \bigg \{ k : \sum_{j=0}^{k} (\ast_{i \in \vect j} \vec{v}_i)(j) \leq \alpha/2 \bigg \} \\
   \Tilde{c}^{+}(\vect q( \vect r, \vect z ),\vect j) &\coloneqq \min_k \bigg \{ k : \sum_{j=k+1}^{|\vect j|+1} (\ast_{i \in \vect j} \vec{v}_i)(j) \leq \alpha/2 \bigg \}. 
\end{align}

\noindent Then let an approximate bound be,

\begin{equation}
\mathcal{\Tilde{C}}(\vect R, \vect Z, \alpha) := \bigg\{ h: \Tilde{c}^{-}( \vect q( \vect R, \vect Z ), \Tilde{\vect J}^-_h) \leq N_{S(\vect R)}(\vect T) - h \leq \Tilde{c}^{+}( \vect q( \vect R, \vect Z ), \Tilde{\vect J}^+_h) \bigg\}.
\end{equation}

\noindent While presented under distinct notation to incorporate inhibition, this set is identical to the rigorous confidence interval derived previously for excitation when $\theta_{syn} \geq 0$. Using this notation, we present two algorithms for efficient computation of these confidence intervals. The algorithms use various tricks to minimize redundant computations and leverage state-of-the-art methods for fast and accurate tail probability computations for a sum of independent random variables. A detailed description of this approach, along with its rationale, is provided in Appendix~\ref{app:algorithm_rationale}. For the reader interested in direct application, we state Algorithm~\ref{alg:convolution} and ~\ref{alg:confidence} immediately without explanation. Algorithm~\ref{alg:confidence} is the main algorithm that computes, as an example, the lower bound for $\theta_{syn}$ in the excitatory case. For clarity, Algorithm~\ref{alg:convolution} is separated but repeatedly called from Algorithm~\ref{alg:confidence} and houses machinery for accurately computing tail areas for sums of independent random variables.

\begin{algorithm}
\caption{Hybrid convolution power with shift plus sparse exceptions}\label{alg:convolution}
\begin{algorithmic}[1]
\Statex \textbf{Input:} Scalar $L_{div}$ for $L_{div}$-fold convolution power, $\vec{p} = (p_0,p_1,...,p_N)$ probability vector to apply $L_{div}$-fold convolution power, $\vec{g} = (g_0,g_1,...g_M)$ residual probability vector, $y_0$ the observed test statistic
\Function{Sdpnt}{$s,\vec{p}, \vec{g}$}
\State $\kappa_{\vec{p}}^{'}(s),\kappa_{\vec{g}}^{'}(s) \leftarrow$ the derivatives of the CGFs of $\vec{p}$ and $\vec{g}$ evaluated at $s$
\State \Comment{Note: CGF stands for cumulant generating function}
\State \Return $L_{div}\kappa_{\vec{p}}^{'}(s) + \kappa_{\vec{g}}^{'}(s) - y_0$
\EndFunction
\State $\hat{s} \leftarrow$ compute $s$ such that \textit{Sdpnt}$(s,\vec{p}, \vec{g}) = 0$
\ForAll{$x \in \{0, \dots, N\}$} \Comment{apply exponential tilts}
    \State $\vec{p}_{\hat{s}}(x) \leftarrow \exp{[\hat{s}x - L_{div}\kappa_{\vec{p}}^{'}(\hat{s})+ \kappa_{\vec{g}}^{'}(\hat{s})]}\vec{p}(x)$
\EndFor
\ForAll{$x \in \{0, \dots, M\}$} 
    \State $\vec{g}_{\hat{s}}(x) \leftarrow \exp{[\hat{s}x - L_{div}\kappa_{\vec{p}}^{'}(\hat{s})+ \kappa_{\vec{g}}^{'}(\hat{s})]}\vec{g}(x)$
\EndFor
\State $N_{\vec{b}} \gets 2^{\lceil \log_{2}(K(N-1)+(M-1)+1) \rceil}$
\State $\vec{p}_{\hat{s}} \leftarrow $\textit{Concat}$(\vec{p}_{\hat{s}},\vec{0}_{L_{\vec{p}}})$ pad with zero vector of dimension $L_{\vec{p}} = N_{\vec{b}} - (N+1)$ 
\State $\vec{g}_{\hat{s}} \leftarrow $\textit{Concat}$(\vec{g}_{\hat{s}},\vec{0}_{L_{\vec{g}}})$ where $L_{\vec{g}} = N_{\vec{b}} - (M+1)$ 
\State $\vec{b}_{\hat{s}} \leftarrow D^{-1}((D\vec{g}_{\hat{s}}) \odot (D\vec{p}_{\hat{s}})^{\odot L_{div}})$ \Comment{computed via FFT, IFFT}
\State \Comment{$D$, $D^{-1}$, \& $\odot$ are the DFT, IDFT, and pointwise product respectively}
\ForAll{$x \in \{0, \dots, N_{\vec{b}}\}$} \Comment{reverse tilt}
    \State $\vec{b}(x) \leftarrow \exp{[\kappa_{\vec{p}}^{'}(\hat{s})+ \kappa_{\vec{g}}^{'}(\hat{s}) - \hat{s}x]} \vec{b}_{\hat{s}}(x)$
\EndFor
\State pval $\gets \sum_{i=y_0}^{N_{\vec{b}}} b_i$ 
\Statex \textbf{Output:}  pval
\end{algorithmic}
\end{algorithm}

\begin{algorithm}
\caption{Coarse-to-fine lower confidence bound computation}\label{alg:confidence}
\begin{algorithmic}[1]
\Statex \textbf{Input:} Spike trains $\vect R$ and $\vect T$, $\alpha$ (1-desired confidence level), $z_0$ (the measured value of $N_{S(\vect R)}(\vect T)$), $L_{div}$ (hyperparameter)
\State $(q(Z_1),q(Z_2),...) \leftarrow$ compute from $\vect R$ \& $\vect T$ for an arbitrary $\vect G \in \mathcal{G}$ 
 \State Implement \texttt{BinarySearch} for the case  $\Tilde{\vect J}_h^{+}$ on $h \in (1,2,...,z_0)$ with search query: 
 \Statex $h^*=\min_h \big\{ h : \big[f(h) + \mathbbm{1}\{\sum_{i \in \Tilde{\vect J}_h^{+}} q(Z_i)>y\}\left(1-2f(h)\right)\big]> \alpha/2 \big\}$
  \Statex where $f(h) = \exp \big[-\sum_{i \in \Tilde{\vect J}_h^{+}} q(Z_i) + y + y\ln(\frac{1}{y} \sum_{i \in \Tilde{\vect J}_h^{+}} q(Z_i))\big] \text{ \& } y = z_0 - h$
 \State $CF_L \leftarrow$ $h^*$ the result of \texttt{BinarySearch}
 \State $CF_U \leftarrow$ $h^{**}$ the result of analogous \texttt{BinarySearch} for the case $\Tilde{\vect J}_h^{-}$
 \State \(\vec{u} = (u_1,u_2,...) \gets \{ q(Z_i) : i \in \Tilde{\vect J}^{+}_{CF_{U}} \} \)\Comment{Unique probabilities for $i \in \Tilde{\vect J}^{+}_{CF_{U}}$}
\ForAll{$i \in \{1,2,...,\dim(\vec{u})\}$}
    \State $n_i \gets \sum_{j \in \Tilde{\vect J}^{+}_{CF_{U}}} \mathbbm{1}\{q(Z_j) = u_i\}$
    \State $m_i \gets \left\lfloor n_i/L_{div} \right\rfloor$
    \State $w_i \gets n_i - m_i L_{div}$
\EndFor

\State $\vec{p} \gets$ apply DC to binomials with $n = m_i$ \& $p = u_i$ for $i \in \{1,2,...,\dim(\vec{u})\}$  
\State $\vec{a} \gets$ apply DC to binomials with $n = w_i$ \& $p = u_i$ for $i \in \{1,2,...,\dim(\vec{u})\}$  
\State Implement \texttt{BinarySearch} on $h \in (CF_L,CF_L+1,...,CF_U)$ with search query: 
 \Statex $h^{***} = \min_h \big\{ h : \mathbbm{P}(N_{S(\vect R)}(\vect T)-h \geq z_0-h |(q(X_1),q(X_2),...), \Tilde{\vect J} = \Tilde{\vect J}_h^{+}) > \alpha/2 \big\}$
 \Statex where given each $h$ tested the tail probability is computed via sub-steps:
 \Statex \;\;\;\fontsize{9}{11}\selectfont(13.1) \( y_0 \gets z_0 - h \)
\Statex \;\;\; \fontsize{9}{11}\selectfont (13.2)  \( \vec{g} \gets \) apply DC to \( \vec{a} \) and the distributions w/ success prob. \( q(Z_i) \) for \( i \in \Tilde{\vect J}^{+}_{h} \setminus \Tilde{\vect J}^{+}_{CF_{U}} \)
\Statex \;\;\; \fontsize{9}{11}\selectfont (13.3) tail probability \( \gets \) Pass \( \vec{p} \), \( \vec{g} \), \( y_0 \), and \( L_{div} \) to Algorithm 1 \normalsize

\State Lower confidence bound $\leftarrow$ $h^{***}$ the result of \texttt{BinarySearch}
\Statex \textbf{Output:} Lower confidence bound
\end{algorithmic}
\end{algorithm}

\newpage
\section{Causal inferences in simple simulated systems}

In the following sections, we will study the monosynaptic causal inference model in simulation. The non-parametric nature of the model possesses some features that may seem foreign to those accustomed to modeling point processes with objects such as conditional intensity functions and generalized linear models (GLMs). For example, to define a background timescale, we partitioned time into arbitrarily phased intervals, each of duration $\Delta$. The model makes no use of conditional intensity functions and few assumptions were made about the interaction process $\vect I^{(\vect R)}$. At first, we will demonstrate the features just mentioned are appropriate in a conditional intensity model ensuring the process follows the monosynaptic causal inference model's assumptions only at the level of analogy. Later, we will test inferences in neural dynamical systems where it is much less clear if the assumptions are appropriate, and thus, various validations will be necessary.

\subsection{Causal inferences in a conditional intensity function model}
\label{sec:point_process_sim1}

The conditional intensity model here will exhibit rapid nonstationarities with random phases, but the nonstationary fluctuations will have timescales with $\Delta$ as a lower bound. This will suggest the idealized construction of $\gamma(t)$ with fixed $\Delta$ is scientifically appropriate. One can also find theoretical arguments supporting this construction in past work~\citep{Platkiewicz2017}. The conditional intensity functions are made smooth by generating them from normalized Ornstein-Uhlenbeck processes with non-stationary means that define the background timescales. The synaptic coupling is generated by convolving the presynaptic spike trains with a truncated exponential kernel. The notion of a synapse with finite strength at infinite decay time will be abstracted out now and will naturally reenter the study of dynamical systems models later. In addition to confounding background excitability fluctuations, here, the synaptic kernel has a private background excitability function, modeling a situation where the postsynaptic dendritic compartment may have its own excitability fluctuations, confounding the causal relationship. These are all ways to confound the relationship between spike trains while maintaining a type of separation of timescale.

Motivated by the observation that populations of neurons have downstates and upstates ~\citep{luczak2007sequential,hromadka2013up}, let us introduce nonstationary fluctuations between neurons on coarse timescales with varying degrees of skew.  Furthermore, consider that monosynaptically-interacting neurons might be at different phases of an oscillation in the local field potential (e.g., hippocampal gamma oscillations~\citep{harris2003organization}). This example is concrete and empirical, but it is easy to imagine complex neural computations generate other types of confounding in pairwise interactions. Coarse timescale nonstationarities with complex dependence structure can then generate confounding in the CCG even when a separation of timescales assumption is true.  

To generate some limiting cases in simulation, consider a sequence of multivariate skew random variables,
\begin{align}
\label{eq:multivaraiateskew}
\vec{\vect  M}_{n,k} = \begin{bmatrix}
m_{0,k}  \\
m_{1,k} \\
\vdots  \\
m_{n-1,k}
\end{bmatrix} \sim  2 \phi_n(\vec{m}_{n,k};\vec{\vect \Omega}) \Phi(\vec{\alpha}^\top \vec{m}_{n,k}), \vec{m}_{n,k} \in \mathbbm{R}^n, k \in \mathbbm{Z}^*
\end{align}

\noindent where $\phi_n(\vec{m}_{n,k};\vec{\vect \Omega})$ is a zero mean $n$-dimensional normal density with correlation matrix $\vec{\vect \Omega}$, $\Phi(\cdot)$ is the standard normal distribution function, and $\vec{\alpha}$ is an $n$-dimensional vector that controls skewness~\citep{azzalini1999statistical}. Let $L_i \sim U(\Delta,a\Delta)$ ms, $a>1$ and partition $\mathbbm{R}^+$ into contiguous disjoint intervals $\{[x_{k-1},x_k)\}_{k=0}^{\infty}$ such that $x_k = x_{k-1} + L_k$ and $x_0 = 0$. Associate with each $L_k$ a sample from the multivariate skew distribution, $\vec{\vect  M}_{i,k}$, with dimension $n=3$ and define a set of background excitability functions as,
\begin{equation}
    b_i(t) = \sum_{k \in \mathbbm{Z}^*} m_{i,k}\mathbbm{1}\{t \in [x_{k-1},x_k)\}, \text{ for } i = 0,1,2.
\end{equation}

\indent In simulation, we thus imagine that the excitability of the reference neuron, $b_0(t)$, target neuron, $b_1(t)$, and dendritic compartment of a synapse between them, $b_2(t)$, might have arbitrary confounding fluctuations on coarse timescales generated from a multivariate skew. That is, $b_0(t), b_1(t), b_2(t)$ might be skewed - i.e., rare up or down states~\citep{DeWeese2006} - and these states may have positive or negative correlations with each other.

Define $\tau_d$ as a conduction delay, $\tau_s$ as a phenomenological synaptic relaxation time, and $\upsilon(\tau) = \exp(-\tau/\tau_s) \mathbbm{1}\{0 \leq \tau < \tau_{mx}\}$ as a synaptic kernel zero everywhere but $\tau \in [0,\tau_{mx})$. The model is then defined by the conditional intensity functions,
\begin{align}
    \lambda_R(t) | U_0(t) &= 
    \rho_0 U_0(t) \label{eq:toy_rate_model2_begin} \\
    \lambda_{T}(t) | U_1(t), U_2(t), do(\vect R=\vect r) &= \rho_1 U_1(t_j) + \epsilon U_2(t) \int_{-\infty}^{\infty}\upsilon(t-\tau-\tau_d) \sum_{r \in \vect r} \delta_d(t-r) d\tau 
\label{eq:toy_rate_model2_end}
\end{align}

\noindent where $\rho_0, \rho_1$ are normalization factors, $\epsilon$ is a coupling constant,

\begin{equation}
    \tau_I \frac{dI_i(t)}{dt} = -I_i(t) + b_i(t) +\sigma_I \sqrt{2\tau_I} \xi_i(t), \xi_i(t) \sim \mathcal{N}(0,1) 
\end{equation}

\noindent is used to obtain  $U_i(t) = a_iI_i(t) + b_i$, $a_i$ and $d_i$ are chosen to min-max normalize $U_i(t)$, and $\sigma_I^2$ and $\tau_I$ are the variance and timescale of the smoothing agent, respectively. For clarity, we restate the causal interpretation of the model in Eqs. \eqref{eq:toy_rate_model2_begin}-\eqref{eq:toy_rate_model2_end} in relation to the general model stated in Section~\ref{sec:general_causal_model} and Example~\ref{rem:measure_spike_model}. The system is governed by a common probability space and under the subcausal model induced by $do(\vect R = \emptyset)$ Eq.~\eqref{eq:toy_rate_model2_end} reduces to its first term.

The model is simulated in Figure \ref{fig:point_process_D}. Figure~{\ref{fig:point_process_D}A} depicts the coupled conditional intensity model in a cartoon fashion. Each point in  Figure~{\ref{fig:point_process_D}B} represents a distinct simulation as follows. For each simulation, the Vine Beta method, with its parameter fixed to $0.1$, is used to generate a random covariance matrix with strong correlations~\citep{lewandowski2009generating} scaled to set $\vec{\vect \Omega}$ and we sample $\vec{\alpha} \sim U([0,100(2a -1))^n])$ where $a \sim Be(0.5)$. We then sample the sequence of independent and identically distributed (within but not across simulations) multivariate skew vectors $\vec{\vect M}_{n,k}$ and construct the background excitability functions. As mentioned previously, the motivation for the particular form and parameters of the simulation is to induce confounding background fluctuations between the simulated neurons, with skewed up and down states, and confounded state-dependent synaptic efficacy as well. Normalization factors $\rho_0$ and $\rho_1$ are all sampled so that the average firing rates of all spike trains in the absence of coupling are uniformly distributed between $50$ and $200$ spikes/second. Finally, spike are simulated from $\lambda_R(t) | U_0(t), \lambda_T(t) | U_1(t), U_2(t), \vect R$, and $\lambda_T(t) | U_1(t), U_2(t), do(\vect R = \emptyset)$. From these we obtain the spike trains $\vect R, \vect T$, and $\vect T^{(\emptyset)}$. We then compute the simulated ground truth $\theta_{syn}$, along with $\hat{\theta}_{syn}$. Here we assume $\tau_{mx}$ is known, and so the statistical free parameter $\delta$ was made one time bin larger. We also assume knowledge of $\Delta$ and hence set the statistical parameter equal to the value used to simulate the process (which is a lower bound). For each simulation a point estimate is then computed with Eq.~\eqref{eq:pointestimator} and 95\% confidence intervals are computed with Algorithms~\ref{alg:convolution} \& \ref{alg:confidence}. For 101 simulations the empirical coverage probability of the confidence intervals is 0.98.

\begin{figure}[!htbp]
    \centering
\includegraphics[width=1\textwidth]{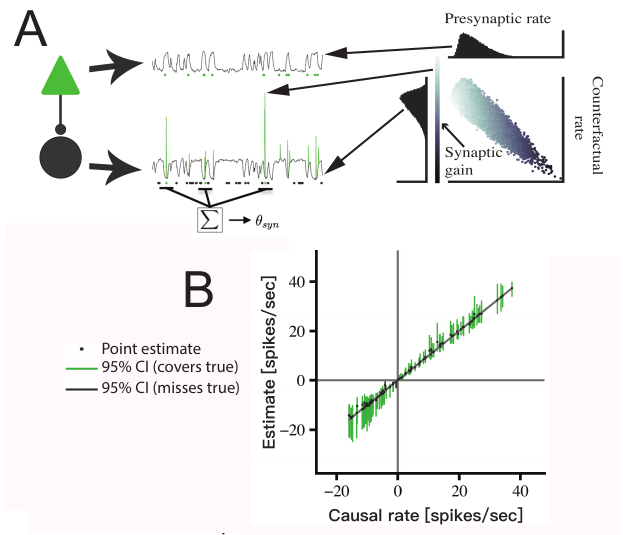}
    \caption[Point process demonstration]{{\bf Point process demonstration.} \textbf{A:} A cartoon depiction of the conditional intensity model described in Eqs. \eqref{eq:toy_rate_model2_begin} - \eqref{eq:toy_rate_model2_end}. $
    \lambda_R(t)$  and $
    \lambda_T(t)|do(\vect R = \emptyset)$ are denoted in black. The synaptic excitability function $b_2(t)$ is not depicted but modulates the height of the synaptic gain (the green gain on top of $
    \lambda_T(t)|do(\vect R = \emptyset)$). To generate confounding, the coarse timescale amplitudes of background rates and synaptic efficacy are generated from a multivariate skew distribution with strong correlations. The synaptic gain modulation is represented as the gray colormap. \textbf{B:} One hundred and one simulations from the model of Eqs. \eqref{eq:toy_rate_model2_begin} - \eqref{eq:toy_rate_model2_end} with empirical coverage probability $0.98$. Point estimates are always shown, whereas confidence intervals are not drawn if the null hypothesis $H_0: \theta_{syn} = 0$ fails to reject.}
    \label{fig:point_process_D}
\end{figure}

\subsection{Mapping the statistical model onto a dynamical system}
\label{sec:map_to_lif}
Thus far, it has been assumed that a postsynaptic spike train is derived from a latent mixture of background events, $\vect B$, and interactions, $\vect I^{(\vect R)}$. It should be clear from the assumptions and from the demonstration in a conditional intensity model that the division into two classes does not at all mean a constant synaptic weight; the model houses conditional intensity models where events analogous to $\vect I^{(\vect R)}$ might arise from different state-dependent probabilities. Rather, the idealization lies in positing that there exist some events $\vect B$ that may be assumed to have zero effective causal weight, and both classes have timescale assumptions that make $\theta_{syn}$ identifiable. This might be described as a type of causal coarsening. However, despite its clear merits in terms of analytic tractability, the clean division of the postsynaptic train into two classes gave rise to three free parameters $\delta$, $\tau$, and $\Delta$. By assumption, all interaction points are confined to be members of the set $S(\vect R,\delta,\tau)$ whereas $\Delta$ defines the background timescale. While constructed from conditional intensity functions, the simulated model of the previous section more or less ensured by construction that causal spikes would be confined to a set $S(\vect R,\delta,\tau)$  and non-causal spikes would possess no temporal structure for timescales smaller than $\Delta$ where $\delta < \Delta$.

It is now natural to challenge aspects of that idealization in some settings even more foreign to the one in which the model was derived. A sensible choice is dynamical neuron models, which well-capture features of cortical neurons~\citep{Gerstner2009} and where ground truth causal information is available by recycling the concept of frozen noise to be applied to stochastic input currents. We first must ask if a $\tau$ and $\delta$ can be chosen such that the simple division of a postsynaptic train into events $\vect B$ and $\vect I^{(\vect R)}$ might approximate the causal action of a presynaptic input through a dynamical system. The second question to consider is if, given knowledge of $\delta$ and $\tau$, a $\Delta$ can be chosen to recover causal counterfactual quantities in the midst of confounding. 

We do not provide a method to choose $\delta,\tau$, and $\Delta$ from first principles with observational data and are skeptical that the task is even possible, particularly in the case of $\Delta$. Rather, we regard them as free parameters in the physicist's sense. So the task here is bent toward understanding the qualitative mapping of these free parameters onto some mechanistic features. As such, in contrast to our previous work~\citep{platkiewicz2021monosynaptic}, as a matter of interpretation and robustness, we study commonly used dynamical mechanisms (e.g., LIF, EIF, AdEx) throughout future sections. This will demonstrate that the statistical framework's validity is also a matter of qualitative considerations. For example, when one asserts that the synaptic process  $\vect I^{(\vect R)}$ is fast, the word \textit{fast} clearly has meaning only relative to the background timescale and hence what is of true theoretical interest is the ratio of the effective synaptic and background input timescales. The reader should make judgments about the quantitative plausibility from 
real data, for example, by consideration of the fine-timescale effects studied by \citet{English2017}, for which standard integrate-and-fire type models might be insufficient~\citep{platkiewicz2021monosynaptic}.

\subsubsection{System of feedforward leaky integrate-and-fire (LIF) neurons}

Consider first a system of standard leaky integrate-and-fire (LIF) neurons with instantaneous conduction delay driven by background input currents and a synaptic conductance from a presynaptic neuron ($i=0$) into a postsynaptic neuron ($i=1$),
\begin{align}
    C_m \frac{dV_i}{dt} &= -g_l(V_i-E_l) -g_s g_0 (V_i-E_{syn}) \mathbbm{1}\{i=1\} + I_i, \text{ for } i = 0,1  \label{eq:lif_system_begin}
 \\
\tau_{syn} \frac{dg_s}{dt} &= -g_s + \sum_{r \in \mathbf{R}}(1-g_s)\delta_d(t - r)
\label{eq:lif_system_end}
\end{align}

\noindent where $V_i$ is the voltage (mV) of neuron $i$, $g_l$ and $g_s$ are the leak and synaptic conductances (mS/cm$^2$), $E_l$ and $E_{syn}$ are the leak and synaptic equilibrium potentials (mV), $C_m$ is the specific capacitance ($\mu$F/cm$^2$), $g_0$ and $\tau_{syn}$ are the peak synaptic conductance (mS/cm$^2$) and timescale (ms), and $I_i$ is the background input ($\mu$A/cm$^2$) to neuron $i$.\footnote{We continue to work in these units throughout.} If at time $t_0$, $ V(t_0) = V_{T}$, a spike is tabulated followed by the reset condition $\lim_{\{\epsilon \rightarrow 0 : \epsilon > 0\}} V(t_0 + \epsilon) = E_l$ and clamped there for a refractory period of $\tau_r$. We map the system Eqs.~\eqref{eq:lif_system_begin}-\eqref{eq:lif_system_end} onto the monosynaptic causal model by identifying
$\vect R = \{t : V_0(t) = V_{T} \}$, $\vect T = \{t : V_1(t,\vect R) = V_{T} \}$, and $\vect T^{(\emptyset)} = \{t : V_1(t,do(\vect R =\emptyset)) = V_{T}\}$ where the last line refers to the trajectory $V_1(t)$ under the deterministic modification of Eq.~\eqref{eq:lif_system_end} $do(\vect R = \emptyset) \implies g_s=0, \forall t \in \mathbbm{R}$ in the sense that the voltage trajectories may change but the $I_i$ remain constant (i.e., frozen) for all realizations of the stochastic input current. In other words, unlike in Remark~\ref{rem:measure_spike_model}, here, the probability space is defined directly over the background input current, and the dynamical system determines the functional relationship between the reference and target train. 
To study the interpretation of the monosynaptic causal inference model in terms of dynamical mechanisms, here we simplify the form of the background model while maintaining confounding common input,
\begin{align}
    I_i &= U_i + U_2, \text{ for } i \in \{0,1\} \\
    \tau_{I,i} \frac{d U_i}{dt} &= U_i + \mu_i + \sigma_i \sqrt{2\tau_{I,i}} \xi_i(t), \text{ for } i \in \{0,1,2\}
    \label{eq:ou_process}
\end{align}

\noindent where $\xi_i(t) \sim \mathcal{N}(0,1)$ and $U_0,U_1,U_2$ are all independent Ornstein-Uhlenbeck processes with means $\mu_i$, variances $\sigma_i^2$, and timescales $\tau_{I,i}$. It is not entirely clear from Eq.~\ref{eq:lif_system_end} where one ought to look for spiking events that may be well-attributed to a synaptic process like $\vect I^{(\vect R)}$. Certainly, we may first assume that they tend to occur after and not before the spikes $\vect R$. Furthermore, we would imagine that their tendency to occur decreases as a function of distance from the spikes $\vect R$ given the exponential decay model of synaptic conductances. These are elementary and routinely applied assumptions but do not yet appeal to causal inference concepts. Let us make the simplification of instantaneous biophysical conduction delay, $\tau_d = 0$, and measure counterfactual spike counts in the spectrum of sets, $\{S(\vect R,2 \delta,\delta) : \delta \in [0,\infty)\}$. For this purpose define the function,
\begin{equation}
    g(\delta) = N_{S(\vect R,2\delta,\delta)} (\vect T) - N_{S(\vect R,2\delta,\delta)} (\vect T^{(\emptyset)}).
\end{equation}

\noindent Assume for now $g(\delta)$ is monotone and that
\begin{equation}
    \Bar{\beta} = \lim_{\delta\rightarrow \infty} |g(\delta)|
\end{equation}

\noindent exists. Then for any $0 \leq \beta_0 < 1$ let,

\begin{equation}
    \label{eq:beta_hyper}
         \delta_0 = \begin{cases}
             \min_{\delta} \big\{\dfrac{\delta}{2} : \dfrac{|g(\delta)|}{\Bar{\beta}} = \beta_0\big\} &\text{if } \exists \delta \in \mathbbm{R}^+ \text{ s.t. } \dfrac{|g(\delta)|}{\bar{\beta}} = \beta_0\phantom{aaa}\\[2ex]
             0 & \text{otherwise}.
        \end{cases}
\end{equation}

\begin{figure}
    \centering
\includegraphics[width=.7\textwidth]{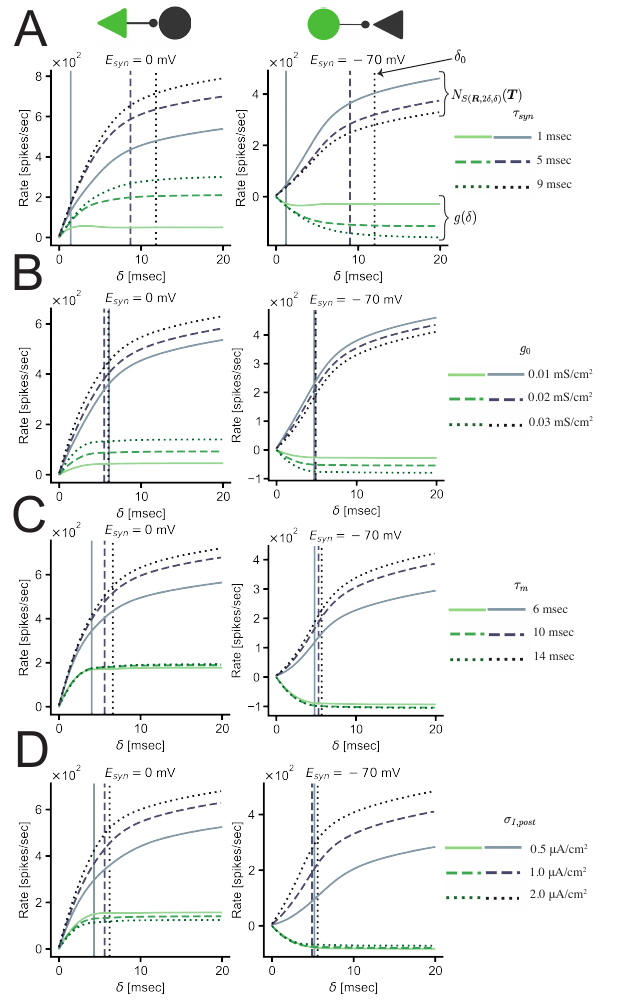}
    \caption[Mapping the statistical monosynaptic causal inference model onto a leaky integrate-and-fire (LIF) system]{{\bf Mapping the statistical monosynaptic causal inference model onto a leaky integrate-and-fire (LIF) system.} Functions $g(\delta)$ and $N_{S(\vect R,2\delta,\delta)}(\vect T)$ normalized by a factor $|\vect R| \cdot dt$ for $\delta \in [0, 20]$ msec are plotted in green lines and black lines respectively. Vertical lines are $\delta_0$ for $\beta_0 = 0.95$. Different line styles correspond to three variations of one postsynaptic parameter per plot. Each biophysical parameter setting corresponds to a long simulation of $27.77$ simulated hours. The left column is for an excitatory synapse ($E_{syn} = 0$ mV), and the right is for an inhibitory synapse ($E_{syn} = -70$ mV). \textbf{A:} The synaptic decay time constant $\tau_{syn} \in \{1,5,9\}$ msec. \textbf{B:} The peak synaptic conductance $g_0 \in \{0.01,0.02,0.03\}$ mS/cm$^2$. \textbf{C:} The membrane time constant $\tau_m \in \{6,10,14\}$ msec. \textbf{D:} The postsynaptic noise amplitude $\sigma_{I,post} \in \{0.5,1,2\}$ $\mu$A/cm$^2$ (referred to as $\sigma_{I,post}$ in the figure but defined as $\sigma_{1}$ in the text).}
    \label{fig:dynamical_mapping}
\end{figure}

Intuitively, $\delta_0$ describes how large the time interval after the reference spikes must be to capture some proportion, $\beta_0$, of the causal difference in spike counts between the counterfactuals relative to the causal difference at some long-term value $\Bar{\beta}$.

\subsubsection{Simulations validating the mapping}

Figure~\ref{fig:dynamical_mapping} shows simulations of Eq.~\eqref{eq:lif_system_begin}-\eqref{eq:lif_system_end} and plots $N_{S(\vect R,2\delta,\delta)} (\vect T)$ and $g(\delta)$ for $\delta \in [0,20]$ ms. Both these functions are normalized by a factor $|\vect R| \cdot dt$. Each panel displays six lines:  $N_{S(\vect R,2\delta,\delta)} (\vect T)$ and $g(\delta)$ for three different values of a dynamical parameter. Vertical lines mark $\delta_0$, which is obtained by setting $\beta_0 = 0.95$ and where $\Bar{\beta}$ is approximated by taking $|g(\delta)|$ at $\delta = 1000$ ms. Plots are shown for both excitatory ($E_{syn} = 0$ mV) and inhibitory ($E_{syn} = -70$ mV) synapses. In the LIF neuron, $g(\delta)$ tends to rise roughly monotonically and saturate quite quickly, although different dynamical parameters have a significant impact, particularly the synaptic timescale $\tau_{syn}$.

Figure~{\ref{fig:dynamical_mapping}A} demonstrates simulations of the LIF model system with $\tau_{syn} \in \{1,5,9\}$ ms. For both the excitatory and inhibitory synapses, at $\tau_{syn} = 19$, $\delta_0 \approx 2$ ms whereas at $\tau_{syn} = 9$, $\delta_0 \approx 12$ ms. That long synaptic decay times increase the timescale of causal action on the postsynaptic neuron coincides with intuition. Figure~{\ref{fig:dynamical_mapping}B} shows the effect of peak synaptic conductance, $g_0$, on $\delta_0$; the effect on $\delta_0$ is less dramatic than $\tau_{syn}$. For $g_{0} \in \{0.01,0.02,0.03\}$ mS/cm$^2$, $\delta_0$ clusters around $\delta_0 \approx 7$ ms for an excitatory synapse and $\delta_0 \approx 6$ ms for an inhibitory synapse. Figure~{\ref{fig:dynamical_mapping}C} shows the analogous plots for membrane time constant, $\tau_m$. The effect on $\delta_0$ is slightly more pronounced than peak synaptic conductance but still less than synaptic decay time. In the excitatory case, for $\tau_{m} = 6$ ms, $\delta_0 \approx 4$ ms whereas for $\tau_{m} = 14$ we observe $\delta_0 \approx 6.6$ ms. In the inhibitory case $\delta_0$ is clustered around $\delta_0 \approx 6$ ms with a slight positive trend with $\tau_m$. Figure~{\ref{fig:dynamical_mapping}D} again shows the analogous plots for the postsynaptic Gaussian noise amplitude which is known to influence postsynaptic response dynamics~\citep{Herrmann2002}. A negative trend is seen between the postsynaptic noise $\sigma_{1} \in \{0.5,1,2\}$ $\mu$A/cm$^2$ (refered to as $\sigma_{I,post}$ in the figure) and $\delta_0$. No trend is detected in the inhibitory case.

Figure~{\ref{fig:lif_estimates}} plots monosynaptic point estimates and confidence intervals for the LIF system varying the same parameters as the previous plot. To make sensible comparisons across plots, here parameters are normalized as $g(\delta_0)/(|\vect R| \cdot dt)$, termed \textit{causal rate} in the figure. Estimates are $\hat{\theta}_{syn}'/(|\vect R| \cdot dt)$ For each plot, different levels of causal rate are produced by varying $g_{0} \in [0,0.1]$ mS/cm$^2$ with eleven equally spaced values for both excitatory ($E_s = 0$ mV) and inhibitory ($E_s = -70$) synapses. For estimation, we must also choose a value for the statistical parameter $\Delta$. Recall that $\mathcal{A}.$\ref{as:2} (timescale separation) requires that some $\tau$, $\delta$, and $\Delta$ exist such that  $\vect I^{(\vect r)} \subset S(\vect r)$, for all $\vect r$. In simulation $\delta_0$ is chosen such that $\beta_0 = 0.95$ to approximate this assumption. On the other hand $\mathcal{A}.$\ref{as:3} (positivity) requires  $0 \leq q(\vect R, k\Delta) < 1$, for all $k \in \mathbbm{Z}^*$. Once $\delta$ is chosen to be $\delta_0$, an observer can choose some $\Delta_0$ and at least verify $0 \leq q(k\Delta_0) < 1$, for all $k \in \mathbbm{Z}^*$ because this condition only requires access to the observed data to verify. This is not as useful as it may seem, however, because, of course, $\Delta$ is unknown, and an appropriate selection of it determines the validity of $\mathcal{A}.$\ref{as:4} (conditional uniformity) which cannot be assessed from observational data. Furthermore, $\mathcal{A}.$\ref{as:2} and $\mathcal{A}.$\ref{as:3} are orthogonal assumptions: one can be true while the other is false. In these simulations, the common background input timescale was chosen to be rather large ($\tau_{I,2} = 50$ ms) so that a simple heuristic might automate the choice of $\Delta_0$ given $\delta_0$ which strongly biases this inquiry toward assessing c. In this figure, without too much thought, we choose $\Delta_0 = \delta_0 + 4$ ms, which typically well-approximates the $\mathcal{A}.$\ref{as:3} (positivity) in the regimes explored here, although some misestimation arises from violations. To study causal identifiability, in all plots to follow, we use $\hat{\theta}_{syn}'$ from Corollary~\ref{thm:my_coro1} as an estimate and always define $\theta_{syn}$ as the ground truth value.

It is worth reminding the reader that $\Delta$ is unknown, and perhaps unknowable, in these simulations. The timescales of the membrane, background input, and synapse likely interact and might even produce a statistical background timescale that is smaller than the timescales of the physiological variables involved. Similarly, the assumption that the inequality $\delta < \Delta$ can be true while satisfying the other assumptions cannot be known in the simulation and, in fact, is one of the primary motivations for testing estimation in dynamical systems models while varying physiological parameters. That is, good estimation is regarded as evidence for the fulfillment of the assumptions.

The qualitative results of Figure~{\ref{fig:lif_estimates}} can, for the most part, be predicted by the results of the previous figure. That is, the estimation procedure provides highly accurate estimates to the degree that the model's assumptions are approximated in the sense of the mapping proposed earlier.  Figure~{\ref{fig:lif_estimates}A} shows point and interval estimates for $\tau_{syn} \in \{1,5,9\}$ ms. As Figure~{\ref{fig:dynamical_mapping}A} predicts, as $\tau_{syn}$ increases $\delta_0$ also increases which puts stress both on $\mathcal{A}.$\ref{as:4} (conditional uniformity) and $\mathcal{A}.$\ref{as:3} (positivity). One ought to heed the point just made about $\Delta$ being unknown in dynamical simulations.

Even at an unrealistically small synaptic timescale of $\tau_{syn} = 1$ ms, the magnitude of the inhibitory causal rate is slightly underestimated with empirical coverage probability $0.82$ for the confidence intervals. As $\tau_{syn}$ increases, both the magnitude of excitatory and inhibitory causal rates are underestimated, however, the confidence intervals behave more conservatively in this regime and have empirical coverage probability $1$ for $\tau_{syn} \in \{5,9\}$ across all simulations. 

In {\ref{fig:lif_estimates}B} we observe that the qualitative behavior of estimation with respect to synaptic timescale $\tau_{syn}$ is recapitulated for membrane timescale $\tau_{m}$ although to a less pronounced degree. That is, Figure~{\ref{fig:dynamical_mapping}C} indicates $\delta_0$ will increase as membrane timescale $\tau_{m}$ increases and accordingly in {\ref{fig:lif_estimates}B}  the magnitude of causal rate is slightly underestimated for excitatory and inhibitory interactions as $\tau_m$ increases.

A notable feature of Figure~{\ref{fig:dynamical_mapping}} was that all else being equal, increasing $\tau_{syn}$ or $\tau_{m}$ increased the causal rate for all $\delta$ and for the most part trended positively with $\delta_0$. More intuitively, as the temporal scale of causal effect (i.e., $\delta_0$) increased more spikes were causal, naturally. But this appears not to be the case for  $\sigma_{1}$ (termed $\sigma_{I,post}$ in the figure). For excitatory interactions, in Figure~{\ref{fig:dynamical_mapping}D} increasing $\sigma_{1}$ increased $\delta_0$, however, the magnitude of causal rate decreased as $\delta_0$ increased unlike in the case of synaptic timescale $\tau_{syn}$ and membrane timescale $\tau_{m}$. Yet, like the case with $\tau_{syn}$ and $\tau_{m}$,  Figure~{\ref{fig:dynamical_mapping}D} and Figure~{\ref{fig:lif_estimates}C} in combination show that estimation is accurate to the degree $\delta_0$ is made small by a small $\sigma_1$. This all suggests, not at all in conflict with intuition, that the model's validity has some mechanistic independence so long as the causal behavior at the level of spiking abides by the formal assumptions proposed earlier.

The idea that causal effects of inhibitory synapses can be estimated from spike trains has far less existing support from \textit{in vivo} experiments. Furthermore, here, the estimation of inhibitory synapses is only highly accurate for unrealistic parameters for inhibitory synapses~\citep{otis1992modulation}. For these reasons, the latter figures primarily focus on the study of excitatory interactions, except in a few idealized settings where the math quickly implies an inhibitory solution. In that case, this cautionary statement still applies. However, it should be noted that physiological parameters interact, and they are typically measured in settings where the interactions may not be present, such as \textit{in vitro} studies. Thus one cannot rule out the possibility that, at the level of spiking, the temporal scale of causal action for inhibitory synapses is still small \textit{in vivo}, meaning the deficiency resides in the biophysical models. However, until more basic evidence of such mechanisms exists, we must remain skeptical that the inhibitory model proposed in this study has any relevance to neuroscience.

\begin{figure}
    \centering
\includegraphics[width=.7\textwidth]{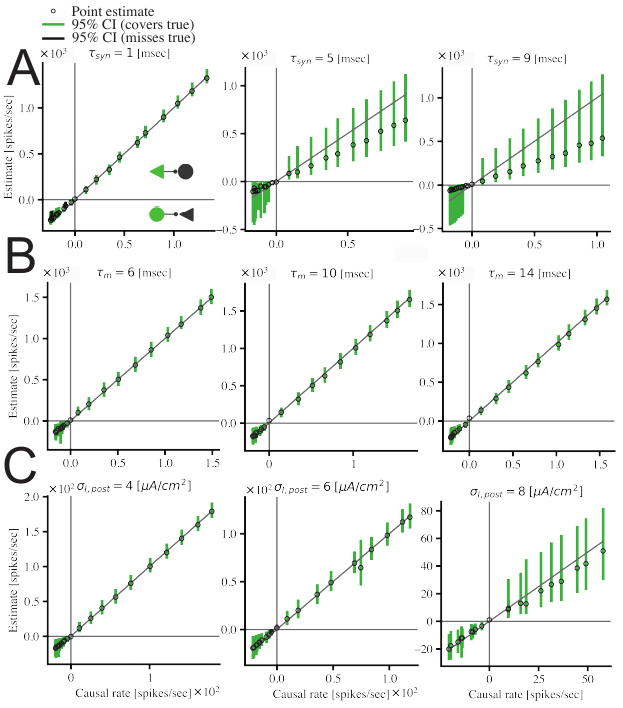}
    \caption[Causal inferences in the LIF model as a function of various dynamical parameters]{{\bf Causal inferences in the LIF model as a function of various dynamical parameters.} Inference of postsynaptic parameters studied in the previous figure. $\delta_0$ is assumed to be known. The true causal rate is defined as $g(\delta_0)/ (|\vect R| \cdot dt)$ and the estimate  $\hat{\theta}_{syn}'/(|\vect R| \cdot dt)$ for $\delta_0$. For each plot, different levels of causal rate are generated by varying $g_{0} \in \{0,0.01,0.02,...,0.1\}$  mS/cm$^2$ both for excitation ($E_{syn} = 0$ mV) and inhibition ($E_{syn} = -70$ mV). Across all panels, the empirical coverage probability equals $0.9894$. Parameters sweep left to right. \textbf{A: } $\tau_{syn} \in \{1,5,9\}$ ms. \textbf{B: } $\tau_m \in \{6,10,14\}$ ms. \textbf{C: } $\sigma_{I,post} \in \{0.5,1,2\}$ $\mu$A/cm$^2$ (referred to as $\sigma_{1}$ in the text).}
    \label{fig:lif_estimates}
\end{figure}

\begin{table}[htp]
\centering
\caption{LIF base circuit parameters}
\label{tab:lif_parameters}
\begin{tabular}{lcll}
\toprule
Parameter Name & Symbol & Unit & Value/Distribution \\
\midrule
\textbf{Cellular Properties} \\
\midrule
Membrane Capacitance & $C_m$ & $\mu$F/cm$^2$ & 1 \\
Leak Reversal Potential & $E_{\text{l}}$ & mV & -65 \\
Leak Conductance & $g_{\text{l}}$ & mS/cm$^2$ & 0.1 \\
Spike Threshold & $V_{T}$ & mV & -50 \\
Voltage Reset & $V_{R}$ & mV & $E_l$ \\
Refractory Period & $\tau_r$ & ms & 2 \\
\midrule
\textbf{Synapse} \\
\midrule
Peak Synaptic Conductance & $g_{0}$ & mS/cm$^2$ & 0.04 \\
Synaptic Reversal Potential & $E_{syn}$ & mV & 0 \\
Synaptic Time Constant & $\tau_{syn}$ & ms & 3 \\
Conduction Delay & $\tau_d$ & ms & 0 \\
\midrule
\textbf{Background Input} \\
\midrule
Input Timescale & $\tau_{I_i}$ *  & ms & 50 \\
Input Mean & $\mu_{i}$ * & $\mu$A/cm$^2$ & 0 \\
Input SD  & $\sigma_{i}$ * & $\mu$A/cm$^2$ & 1 \\
\bottomrule
* : for $i \in \{0,1,2\}$
\end{tabular}
\end{table}

\subsection{Causality and spike history in feedforward adaptive exponential (AdEx) integrate-and-fire neurons}
\label{sec:adex_main}

After developing a causal inference model in Section~\ref{sec:mono_causal_infer_model}, we proceeded to test the model in a series of numerical experiments that challenged the model's assumptions. The point process experiments of Section~\ref{sec:point_process_sim1} challenged aspects of how we constructed the background process to account for confounding; namely, the definition of $\gamma(t)$ and $\mathcal{A}.$\ref{as:4} (conditional uniformity). The LIF system experiments of Section~\ref{sec:map_to_lif}, with intrinsic dynamics and conductance-based synapses, challenged aspects of how we constructed the interaction process to account for coupling effects; namely, the definition of $\vect I^{(\vect R)}$ and $\mathcal{A}.$\ref{as:2} (timescale separation). In this final subsection, we take this challenge further and try to identify a case where the causal effect of synaptic input is perhaps more complex than a transient increase in postsynaptic spiking probability followed by exponential decay (Section~\ref{sec:map_to_lif}). 

Consider a system of AdEx model neurons~\citep{brette2005adaptive}. As before, let a presynaptic neuron ($i=0$) drive a postsynaptic neuron ($i = 1$),
\begin{align}
\label{adex_system_begin}
    C_m \frac{dV_i}{dt} &= -g_l (V_i-E_l) + g_l k_{a,i} \exp \left(\frac{V_i-V_{T,i}}{k_{a,i}} \right) - g_s g_0(V_i-E_{syn}) \mathbbm{1}\{i=1\} - I_{w,i} + I_{i} \\
    \tau_w \frac{dI_{w,i}}{dt} &= -I_{w,i} + a_i(V_i-E_l) + \mathbbm{1}\{i=1\} \sum_{y \in \mathbf{T}^{(\vect R)}} b_i \delta_d(t-y) + \mathbbm{1}\{i=0\}\sum_{r \in \mathbf{R}} b_i \delta_d(t-r) \\
    \tau_{syn} \frac{dg_s}{dt} &= -g_s + \sum_{r \in \mathbf{R}} \delta_d(t - r)
\label{adex_system_end}
\end{align}

\noindent where the LIF model has been embellished with a nonlinearity with activation slope $k_a$ and with an adaptation current $I_w$ with subthreshold adaptation coupling parameter $a$ and spike-triggered adaptation parameter $b$. A spike is triggered when $V(t)$ obtains the value $V_T + 5 k_a$ at which time the voltage $V(t)$ is as before reset to $E_l$ for a refractory period of $\tau_r$. The counterfactual interpretation of the system Eqs.~\eqref{adex_system_begin}-\eqref{adex_system_end} is exactly analogous to the LIF system of Eqs.~\eqref{eq:lif_system_begin}-\eqref{eq:lif_system_end} as already discussed noting that under the intervention $do(\vect R=\emptyset)$ the spike-triggered adaptation trigger times become $\vect T^{(\emptyset)}$. As alluded to in the previous section, here we focus on excitatory synapses only. Parameters were chosen separately for each neuron so that the presynaptic and postsynaptic cells emulate a neocortical pyramidal neuron and fast-spiking interneuron, respectively~\citep{zerlaut2018modeling}. Technically, under these parameters, the AdEx model for the postsynaptic neuron reduces to the exponential integrate-and-fire neuron (EIF), as it is a special case of the former.

As earlier with the LIF model, the synaptic conduction delay is set at $\tau_d = 0$ and Figure~{\ref{fig:adex_plot}A} plots normalized versions of $N_{S(\vect R,2\delta,\delta)} (\vect T)$ and $g(\delta)$ for $\delta \in [0,20]$ ms. The AdEx system produces an apparent non-monotonic behavior in this regime in  $g(\delta)$. This observation should be examined in the context of functional connectivity methods that use the CCG as the primary object of inference. For example, \citet{spivak2022deconvolution} argues that presynaptic autocorrelation can produce secondary oscillations in the CCG and should be corrected for by a deconvolution procedure. We plot the CCG from the simulation of Figure~{\ref{fig:adex_plot}A} in Figure~{\ref{fig:adex_plot}B}. The secondary oscillations seen here are characteristic of those thought to arise from presynaptic autocorrelation. Under the assumption $g(\delta)$ is monotonic, secondary oscillations in the CCG would indeed be an artifact manifesting when finely-timed presynaptic bursts coincide with finely-timed postsynaptic spikes arising causally from one of the presynaptic spikes in the burst (see Example~\ref{example:complex_ccg_degeneracy}). The causal postsynaptic spike then contributes to the mass of the CCG in at least two places: the large primary short-latency CCG peak~\citep{English2017} as well as in one of the secondary oscillations. Whether the causal postsynaptic spike arises from the first or second spike in the presynaptic burst dictates whether it contributes to the duplicate mass in the secondary oscillation residing in the region of negative or positive lag.

Here, we have observed that $g(\delta)$ is not monotonic, indicating, by this fact alone, that part of the secondary oscillations is causal and not an artifact due to duplicate mass in the CCG. Yet, the model neocortical pyramidal neuron does have regular bursting as well. To tease apart the contribution of each factor, we append another simulation to the AdEx system simulation as follows. Let us reuse the AdEx simulated presynaptic train $\vect R$ to keep presynaptic autocorrelation constant and reuse $\vect T^{(\emptyset)}$ to keep confounding partially constant. We take the counterfactual target spike train $\vect T^{(\emptyset)}$ and add $|\vect T| - |\vect T^{(\emptyset)}|$ spikes to it via a conditional intensity model of synaptic gain, taking the union of spikes induced by that model synapse with $\vect T^{(\emptyset)}$ (this is termed ``artificial synapse" for short). More precisely, define the synaptic gain function
\begin{align}
     &\lambda_{A}(t) | do(\vect R = \vect r) = \epsilon \int_{-\infty}^{\infty}\upsilon_0(t-\tau-\tau_d) \sum_{r \in \vect r} \delta_d(t-r) dt & \text{(generates spikes $\vect I^{(\vect R)}$)} \nonumber
\end{align}

\noindent where making a modification from before the kernel is not truncated in the direction of positive infinity: $\upsilon_0(\tau) = \exp(-\tau/\tau_s)\mathbbm{1}\{\tau \geq 0\}$. Here, $\tau_{s}$ is chosen to maximize the correlation of the resulting CCG with the CCG obtained from the initial AdEx simulation. $\epsilon$ is also chosen to produce approximately $|\vect T| - |\vect T^{(\emptyset)}|$ spikes (from the initial simulation) and then some interactions, $\vect I^{(\vect R)}$, simulated from this gain function are randomly omitted so the number of causal spikes in the second simulation exactly equal $|\vect T| - |\vect T^{(\emptyset)}|$ from the initial AdEx simulation. The resulting CCG from the artificial synapse is also displayed in Figure~{\ref{fig:adex_plot}B}.  Secondary oscillations persist in the CCG due to presynaptic autocorrelation which is confirmed by the fact that for the artificial synapse $g(\delta)$ is now monotone in Figure~{\ref{fig:adex_plot}C} as expected. However, this does not capture the whole behavior of the initial AdEx system CCG with a biophysical synapse in Figure~{\ref{fig:adex_plot}B}. This indicates that these secondary oscillations are not pure epiphenomena but instead include some causal effect that is, in fact, confounded by presynaptic autocorrelation. This was already clear by the definition of $g(\delta)$ as well, which indicates that some fraction of the causal spikes contributing to the secondary oscillations is, in fact, comprised of ``first spikes'' in response to presynaptic input (i.e., not ``second spikes'' in a rapid burst). 

While the synapse is excitatory, the non-monotonic behavior of $g(\delta)$ in the AdEx system also implies some negative gain at some points on the curve. This is likely due to a combination of the refractory periods and bias selection that causes some spikes not to occur that would have happened if the synapse had not existed. This highlights several reasons why unbiased causal effects cannot be obtained from correlation functions, including deconvolution of the CCG with the presynaptic auto-correlogram (ACG) outside neatly controlled cases. Furthermore, it must be stressed that there might exist many other causes, including network oscillations, that give rise to secondary oscillations in CCG, and so spiking correlation functions, in general, fail to address the fundamental problem of causal inference: confounding.

Estimation of the AdEx system ensues exactly as before. In Figure~{\ref{fig:adex_plot}D-F} point and interval estimates are plotted for all simulations just explored using eleven equally-spaced values for $g_0 \in [0,0.1]$ mS/cm$^2$ to generate different levels of causal rate. Figure~{\ref{fig:adex_plot}A} displays estimates for the full AdEx system defining $\theta_{syn}$ as $g(\delta_0)$. While all the confidence intervals still cover the true parameter, there is a clear underestimation. However, Figure~{\ref{fig:adex_plot}E} is obtained from the artificial synapse with the same presynaptic spike trains as Figure~{\ref{fig:adex_plot}D} and the bias vanishes. Thus, we may deduce that the bias observed in Figure~{\ref{fig:adex_plot}A} is not due to presynaptic autocorrelation. This is expected, as no assumptions were made about $\vect R$ in the theoretical development of the monosynaptic causal inference model. One possibility is that the monosynaptic causal inference model does not best approximate this dynamical system using $\delta_0$. Instead, we tried using $\delta_1 = \argmax_{\delta} g(\delta)$. While the resulting estimates are not as precise as in the LIF, it appears in this model that  $g(\delta_1)$ is better identified than $g(\delta_0)$ for most coupling strengths as shown in Figure~{\ref{fig:adex_plot}F}. As a tangential point, this also shows that estimation, in general, might be reasonable across some range of $\delta$.

\begin{figure}
    \centering
\includegraphics[width=.8\textwidth]{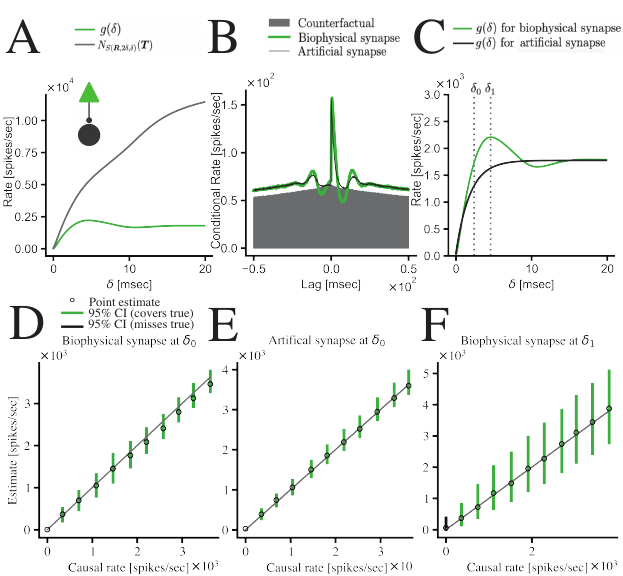}
    \caption[Causality and spike history effects in an AdEx system of a neocortical pyramidal cell driving a fast-spiking interneuron]{{\bf Causality and spike history effects in an AdEx system of a neocortical pyramidal cell driving a fast-spiking interneuron.} The model neurons exhibit strong spike history effects. \textbf{A: } $g(\delta)$ and $N_{S(\vect R,2\delta,\delta)}(\vect T)$ are plotted for the AdEx system in a simulation lasting 27.77 simulated hours. Note the non-monotone fluctuations in $g(\delta)$. \textbf{B: } The green CCG is the observation $\chi(\vect R,\vect T)$ from the same simulation. The gray-filled CCG is $\chi(\vect R,\vect T^{(\emptyset)})$ from the corresponding frozen noise simulation with the synapse removed. The black line is a CCG constructed by adding synchronous spikes to $\vect T^{(\emptyset)}$ (termed ``artificial synapse'') such that the total spike count equals $|\vect T|$; the synchronous spike times are added with a time constant chosen to maximize correlation with $\chi(\vect R,\vect T)$. Note that secondary oscillations persist in the black CCG due to presynaptic autocorrelation, but the full behavior remains unexplained by assuming independence of the causal spikes in the postsynaptic train. \textbf{C: } $g(\delta)$ for $\vect R$ and $\vect T$ is plotted in green. Plotted in black are $g(\delta)$ for $\vect R$ and the modified target train constructed by adding the artificial synapse to $\vect T^{(\emptyset)}$. Note the non-monotone fluctuations vanish.  \textbf{D: } Estimates are biased for $g(\delta_0)$. \textbf{E: } The bias vanishes with the artificial synapse. \textbf{F: } Bias is also slightly reduced by estimating $g(\delta)$ at $\delta_1 = \argmax_{\delta} g(\delta)$, however perhaps at the cost of less precision and a Type 1 error for $g(\delta_1) = 0$. Different levels of causal rate are generated by varying $g_0 \in \{0,.01,...,0.1\}$ mS/cm$^2$ as before.}
    \label{fig:adex_plot}
\end{figure}

\begin{table}[htp]
\centering
\caption{AdEx Circuit Parameters}
\label{tab:paired_adex_parameters}
\begin{tabular}{lcll}
\toprule
Parameter Name & Symbol & Unit & Value \\
\midrule
\textbf{Cellular Properties} \\
\midrule
Leak Conductance & $g_{\text{l}}$ & mS/cm$^2$ & 1/15 \\
Refractory Period & $\tau_r$ & ms & 5 \\
Membrane Capacitance & $C_m$ & $\mu$F/cm$^2$ & 1 \\
Leak Reversal Potential & $E_{\text{l}}$ & mV & -65 \\
Voltage Reset & $V_{\text{R}}$ & mV & $E_l$ \\
Adaptation Current Timescale & $\tau_w$ & ms & 500 \\
Spike Threshold & $V_{T}$ & mV & -50 \\
\midrule
\textbf{Synapse} \\
\midrule
Peak Synaptic Conductance & $g_0$ & mS/cm$^2$ & 0.05 \\

Synaptic Reversal Potential & $E_{\text{syn}}$ & mV & 0 \\
Synaptic Time Constant & $\tau_{\text{syn}}$ & ms & 3 \\
Conduction Delay & $\tau_d$ & ms & 0 \\
\midrule
\textbf{Pyramidal Neuron} \\
\midrule
Activation Slope & $k_{a,0}$  & mV & 2 \\
Adaptation Conductance & $a_0$ & mS/cm$^2$ & 2.04 \\
Adaptation Increment & $b_0$ & $\mu$A/cm$^2$ & 0.02 \\
Reset Condition & $V_{T, 0}$ & mV & $V_{T} + 5k_{a,\text{0}}$ \\
\midrule
\textbf{Interneuron} \\
\midrule
Activation Slope & $k_{a,1}$ & mV & 0.5 \\
Adaptation Conductance & $a_{1}$ & mS/cm$^2$ & 0 \\
Adaptation Increment & $b_{1}$ & $\mu$A/cm$^2$ & 0 \\
Reset Condition & $V_{T, 1}$ & mV & $V_{T} + 5k_{a,\text{1}}$ \\
\midrule
\textbf{Background Input Currents} \\
\midrule
Input timescales & $\tau_{I,i}$ * & ms & 50 \\
Input Mean & $\mu_{i}$ * & $\mu$A/cm$^2$ & 0 \\
Input SD & $\sigma_{i}$ * & $\mu$A/cm$^2$& 1 \\
\bottomrule
* : for $i \in \{0,1,2\}$
\end{tabular}
\end{table}

\newpage

\section{Neural perturbations for testing assumptions and fitting free parameters}

The frequently invoked separation of timescales hypothesis in monosynaptic inference~\citep{Csicsvari1998,Amarasingham2006,English2017,ren2020model,spivak2022deconvolution} to some degree suggests we may learn something useful by studying a toy model of instantaneously coupled Bernoulli processes in discrete time. Importantly, this setting possesses the feature that presynaptic and postsynaptic spikes can be thought of as sequences of binary treatment and outcome variables. When the synaptic effect is very fine-timescale, as is often observed \textit{in vivo}~\citep{platkiewicz2021monosynaptic}, and when firing rates are sparse, this might be a reasonable approximation. Of course, the analogy breaks in obvious ways including long synaptic decay times, temporal summation of PSPs, spike history effects, etc. But the toy model can clarify issues about causality and, fortunately for neuroscience, well-developed causal inference concepts for binary treatment and outcomes variables can then be applied to pairwise spike trains in a fairly straightforward way. In this section, we make this simplification to discuss how perturbation experiments (e.g., optogenetics) could test the monosynaptic model's assumptions or fit free parameters.

\subsection{Monosynaptic model calibration in an ideal neural perturbation experiment}
\label{sec:ideal_experiment}

For simulations in this setting, we will also retreat back to point process simulations that are even simpler than the one of Section \ref{sec:point_process_sim1}.
As building blocks, piecewise constant excitability functions will be used for various purposes,

\begin{equation}
    b_i(t_j) = \sum_{k \in \mathbbm{Z}^*} m_{i,k}\mathbbm{1}\{(k-1)\Delta \leq t_j < k \Delta\}, \text{ for } i = 0,1
\end{equation}

\noindent where the $m_{i,k}$ are repurposed from a multivariate skew of dimension $n=2$ (see Eq.~\ref{eq:multivaraiateskew}), with discrete time points $\{t_j\}_{j \in \{0,1,..., D\}}$ for an experiment of duration $D$, and where $\Delta$ is the bandwidth of the amplitudes chosen as a constant equal to the statistical free parameter of the same name defined in previous sections. This will be used to construct conditional intensity functions in an idealized monosynapse model. Working in discrete time, sets of spike times in this section will be defined as sets of integers.

Here the relationship between $\theta_{syn}$ and the \textit{probabilities of causation} of \citet{tian2000probabilities} is demonstrated in simulation. This provides an alternative set of assumptions to identify causal effects that utilize observational and experimental data. As before we work in the toy case of instantaneously coupled Bernoulli processes in simulation. For $0 \leq j \leq D$, define the conditional intensity functions,
\begin{align}
    &\lambda_R(t_j) | b_0(t) = 
    \rho_0 b_0(t_j) \label{eq:toy_rate_model1_begin} \\
    &\lambda_{T}(t_j) | b_1(t), do(\vect R = \vect r) = \rho_1 b_1(t_j) + \epsilon \mathbbm{1} \{ t_j \in \vect r\}  \label{eq:toy_rate_model1_end}
\end{align}
\noindent where $\rho_0$ and $\rho_1$ are normalization factors and $\epsilon$ is a fixed instantaneous coupling constant chosen such that $\lambda_{T}(t_j) | \vect R$ remains a proper intensity function. We map this toy model onto the monosynaptic causal model by identifying $\vect R$ and $\vect T^{(\vect R)}$ as the sets of spike times generated from $\lambda_R | b_0(t)$, $\lambda_{T}(t_j) | b_1(t), \vect R$ respectively. 

We implement the model structurally by extending the analogy of Example~\ref{rem:measure_spike_model}, with $\vect R$ and $\vect T$ conditionally independent given knowledge of the (causal) conditional intensity functions (\ref{eq:toy_rate_model1_end}). Expanding on that, define an idealized neural intervention of the presynaptic neuron as one that causally induces a new reference train $\vect r_0$; $do(\vect R = \vect r_0)$ is implemented by independently sampling the reference train from a constant intensity function $\lambda_{opto}(t_j) = \lambda_0$ inducing an experimental version of the postsynaptic intensity, $\lambda_T(t_j) | b_1(t), do(\vect R = \vect r_0) =  \rho_1 b_1(t_j) + \epsilon \mathbbm{1} \{ t_j \in \vect r_0\}$ with outcomes $\vect T^{(\vect r_0)}$. Here (in discrete time) this is effectively the common notion of experimental randomization whereby every time bin is assigned to spike by mechanisms that act independently and homogeneously across time.

Returning to \textit{probabilities of causation}, in the general case of Bernoulli random variables $X$ and $Y$, respectively, \citet{tian2000probabilities} define these probabilities as follows,
\begin{align}
    PN &= \mathbbm{P}\left( Y^{(X=0)} = 0 | X = 1, Y = 1\right) && \text{(probability of necessity)} \label{eq:pn}\\
    PS &= \mathbbm{P}\left(Y^{(X=1)}  = 1 | X =0, Y = 0 \right) && \text{(probability of sufficiency)} \\
    PNS &= \mathbbm{P}\left(Y^{(X=1)} = 1, Y^{(X=0)} = 0 \right) && \text{(probability of necessity \& sufficiency)}.
    \label{eq:pns}
\end{align}

For example, probability of necessity (PN) is the probability that $\{X=1\}$ is a necessary cause of the effect $\{Y=1\}$. It is the probability that, given the event that $\{X=1\}$ and $\{Y=1\}$ both occur, $Y$ is 0 when $X$ is forced (via intervention) to be 0. More loosely, $X$ would be 0, were it not that $Y$ is 1; that is, $X$ is the {\it necessary} cause of $Y$. PS and PNS have similar interpretations. We refer the interested reader to Ch. 9 of \citet{pearl2009} for a fuller review.

To map these probabilities into our experiments (e.g., spikes simulated from the structural causal model in the specification above, including Eqs.~\eqref{eq:toy_rate_model1_begin}-\eqref{eq:toy_rate_model1_end}), let $V$ be a random time: $V \sim Uniform\{1,2,...,D\}.$ Then $X := \ind \{ V \in \vect R\},$ and $Y := \ind \{ V \in \vect T\}$ and apply Eqs. (\ref{eq:pn}-\ref{eq:pns}).
Hence, in simulation, we will identify the ground truth of PN with its intervention-inferred numerical estimate $PN = \theta_{syn}/|\vect R \cap \vect T|$. This is the true proportion of causal synchrony to observed synchrony. (Note that the noise processes are not iid, so there is an additional, implicit assumption that the noise processes are mixing quickly enough to make the error in this identification negligible. We do not analyze this error, or incorporate a variability assessment.)  

Pearl identifies several ways to identify $PN$ from observational and experimental data~\citep{pearl2009}. For our purposes, an acceptable assumption is \textit{monotonicity}, which here simply requires a synapse to be strictly excitatory ($\epsilon > 0$) or strictly inhibitory ($\epsilon < 0$). Let us explain the excitatory case. The inhibitory case follows precisely the same logic but redefines the outcome variable as silence rather than a spike. We follow \citet{pearl2009} and for finite data assert by hypothesis an alternative estimate for $PN = \theta_{syn}/|\vect R \cap \vect T|$ as,
\begin{align}
     \hat{PN}_{exp} &:=   
      \bigg(\frac{|\vect T \cap \vect R|}{D}\bigg)^{-1}\bigg(\frac{|\vect T|}{D} - \frac{|\vect T^{(\vect r_0)} \setminus \vect r_0|}{D-|\vect r_0|}\bigg), \text{ if } \epsilon \geq 0
    \label{eq:PN_syn}
\end{align}
where as defined earlier $D$ is the duration of the experiment. The estimator uses spontaneous and perturbation data as just outlined. Under the monosynaptic causal inference model, the analogous estimator is denoted  $\hat{PN}_{obs} = \hat{\theta}_{syn}/|\vect R \cap \vect T|$ requiring only observational data under its assumptions. Likewise, we suggest $PNS = \epsilon$ in the toy model with the alternative estimator,
\begin{align}
      \hat{PNS}_{exp} &:= 
     \frac{| \vect T^{(\vect r_0)} \cap \vect r_0 |}{|\vect r_0|} - \frac{|\vect T^{(\vect r_0)} \setminus \vect r_0|}{D-|\vect r_0|}, \text{ if } \epsilon \geq 0.
    \label{eq:PNS_syn}
\end{align}
\indent The monosynaptic causal inference model's corresponding estimate will be $\hat{PNS}_{obs} = \hat{\theta}_{syn}/|\vect R|$. Notice this gives a more principled account of what neurophysiologists often call \textit{efficacy}~\citep{levick1972lateral} or \textit{spike transmission gain}~\citep{abeles1991corticonics}, which are, loosely, the excess probability of a postsynaptic spike given that a presynaptic spike occurred. We use the word \textit{loosely} because the word \textit{excess} has no universal interpretation (see excellent review in~\citet{stevenson2023circumstantial}), and to our knowledge, none have formally interpreted \textit{excess} in terms of counterfactuals and potential outcome random variables.  Finally, for the ground truth numerical \textit{probability of sufficiency} let $\iota \coloneqq (\mathbbm{Z}^* \cap [0,D)) \setminus (\vect R \cup \vect T)$. Then, $PS = |\vect T^{(\iota)} \cap \iota| /\iota$ with alternative estimate,
\begin{align}
     \hat{PS}_{exp} &:= 
      \bigg(\frac{D-|\vect T \cup \vect R|}{D}\bigg)^{-1}\bigg(\frac{|\vect T^{(\vect r_0)} \cap \vect r_0|}{|\vect r_0|} - \frac{|\vect T|}{D}\bigg), \text{ if } \epsilon \geq 0
    \label{eq:PS_syn}
\end{align}

\noindent and estimated from observational data only by the monosynaptic causal inference model as,

\begin{equation}
\hat{PS}_{obs} = \bigg( \frac{D - |\vect T \cup \vect R|}{D}\bigg)^{-1}\bigg(\hat{PNS}_{obs} - \bigg(\frac{|\vect T \cap \vect R|}{D}\hat{PN}_{obs}\bigg)\bigg).
\end{equation}

As mentioned before, these quantities can be obtained for inhibition in the exact same way where the queried postsynaptic outcome variable is silence. For visualization purposes, in the inhibitory case, we define the probabilities of causation through multiplication by $-1$ so that an estimate of inhibition can be plotted simultaneously with excitation and compared with $\theta_{syn}$ on its negative support. For example, in the inhibitory case, we will have,

\begin{equation}
    \hat{PN}_{syn} = -1 \bigg(\frac{|\vect R \setminus \vect T|}{D}\bigg)^{-1} \bigg( \frac{D - |\vect T|}{D} - \frac{D-|\vect T^{(\vect r_0)} \cup \vect r_0|}{D-|\vect r_0|}\bigg), \epsilon < 0.
\end{equation}

The significant observation is that the probabilities of causation are obtained from experimental and observational data without appeal to some of the assumptions that make $\theta_{syn}$ identifiable from observational data alone. Namely, $\mathcal{A}.$\ref{as:4}  (conditional uniformity) and $\mathcal{A}.$\ref{as:2} (separation of timescales) are not required to identify the probabilities of causation. For this reason, if these idealized concepts could be extended to fit more realistic aspects spike trains recorded $\textit{in vivo}$, we have here provided an experimental test of $\mathcal{A}.$\ref{as:4} and $\mathcal{A}.$\ref{as:2} that could be conducted in the laboratory. Essential in this endeavor would be confidence limits, say for $PN$, with finite data, which is research currently being pursued~\citep{li2022learning}. However, the final section of this study will argue that such an experiment is not easily achieved by current experimental technologies (e.g., optogenetic stimulation). These alternative estimators also might provide a route to estimate the free parameter $\delta$, $\tau$, and $\Delta$.

We conclude this section by simulating spike trains from the toy model in Eqs.~\eqref{eq:toy_rate_model1_begin}-\eqref{eq:toy_rate_model1_end}. Simulation details are exactly analogous to those in Figure \ref{fig:point_process_D}. Figure~{\ref{fig:idealized_experiment}} shows the results of forty-two simulations (twenty-one for excitatory and inhibitory estimates) as just described and each plot shows the corresponding point estimates for $PNS$ (Figure~{\ref{fig:idealized_experiment}A}), $PS$ (Figure~{\ref{fig:idealized_experiment}B}), and $PN$ (Figure~{\ref{fig:idealized_experiment}C}). In each case, a tight correspondence is shown between the $\theta_{syn}$-derived estimates, which come from observational data, and the alternative estimates, which use a combination of experimental and observational data.

\newpage

\begin{figure}
    \centering
\includegraphics[width=.99\textwidth]{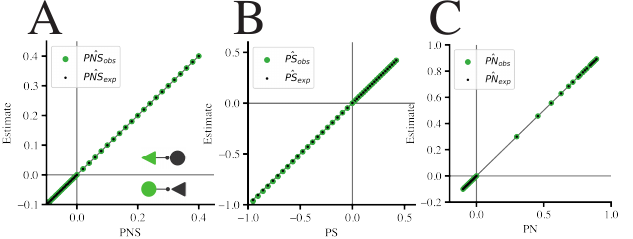}
    \caption[An idealized experimental test of the monosynaptic model's assumptions]{{\bf An idealized experimental test of the monosynaptic model's assumptions.} In sparse firing conditions where presynaptic and postsynaptic spikes can be approximated as binary treatment and outcome variables, the monosynaptic model can be related to the \textit{probabilities of causation} of  \citet{tian2000probabilities} in a toy spiking model. $\hat{PNS}_{exp}$, $\hat{PS}_{exp}$, and $\hat{PN}_{exp}$  are derived directly from \citet{tian2000probabilities} and provide alternative estimates for monosynaptic causal effects by combining neural intervention data with spontaneous data and thus require fewer assumptions on the background processes, providing an experimental test of the monosynaptic causal inference model's assumptions. $\hat{PNS}_{obs}$, $\hat{PS}_{obs}$, and $\hat{PN}_{obs}$ are different normalizations of $\hat{\theta}_{syn}$ obtained from observational data as described in the text. \textbf{A: } Probability of necessity and sufficiency ($PNS$) corresponds to $\epsilon$ in the toy model. \textbf{B: } Probability of sufficiency $(PS)$. \textbf{C: } Probability of necessity ($PN$).}
    \label{fig:idealized_experiment}
\end{figure}

\subsection{Even strong perturbations might quite strongly fail as randomized experiments.}
\label{sec:opto_fail}
In the previous section, we explored conceptually the notion of an ideal neural intervention in a toy spiking model. The purpose of this was to highlight connections between $\theta_{syn}$ and more well-established causal inference concepts and to speculate 
about avenues for future research that might make the interventions suitable for more realistic dynamical models. Another concern persists, which is the degree to which current experimental technologies actually achieve the theoretical notion of an \textit{intervention} in causal inference. Recently, \citet{lepperod2023inferring} fruitfully analyzed the confounding that arises from optogenetic stimulation activating many neurons that may be unobserved. However, while it is well-understood that stimulation often increases the empirical rate of the presynaptic neuron on a coarse timescale~\citep{English2017}, it is not clear in a dynamical system how much deconfounding occurs at the level of voltage and hence what the proper interpretation of juxtacellular or optogenetic stimulation is. In this section, we caution against simple interpretations (and thus show the difficulty of obtaining the ideal intervention we proposed in the previous section) by injecting stochastic input currents into correlated but unconnected LIF neurons, as well as stimulating them with a biophysically detailed channelrhodopsin model~\citep{williams2013computational}. 

A simple example to consider is two unconnected LIF neurons with common input that produces structure in the CCG. An ideal intervention, where each point in time is randomly assigned a presynaptic spike or not (see Section~\ref{sec:ideal_experiment}) should destroy all structure in the cross-correlogram during stimulation, yielding a flat histogram. Consider two LIF neurons,
\begin{equation}
    C_m \frac{dV_i}{dt} = -g_l(V_i-E_l) + I_c(t) + I_i(t) + I_{p}(t)\mathbbm\{i=1\}, \text{ for } i = 0,1 
\label{eq:lif_system_opto}
\end{equation}
\noindent where as before if at time $t_0$, $ V(t_0) = V_{T}$, the voltage is reset to $E_l$. With the same form as Eq.~\ref{eq:ou_process} but with a slight change in notation, $I_c(t)$ is common OU noise to both neurons, $I_i(t)$ for $i=0,1$ is independent OU noise for each neuron. $I_p(t)$ is either an injected current identical to the stimulus to be described momentarily or the same stimulus filtered by the channelrhodopsin (ChR2) model of~\citet{williams2013computational}. 

Let $S(t)$ be the stochastic stimulus, then
\begin{align}
    &I_p(t) = S(t) && \text{(for current stimulation)} \\
    &I_p(t) = g_{ChR2}G(V)(O_1 + \gamma O_2) (V_0-E_{ChR2}) && \text{(for optogenetic stimulation)} \label{eq:opsin_current}
\end{align}
\noindent where in the notation of~\citet{williams2013computational} $g_{ChR2}$ is the max conductance of the photocurrent, $E_{ChR2}$ is the reversal potential for channelrhodopsin, $G(V)$ is a voltage-dependent rectification function, $O_1,O_2$ are open state probabilities, and $\gamma$ is a normalization factor. Eq.~\eqref{eq:opsin_current} is identical to Eq. 1 in ~\citet{williams2013computational}, and we replace what in their notation is termed $S_0(\theta)$ in their Eq. 11 with our stimulus $S(t)$. We refer the reader to the rest of that study since the channelrhodopsin model is rather complicated and has various state variables and parameters. We used identical parameters from the original study for channelrhodopsin.

In simulation, we take $S(t)$ to be a special discrete construction of a Gauss-Markov process with Hurst or Hölder parameter $H \in [0,1]$. The motivation for this is to generate repeatable spike patterns in the presynaptic neuron regardless of the level of other sources of noise~\citep{taillefumier2014transition}, constituting the notion of experimental intervention. The parameter $H$ plays the same role as in fractional Brownian motion, intuitively describing how rough (small $H$) versus smooth (high $H$) the trajectory is, however the process used here is colored (i.e., its power spectrum is not flat). The process was developed as an injected current in previous work to suggest that more reliable spiking patterns can be induced into a LIF neuron to the degree that $H$ is small regardless of the neuron's level of independent noise~\citep{taillefumier2008haar}. The construction of the process is described in Appendix~\ref{app:ou_process}. In this setting, it is simply being employed as technology to produce reliable spiking responses to stimulation~\citep{mainen1995reliability,levi2022error}, although the tenability of this very statement in this setting is what is being tested in the simulation. Consider that if a spiking pattern were perfectly reliable to a repeated stimulus, then an experimentalist would know that they are deconfounding in the sense of the $do(\cdot)$ operator of causal inference.

Figure~{\ref{fig:biophysical_perturbation}} simulates the system in Eq.~\eqref{eq:lif_system_opto} for different parameters of the stochastic input current or light stimulus; the timescale $\tau_{H}$ and Hölder parameter $H$. In each simulation, the timescales of the intrinsic processes $I_0(t)$, $I_1(t)$, and $I_c(t)$ were set to $10$ ms, and their amplitudes to unit variance. The amplitudes of the stimulations were then adjusted so that the reference neuron's empirical rate during stimulation was approximately $470 \%$ greater than the spontaneous rate as in the experiment of~\citet{English2017}. Equalizing firing rate across experimental conditions in this way, surprisingly quite strong common input correlations persist for current injection and optogenetic stimulation. Furthermore, varying the input parameters $H$ and $\tau_H$ leads to hardly detectable differences in the deconfounding as measured through the CCG. If current or optogenetic stimulation fulfilled the notion of $do(\cdot)$ as applied to spike trains in Section~\ref{sec:ideal_experiment}, the CCG during stimulation should be flat. 

\begin{figure}
    \centering
\includegraphics[width=.65\textwidth]{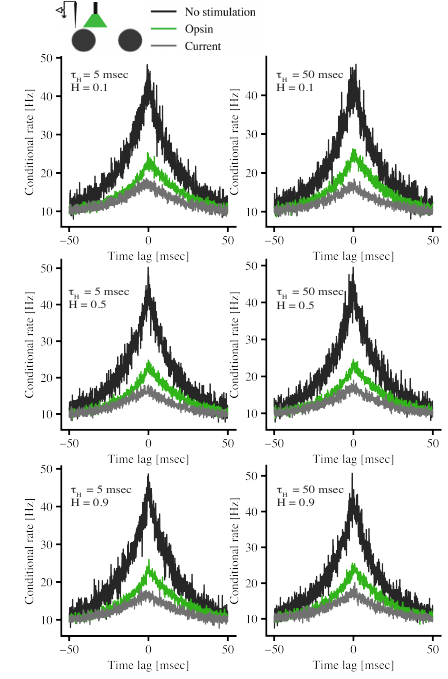}
    \caption[Strong juxtacellular or optogenetic stimulation might fail to be randomized experiments]{{\bf Strong juxtacellular or optogenetic stimulation might fail to be randomized experiments.} Two LIF neurons are driven by common inputs, and their CCG is plotted (black plots). With the same frozen noise input, the model system is subjected to either (1) current injection in the pattern of a Gaussian process (gray plots) or (2) photocurrent stimulation in the same pattern on a biophysically detailed opsin model affixed to one of the LIF neurons (green plots). A special Gaussian process is utilized from a theory that predicts reliable spike patterns to be produced to the degree that a parameter $H \in [0,1]$ is small. The timescale of the stimulus, $\tau_H$, is also split into two conditions, 5 ms and 50 ms simulations. The variance of the stimulations was adjusted for electric current injection or photostimulation such that the stimulated firing rate was approximately $470 \%$ greater than the spontaneous rate~\citep{English2017}. Even for this strong perturbation, confounding common synaptic input correlations persist and do not significantly differ given the character of the input when the variance of the stimulation is adjusted to produce equal firing rates across conditions.}
    \label{fig:biophysical_perturbation}
\end{figure}

\newpage

\section{Appendix}

\begin{table}[htp]
\centering
\caption{List of abbreviations}
\label{tab:abrev_descriptions}
\begin{tabular}{ll}
\toprule
\textbf{Abbreviation} & \textbf{Definition} \\
\midrule
ACG & Auto-correlogram \\ 
AdEx & Adaptive exponential integrate-and-fire neuron \\ 
CCG & Cross-correlogram \\ 
\textit{cdf} & Cumulative distribution function \\ 
CGF & Cumulant generating function \\ 
ChR2 & Channelrhodopsin-2 \\ 
DC & Direct convolution \\
DFT/IDFT & Discrete-Fourier transform and its inverse \\
FFT/IFFT & Fast-Fourier transform and its inverse \\
\textit{iid} & Independent and identically distributed \\
INT & Interneuron \\
LIF & Leaky integrate-and-fire neuron \\ 
OU & Ornstein–Uhlenbeck process \\
\textit{pmf} & Probability mass function \\
PSP & Postsynaptic potential  \\
PYR & Pyramidal neuron \\
SD & Standard deviation  \\
\bottomrule
\end{tabular}
\end{table}

\subsection{Examples of confounding and non-identifiability in the CCG}
\label{app:ccg_exam}

To prepare for examples that demonstrate this issue, let $\vect R$ and $\vect T$ be a finite set of spike times (a point process) for an experiment of fixed duration. We will, in general, consider a reference spike train $\vect R$ hypothesized to be presynaptic, and a target spike train $\vect T$, hypothesized to be postsynaptic. A goal is to quantify the evidence for that hypothesis and for a number of its characteristics. We are thus interested in the potential outcome random variable, $\vect T^{(\vect R=\vect r)}$, abbreviated $\vect T^{(\vect r)},$ which is the target train, in a causal model, induced by $do(\vect R = \vect r)$. That is, we are interested in the causal influence of $\vect R$ on $\vect T.$ For any spike trains $\vect X$ and $\vect Y$ define the unnormalized sample cross-correlation function (sample CCF) as,
\begin{align}
    \hat{\chi}(\vect X, \vect Y,\tau) &\coloneqq \int_{-\infty}^{\infty} \sum_{x \in \vect X} \delta_d(t-x) \sum_{y \in \vect Y} \delta_d(t-y + \tau)  \; \; dt
\end{align}

\noindent where $\delta_d$ is the Dirac delta function. We will also write $\chi(\vect X, \vect Y,\tau) = \E[\hat{\chi}(\vect X, \vect Y,\tau)]$ and will occasionally assume the spike trains are discrete, reinterpreting the notation accordingly when specified. The term unnormalized cross-correlogram (CCG) likewise refers to a binned version of $\hat{\chi}(\vect X, \vect Y,\tau)$.

The following examples motivate the approach of this article. {Figure~\ref{fig:neural_interventions}} illustrates their simulation and the causal decompositions described in the examples. {Example~\ref{rem:measure_spike_model}} presents an example of a causal model in terms of point process models, and subsequent simulations will utilize this definition of causality. 

We start with the simplest model one might imagine.
\begin{example}[Instantaneously-coupled Bernoulli processes with fixed coupling constant $\epsilon$]
\label{rem:measure_spike_model}
Define a probability space which contains $\vect \omega = (\omega_1,...,\omega_{2N}),$ a vector of $2N$ independent uniform [0,1] random variables.  Then consider the following potential outcomes model: $\vect R (\vect \omega) = \{t_j : \omega_j \leq \lambda_{R}\}$ and $\vect T^{(\vect R=\vect r)}(\vect \omega) = \{t_j : \omega_{j+N} \leq \lambda_{ T} + \epsilon \mathbbm{1} \{ t_j \in \vect r\}\}.$ By independence, there is no confounding (of $\vect R$ and $\vect T$). The average causal effect of the coupling at time $t$, $E[ \ind\{ t \in \vect T\} - \ind\{ t \in \vect T^{(\vect R=\emptyset)}\} ],$ is $\epsilon$. Consider, for example, the intervention $do(\vect R = \emptyset)$: $\vect T^{(\vect R=\emptyset)}(\vect \omega) = \{t_j : \omega_{j+N} \leq \lambda_{T} + \epsilon \mathbbm{1} \{ t_j \in \emptyset\}\} = \{t_j : \omega_{j+N} \leq \lambda_{T} \}$. $T^{(\vect R=\emptyset)}(\vect \omega)$ is defined as a function on the same probability space as the functions $\vect R (\vect \omega)$ and $\vect T(\vect \omega)=\vect T^{(\vect R)}(\vect \omega)$. 
$(\omega_1, \omega_2, ..., \omega_{2N})$ are so-called `background' variables. We think of a particular realization of $\vect \omega$ as encoding the state(s) of the `external' world. Interventions modify the relations between $\vect R$ and $\vect T$ to define potential outcomes for $\vect T$, given that the state(s) of the world (i.e., the background variables $\vect \omega$) are fixed (i.e., `frozen') over potential outcomes. 
\end{example}


{Example~\ref{example:simple_ccg_degeneracy}} now examines the behavior of the CCF for {Example~\ref{rem:measure_spike_model}} demonstrating that $\epsilon$ is not identifiable.

\begin{example}[Identical CCGs with different coupling strength]
\label{example:simple_ccg_degeneracy}
One can verify from the independence relations that a normalized CCF for the model in Example~\ref{rem:measure_spike_model} is $\chi(\vect R, \vect T^{(\vect R)},\tau)/N =  \E[\sum_{t=1}^N \mathbbm{1}\{t-\tau \in \vect R\} \mathbbm{1}\{t\in \vect T^{(\vect R)}\}]/N = \lambda_R(\lambda_T + \epsilon) \mathbbm{1}\{\tau=0\} + (\lambda_R\lambda_T + \lambda_R^2\epsilon) \mathbbm{1}\{\tau \neq 0\}$ dismissing edge effects. Since in this model the CCF is flat everywhere but the coupling lag at $\tau = 0$, the CCF peak, $\rho = \chi(\vect R, \vect T^{(\vect R)},\tau=0)-\chi(\vect R, \vect T^{(\vect R)},\tau = z)$ for any $z \neq 0$, can be related to the average causal effect as $\epsilon = \rho/(\lambda_R-\lambda_R^2)$ where we have normalized the peak by $\lambda_R$ which is standard in functional connectivity studies. $\epsilon$ and $\rho$ are not equal. Consider two situations. In \textbf{Situation A} the model has parameters $\lambda_{R,A}, \epsilon_{A}, \lambda_{T,A}$. In \textbf{Situation B}, an identical $\lambda_R$-normalized CCF is obtained by setting $\lambda_{R,B} = 1 - \epsilon_{A}, \epsilon_{B} = 1 - \lambda_{R,A}, \lambda_{T,B} = \lambda_{T,A} + \lambda_{R,A}\epsilon_{A} - \lambda_{R,B}\epsilon_{B}$. For example, we can have the average causal effects be $\epsilon_{A} = 0.2,$ in Situation A, and $\epsilon_{B} = 0.8,$  in Situation B, and yet their CCFs are identical.
\end{example}

In the previous example, $\epsilon$ is identifible if supplemented by one additional unknown, $\lambda_R$. Here that is trivial if the process is stationary, however neural data is known to be highly nonstationary~\citep{Softky1993,shinomoto1999ornstein} leading to extreme difficulty in estimating analogous time-varying quantities~\citep{Amarasingham2015}. When $\lambda_R$ is unknown, one can verify through level set analysis that the model parameters in the example can vary widely for fixed $\epsilon$, suggesting large relative bias if $\lambda_R$ is slightly misestimated, even when $\lambda_R$ is small. 

Related to the observation that $\lambda_R$ confounds estimation in {Example~\ref{example:simple_ccg_degeneracy}}, it has long been understood that presynaptic autocorrelation might in some way influence the CCG between two neurons~\citep{moore1970statistical} leading to suggestions that deconvolution of the CCG with the presynaptic ACG  might help in deconfounding, particularly under stationarity assumptions~\citep{spivak2022deconvolution}. In the next example, we examine the nature of this confounding in a nonstationary setting. Two neurons are given confounding oscillatory backgrounds. In addition, the presynaptic cell emits a burst of three spikes approximately every second. As a thought experiment, imagine these bursting events alternate such that the spike times in the bursts are either generated by a Gaussian with small variance or large variance. Causal conclusions from the CCG vary widely in their dependence on whether the causal interactions tend to occur among the bursts with small or large variances, highlighting the non-identifiability of causal inference from correlation functions altogether. Later, we use this intuition to construct confidence intervals by supposing causal events occur at these limiting cases of presynaptic firing, thus bounding the estimate over this uncertainty (Section~\ref{sec:excitatory_ci}).

\begin{example}[Different CCGs with identical coupling strength]
\label{example:complex_ccg_degeneracy}
Consider the following generative model for an experiment of duration $D$. Let background intensity functions be $\lambda_{R}(t) = \lambda_{T}(t) = \alpha \cos(\omega t) + \alpha$ for $t \in [0,D), \alpha \in [0,1/2)$ both generating sets of real-valued points $\vect R_{0}$ and $\vect T_{0}$, respectively. Define a sequence of latent events $0 \leq \ell_1, \ell_2,...,\ell_K \leq D$ such that $\ell_k - \ell_{k-1} \sim Uniform(0.8,1.2)$ seconds. Let these latent events be the center of a burst of three spikes, $X_{k,i} \sim \mathcal{N}(l_k,\sigma_A \mathbbm{1}\{k \text{ is even}\} + \sigma_B \mathbbm{1}\{k \text{ is odd}\})$ for $i \in \{1,2,3\}$. Collect these events in a set $\vect R_1 = \cup_{k \in \mathbbm{N}} \cup_{i=1}^{3} X_{k,i}$ and define the presynaptic spike train as $\vect R = \vect R_0 \cup \vect R_1$. Also, let $Y_{k,i} \sim \mathcal{N}(X_{k,i} + d,\sigma_s \mathbbm{1}\{k \text{ is even}\} + \sqrt{2\sigma_A^2 + \sigma_s^2} \mathbbm{1}\{k \text{ is odd}\})$ for $i \in \{1,2,3\}$ and $\vect X = \cup_{\{k \text{ even}\}} \cup_{i=1}^{3} Y_{k,i}$ and $\vect Y = \cup_{\{k \text{ odd}\}} \cup_{i=1}^{3} Y_{k,i}$. Now consider two situations. In \textbf{Situation A}, $\vect T^{(\emptyset)} = \vect T_0$ and $\vect T^{(\vect R)} = \vect T_0 \cup \vect X$. Since cross-correlation is a linear operator and the constituent processes are in superposition, the unnormalized CCF can be expressed in approximate closed-form as  $\chi_A(\vect R,\vect T, \tau) = \chi(\vect R_0, \vect T_0,\tau) + \chi(\vect R_1, \vect T_0,\tau) + \chi(\vect R_0, \vect X,\tau) + \chi(\vect R_1, \vect X,\tau) \approx D \alpha^2/2\cos(\omega t) + \alpha^2 + D^{-1}(\E[|\vect R_1|]\E[|\vect T_0|] + \E[|\vect X|]\E[|\vect R_0|]) + \frac{1}{2}\E[|\vect R_1|] \mathcal{N}(\tau |d,\sigma_s) + \frac{3-1}{2}\E[|\vect R_1|] \mathcal{N}(\tau |d,\sqrt{2\sigma_A^2 + \sigma_s^2})$ where we have dismissed edge effects. In \textbf{Situation B}, $\vect T^{(\emptyset)} = \vect T_0$ and $\vect T^{(\vect R)} = \vect T_0 \cup \vect Y$ and by the same logic $\chi_B(\vect R, \vect T,\tau) \approx D \alpha^2/2\cos(\omega t) + \alpha^2 + D^{-1}(\E[|\vect R_1|]\E[|\vect T_0|] + \E[|\vect Y|]\E[|\vect R_0|]) + \frac{1}{2}\E[|\vect R_1|] \mathcal{N}(\tau |d,\sqrt{2\sigma_A^2 + \sigma_s^2}) + \frac{3-1}{2}\E[|\vect R_1|] \mathcal{N}(\tau |d,\sqrt{2\sigma_B^2 + 2\sigma_A^2 + \sigma_s^2})$. In both cases, we write $\frac{3-1}{2}\E[|\vect R_1|]$ to highlight that
there are three causal events per burst, we subtract one as it has already been counted in the penultimate term, and such a scenario occurs for half of the presynaptic bursts contributing to the unnormalized CCF. If $\sigma_A$ and $\sigma_S$ are both small and not appreciably different and $\sigma_A << \sigma_B$, the penultimate term of $\chi_A(\vect R,\vect T, \tau)$ will appear as a monosynaptic feature and the penultimate term of $\chi_B(\vect R,\vect T, \tau)$ will appear as a background feature, leading to very different causal conclusions in both situations although the causal effects - in the sense of $|\vect X|/|\vect R| = |\vect Y|/|\vect R|$ - and presyaptic spike trains are exactly equal in each situation, demonstrating that mere knowledge of the CCG and presynaptic ACG would be insufficient for causal inference. 
\end{example}

These examples are intentionally dramatic to be instructive and often exceed plausible neurophysiological behavior. Namely, the firing rates are often quite high, and we used Gaussian functions for analytic tractability, although they allow for some causality to occur in reverse time. However, if the examples are understood in mathematical detail, it's straightforward to see that the degeneracy is quite general, especially if we are concerned with the relative error of estimates. Moreover, the regime of nonstationary, high presynaptic bursting coinciding with information transfer is perhaps the most biologically relevant~\citep{Ostojic2015,Herrmann2002,pressley2011dynamics}. A common way presynaptic bursting manifests in the CCG is as secondary oscillations visually distinct from the primary monosynaptic peak because of the refractory periods ~\citep{spivak2022deconvolution}. We explore this in Section~\ref{sec:adex_main} and Figure~\ref{fig:adex_plot}. Note, however, that in Example~\ref{example:complex_ccg_degeneracy} the influence of bursting on the CCG depends on how the temporal resolution of bursting interacts with the temporal resolution of the causal interactions they associate with; a concept that should generalize beyond the idealizations made here. So it is not guaranteed that refractory periods will dissolve the issue. In fact, statistical dependence between the temporal resolution of bursting and the causal interactions (e.g., from short-term plasticity)  might likely make the interpretation even more subtle, leading to multiple sources of estimation error at different lags of the CCG (that is, a combination of Example~\ref{example:simple_ccg_degeneracy} and Example~\ref{example:complex_ccg_degeneracy}). Complex dependencies between background input currents and other forms of nonstationarity suggest that the toy examples here may paint a forgiving picture of the confounding present in real neural data~\cite {mehler2018lure}.

\subsection{Monosynaptic causal inference model proofs}
\label{app:proof}

\mytheoremone*

\begin{proof}  
\leavevmode\newline
\\
\textbf{Case 1}: Excitation, $\theta_{syn} \geq 0$.

\begin{quote}
\noindent For an arbitrary $k \in \mathbbm{Z}^*$ we have from the definition of the model Eq.~\eqref{eq:synon} we have, for all $\vect r$,
\begin{align} 
N_{\gamma(k\Delta) \cap S(\vect r)}\big(\vect T^{(\vect r)}\big)
&= N_{\gamma(k\Delta)}\big(\vect I^{(\vect r)} \setminus \bigcup_{r \in \vect r} \{S(r) : N_{S(r)}(\vect B) > 0 \}\big)  +  N_{\gamma(k\Delta) \cap S(\vect r)}\big(\vect B \big)
\end{align}
\noindent where $S(\vect r)$ has been excluded from the subscript of the increment in the first term of the RHS by $\mathcal{A}.$\ref{as:2}. Taking conditional expectations with respect to $(\vect \Gamma (\vect T), \vect R)$, and applying $\mathcal{A}.$\ref{as:1} (consistency) and linearity, we have 
\begin{align} \label{eq:forBsub}
&\E\bigg[N_{\gamma(k\Delta) \cap S(\vect R)}\big(\vect T\big) \bigg| \vect \Gamma (\vect T), \vect R  \bigg]   \\
&= \E \bigg[ N_{\gamma(k\Delta)}\big(\vect I^{(\vect r)} \setminus \bigcup_{r \in \vect R} \{S(r) : N_{S(r)}(\vect B) > 0 \}\big)\bigg| \vect \Gamma (\vect T), \vect R \bigg] \\
& \quad +  \E \bigg[ N_{\gamma(k\Delta) \cap S(\vect R)}\big(\vect B\big) \bigg| \vect \Gamma (\vect T), \vect R \bigg] . \nonumber
\end{align}

\noindent By $\mathcal{A}.$\ref{as:4} (conditional uniformity), 
\begin{equation}
    \E[ N_{\gamma(k\Delta) \cap S(\vect R)}\big(\vect B\big) | \vect \Gamma (\vect T), \vect R] = q(\vect R, k\Delta)\E[ N_{\gamma(k\Delta)} (\vect B) | \vect \Gamma (\vect T), \vect R].
\end{equation}

Furthermore, Eq.~\eqref{eq:synon} under $\mathcal{A}.$\ref{as:2} gives $N_{\gamma(k\Delta)}\big(\vect I^{(\vect r)} \setminus \cup_{r \in \vect r} \{S(r) : N_{S(r)}(\vect B) > 0 \}\big) = N_{\gamma(k\Delta)}(\vect T^{(\vect r)}) - N_{\gamma(k\Delta)}(\vect T^{(\emptyset)})$, for all $\vect r$. Eq.~\eqref{eq:forBsub} then becomes,
\begin{align}
&\E\bigg[N_{\gamma(k\Delta) \cap S(\vect R)}\big(\vect T\big) \bigg| \vect \Gamma (\vect T), \vect R \bigg]   \\ \label{eq:choosedelneed}
 &= \E\bigg[N_{\gamma(k\Delta)}(\vect T )  - N_{\gamma(k\Delta)}(\vect T^{(\emptyset)}) \bigg| \vect \Gamma (\vect T), \vect R \bigg] +  q(\vect R, k\Delta) \E\bigg[ N_{\gamma(k\Delta)}( \vect B ) \bigg| \vect \Gamma (\vect T), \vect R \bigg] \\
 &= \E\bigg[N_{\gamma(k\Delta)}(\vect T )  - N_{\gamma(k\Delta)}(\vect T^{(\emptyset)}) \bigg| \vect \Gamma (\vect T), \vect R \bigg]\\
 & \quad \quad +  q(\vect R, k\Delta) \E\bigg[ N_{\gamma(k\Delta)}\big(\vect T\big) - \bigg(N_{\gamma(k\Delta)}(\vect T) - N_{\gamma(k\Delta)}(\vect T^{(\emptyset)})\bigg) \bigg| \vect \Gamma (\vect T), \vect R \bigg]
\end{align}

\noindent where the substitution of $N_{\gamma(k\Delta)} (\vect B)$ inside the expectation of the last term again results from Eq.~\eqref{eq:synon} under $\mathcal{A}.$\ref{as:2}. 
Rearranging, we obtain
\begin{align}
&\E\bigg[N_{\gamma(k\Delta)}(\vect T) - N_{\gamma(k\Delta)}(\vect T^{(\emptyset)}) \bigg| \vect \Gamma (\vect T), \vect R \bigg] \\
& \quad \quad =\E\bigg[\frac{N_{\gamma(k\Delta) \cap S(\vect r)}\big(\vect T \big) -  q(\vect R, k\Delta) N_{\gamma(k\Delta)}\big(\vect T\big)}{1- q(\vect R, k\Delta)} \bigg| \vect \Gamma (\vect T), \vect R \bigg],
\end{align}
using the general fact that $\E[f(Y)\E[X|Y]] = \E[\E[f(Y)X|Y]] = \E[Xf(Y)]$ for (measurable) random variables $X, Y,$ and functions $f$. Summing over $k$ and taking expectations on both sides of the above gives

\begin{align}
\E\bigg[ \hat{\theta}_{syn} \bigg] &= \E\bigg[ \sum_{k \in \mathbbm Z^*} \frac{N_{\gamma(k\Delta) \cap S(\vect r)}\big(\vect T \big) -  q(\vect R, k\Delta) N_{\gamma(k\Delta)}\big(\vect T\big)}{1- q(\vect R, k\Delta)}  \bigg] \label{eq:thetaexcite}\\
&= \E\bigg[ \sum_{k \in \mathbbm Z^*} N_{\gamma(k\Delta)}(\vect T) - N_{\gamma(k\Delta)}(\vect T^{(\emptyset)}) \bigg] \\
&= \E\bigg[ N_{S(\vect R)}(\vect T)  - N_{S(\vect R)}(\vect T^{(\emptyset)}) \bigg] = \theta_{syn}.
\end{align}
\end{quote}

\textbf{Case 2}: Inhibition, $\theta_{syn} \leq 0$.

\begin{quote}
    By Eq.~\eqref{eq:synon} and $\mathcal{A}.$\ref{as:2} the following invariant holds for all realizations of the inhibitory model and every $\vect r$,
\begin{align}
    N_{\gamma(k\Delta) \setminus S(\vect r)}(\vect T^{(\vect r)}) = N_{\gamma(k\Delta) \setminus S(\vect r) }(\vect T^{(\emptyset)}).
\end{align}

In the inhibitory model, Eq.~\eqref{eq:synon} under $\mathcal{A}.$\ref{as:2} gives $N_{\gamma(k\Delta) \setminus S(\vect r) }(\vect T^{(\emptyset)}) = N_{\gamma(k\Delta) \setminus S(\vect r) }(\vect B), \forall \vect r$. Using this substitution and applying similar steps as in Case 1, including $\mathcal{A}.$\ref{as:1} and $\mathcal{A}.$\ref{as:4}, we have
\begin{align}
&\E \bigg[N_{\gamma(k\Delta) \setminus S(\vect R)}(\vect T) \bigg| \vect R, \vect \Gamma (\vect T) \bigg] = \bigg(1 - q(\vect R, k\Delta)\bigg) \E \bigg[ N_{\gamma(k\Delta)}(\vect B) \bigg| \vect R, \vect \Gamma (\vect T) \bigg]. \label{eq:inhib1} 
\end{align}

In the inhibitory model, we again have from Eq.~\eqref{eq:synon} under $\mathcal{A}.$\ref{as:2} $N_{\gamma(k\Delta)}(\vect T^{(\vect r)}) = N_{\gamma(k\Delta)}(\vect B) - N_{\gamma(k\Delta)}(\vect B \cap \cup_{r \in \vect r} \{S(r) : N_{S(r)}(\vect I^{(\vect r)}) > 0 \})$. Using this relation to substitute $N_{\gamma(k\Delta)}(\vect B)$ in Eq.~\eqref{eq:inhib1} and again using $\mathcal{A}.$\ref{as:1} for terms inside the expectation,
\begin{align}
&\E \bigg[N_{\gamma(k\Delta) \setminus S(\vect R)}(\vect T) \bigg| \vect R, \vect \Gamma (\vect T) \bigg] \\
&= \bigg(1 - q(\vect R, k\Delta)\bigg) \E \bigg[ N_{\gamma(k\Delta)}(\vect T) + N_{\gamma(k\Delta)}(\vect B \cap \bigcup_{r \in \vect R} \{S(r) : N_{S(r)}(\vect I) > 0 \})  \bigg| \vect R, \vect \Gamma (\vect T) \bigg] \label{eq:inhib2} \\
&= \bigg(1 - q(\vect R, k\Delta)\bigg) \E \bigg[ N_{\gamma(k\Delta)}(\vect T) - \bigg(N_{\gamma(k\Delta)}(\vect T^{(\vect R)}) - N_{\gamma(k\Delta)}(\vect T^{(\emptyset)}) \bigg) \bigg| \vect R, \vect \Gamma (\vect T) \bigg]
\end{align}

\noindent where the change of sign inside the expectation of the last line results from the fact that under $\mathcal{A}.$\ref{as:2} $N_{\gamma(k\Delta)}(\vect T^{(\vect R)}) - N_{\gamma(k\Delta)}(\vect T^{(\emptyset)}) \leq 0$ in the inhibitory model. Rearranging and following analogous steps as used in Case 1,

\begin{align}
\theta_{syn} = -\sum_{k}\E\bigg[\frac{N_{\gamma(k\Delta) \setminus S(\vect R)}\big(\vect T\big) -  (1-q(\vect R, k\Delta)) N_{\gamma(k\Delta)}\big(\vect T\big)}{1- q(\vect R, k\Delta)}\bigg] \label{eq:inhibest}.
\end{align}

Setting this expression equal to Eq.~\eqref{eq:thetaexcite} all terms cancel.
\end{quote}
\end{proof}

\lemmaone*
\begin{proof}
We will establish (\ref{eq:l_minus}). (\ref{eq:argmax_plus}) can be established in the same way.
\begin{align}
&\mathbbm{P}\bigg(N_{S(\vect R)} (\vect B)  \leq  c \bigg| \vect q ( \vect R, \vect T), \vect J \bigg) \\
&=\sum_{n=0}^{\text{c}} \sum_{Q\in \vect J^{[\text{n}]}}^{} \prod_{i \in Q}^{} q(\vect R, T_i) \prod_{k\in (\vect J \setminus Q)} (1-q(\vect R, T_k))&&\\ \nonumber
   &=\sum_{n=0}^{\text{c}-1} \sum_{Q\in \{\vect J \setminus x\}^{[n]}}^{} \prod_{i \in \{Q\cup x\}}^{} q(\vect R, T_i) \prod_{k\in (\vect J \setminus (Q \cup x))}^{}(1-q(\vect R, T_k)) \\
   &\quad \quad + \sum_{n=0}^{\text{c}} \sum_{Q\in \{\vect J \setminus x\}^{[n]}}^{} \prod_{i \in Q}^{} q(\vect R, T_i) \prod_{k\in \{\vect J \setminus Q\}}^{}(1-q(\vect R, T_k)), &&
\end{align}

\noindent using $\{ Q \in \vect J^{[n]} : x \in Q \} = \{ Q \cup x : Q \in \{\vect J \setminus x\}^{[n-1]}\}$. For an arbitrary $x \in \vect J$,  this implies
\begin{align}
    &\frac{\partial}{\partial q(\vect R, T_x)} \mathbbm{P}\bigg(N_{S(\vect R)} (\vect B) \leq  c \bigg| \vect q ( \vect R, \vect T), \vect J \bigg) =&& \\ &=\sum_{n=0}^{\text{c}-1} \sum_{Q\in \{\vect J \setminus x\}^{[n]}}^{} \prod_{i \in Q}^{} q(\vect R, T_i) \prod_{k\in \vect J \backslash (Q \cup x)}^{}(1-q(\vect R, T_k)) \nonumber \\    
    &\quad \quad - \sum_{n=0}^{\text{c}} \sum_{Q\in \{\vect j \setminus x\}^{[n]}}^{} \prod_{i \in Q}^{} q(\vect R, T_i) \prod_{k\in \vect J \setminus (Q \cup x)} (1-q(\vect R, T_k))&& \\
    &= - \sum_{Q\in \{\vect J \setminus x\}^{[c]}}^{} \prod_{i \in Q}^{} q(\vect R, T_i) \prod_{k\in \vect J \setminus (Q \cup x) }(1-q(\vect R, T_k)) \leq 0,
\label{eq:cdf_deriv_neg}
\end{align}

\noindent because $q(\vect R,T_i) \in [0,1]$ for all $i \in \vect K$. Note that $\partial/\partial q(\vect R,T_x) [\mathbbm{P}\left(N_{S(\vect R)} (\vect B) \leq  c| \vect q ( \vect R, \vect T), \vect J \right)]$ is independent of $q(\vect R, T_x).$


(Proof by contradiction.) Suppose $\vect J_{\theta_{syn}}^- \not\in \argmin_{\vect j \in \mathcal D} \{ c^{-}( \vect q( \vect R, \vect T), \vect j) \}$. Fix any $\vect J^* \in \argmin_{\vect j \in \mathcal D} c^{-}( \vect q( \vect R, \vect T), \vect j).$ Accordingly, assume there exists an $x^* \in \vect J_{\theta_{syn}}^-$ such that $x^* \not\in \vect J^*$ so that $q( \vect R, T_{x^*}) < q( \vect R, T_m)$ for some $m \in \vect J^*$ by the definition of $\vect J_{\theta_{syn}}^-.$ (If such an $x^*$ does not exist, then $c^{-}( \vect q(\vect R, \vect T), \vect J^*) = c^{-}( \vect q(\vect R, \vect T), \vect J^-_{\theta_{syn}})$ by construction.) By Eq.~\ref{eq:cdf_deriv_neg}, this implies $c^{-}( \vect q(\vect R, \vect T), \vect J^*) > c^{-}(\vect q(\vect R, \vect T), \vect J^* \cup \{x^*\} \setminus \{m\} ).$ This is a contradiction as $\vect J^* \cup \{x^*\} \setminus \{m\} \in \mathcal D.$

\end{proof}

\lemmatwo*

\begin{proof} The proof is by induction. The case $n=1$ is self-evident. Observe that
\begin{align}
    \Prob \left( \sum_{i=1}^n X_i \leq k | Z \right) &= \Prob \left( \sum_{i=1}^{n-1} X_i \leq k \bigg| Z, X_n=0 \right) ( 1- \Prob( X_n=1 | Z )) \\
    & + \Prob \left( \sum_{i=1}^{n-1} X_i \leq k-1 \bigg| Z, X_n=1 \right) \Prob( X_n=1 ) 
\end{align}
and
\begin{equation}
    \Prob \left( \sum_{i=1}^n Y_i \leq k \right) = \Prob \left( \sum_{i=1}^{n-1} Y_i \leq k \right) + \Prob \left( \sum_{i=1}^{n-1} Y_i \leq k-1 \right) p_n.
\end{equation}

\noindent Conditioning on $X_n = 0$, the induction hypothesis is satisfied for $X_1,X_2,...,X_{n-1}$. Therefore,

\begin{equation}
     \Prob \left( \sum_{i=1}^{n-1} X_i \leq k | Z, X_n=0 \right) \leq \Prob \left( \sum_{i=1}^n Y_i \leq k \right)
\end{equation}

\noindent and analogously,

\begin{equation}
\Prob \left( \sum_{i=1}^{n-1} X_i \leq k \bigg | Z, X_n = 1 \right) \leq \Prob \left(\sum_{i=1}^{n-1}Y_i \geq k-1\right).
\end{equation}

\noindent Also note,

\begin{equation}
    \bigg\{\sum_{i=1}^n Y_i \leq k\bigg\} \subseteq \bigg\{\sum_{i=1}^n Y_i \leq k - 1 \bigg\} \implies \Prob \left(\sum_{i=1}^{n-1} Y_i \leq k \right) \leq \Prob \left(\sum_{i=1}^{n-1} Y_i \leq k-1 \right)
\end{equation}

\noindent and,

\begin{equation}
\Prob(X_n = 1) = \sum_{_i X} \Prob(X_n=1|Z,_i X) \Prob(_i X) \leq \sum_{_i X} p_n \Prob(_i X) = p_n.
\end{equation}
From this reasoning we obtain,
\begin{align}
    \Prob \left(\sum_{i=1}^n Y_i \leq k \right) &= \Prob \left(\sum_{i=1}^{n-1} Y_i \leq k \right) \left( 1-p_n \right) 
    + \Prob \left(\sum_{i=1}^{n-1} Y_i \leq k - 1 \right) p_n \\
    &\geq \Prob \left(\sum_{i=1}^{n-1} Y_i \leq k \right) \left(1-\Prob(X_n=1)\right)
    + \Prob \left(\sum_{i=1}^{n-1} Y_i \leq k-1 \right) \Prob(X_n=1) \\
    &\geq \Prob \left(\sum_{i=1}^{n-1} X_i \leq k \bigg| Z, X_n = 0\right) \left(1-\Prob(X_n=1)\right)
    + \Prob \left(\sum_{i=1}^{n-1} X_i \leq k-1 \bigg | Z, X_n = 1\right) \Prob(X_n=1) \\
    &= \Prob \left(\sum_{i=1}^{n} X_i \geq k \bigg| Z \right).
\end{align}
\end{proof}

\propone*

\begin{proof}
\noindent We will prove the first inequality Eq.~\ref{left-tail-bd}. Eq.~\ref{right-tail-bd} can be proved in the same way.

Note that
\begin{equation}
\Prob \bigg ( N_{S(\vect R)} (\vect T) - h \leq  c^{-}( \vect q( \vect R, \vect T ), \vect J) \bigg |  q( \vect R, \vect T ), \vect J \bigg )  \leq \alpha/2, 
\end{equation}
by the definition of $c^-(\cdot),$ and symmetry. Thus, by the fact that $\vect J \in \mathcal{D}(\vect R, \vect T)$, under $H_0$ and by Lemma \ref{argmin-lemma},
\begin{equation}
    c^-( \vect q( \vect R, \vect T ), \vect J_h^-) \leq c^-( \vect q( \vect R, \vect T ), \vect J) \label{prop-key-ineq}
\end{equation}
for all realizations of the model. We have,
\begin{flalign}
    \Prob \bigg ( N_{S(\vect R)} (\vect T) - h &\leq  c^-( \vect q( \vect R, \vect T ), \vect J^-_h) \bigg |  q( \vect R, \vect T ), \vect J \bigg ) &    \\
    & \leq \Prob \bigg ( N_{S(\vect R)} (\vect T) - h \leq  c^-( \vect q( \vect R, \vect T ), \vect J) \bigg |  q( \vect R, \vect T ), \vect J \bigg )  \leq \alpha/2,
\end{flalign}
(almost surely). Therefore 
\begin{flalign}
    \Prob \bigg( N_{S(\vect R)} (\vect T) - h &\leq  c^-( \vect q( \vect R, \vect T ), \vect J^-_h) \bigg) \\
    &= \mathbb{E} \bigg[ \Prob \bigg( N_{S(\vect R)} (\vect T) - h \leq  c^-( \vect q( \vect R, \vect T ), \vect J^-_h) \left|  q( \vect R, \vect T ), \vect J \bigg) \right. \bigg] \\
    &\leq \alpha/2.
\end{flalign}
\end{proof}

\begin{remark}
We found the following intuition useful regarding the conditional inference in Proposition \ref{prop-critregion}. It is not necessary that the critical region's boundary term, $c^{-}( \vect q( \vect R, \vect T), \vect J^{-}_{h}),$ be $(\vect q( \vect R, \vect T), \vect J)$-measurable, because of the inequality Eq.~\ref{prop-key-ineq}. Yet, a key feature is that it is $(\vect R, \vect T)$-measurable, so that, speaking informally, its evaluation does not require ``knowledge'' of $\vect J.$
\end{remark}

\subsection{Construction of multivariate Ornstein-Uhlenbeck process}
\label{app:ou_process}

To try to create reliable spike patterns, we utilize the Ornstein-Uhlenbeck construction of~\citet{taillefumier2008haar} as an injected current or light stimulus. We describe a multivariate version of the process because it is far more efficient to sample a multivariate version and then concatenate the dimensions into a longer trial. The process used for the former task is easily recovered in the one-dimensional case.  These authors construct the Ornstein-Uhlenbeck process $U(t)$ from discrete Haar-like basis functions. We focus on the discrete representation since the primary goal is to simulate the process. The whole process is divided into dyadic segments, with the following basis functions tiling the dyadic segments at various resolutions,
\begin{equation}
\lambda_{n,k}(t) \coloneqq
\left\{\!\begin{aligned}
&\frac{\sigma_H \cdot \sinh{(|\alpha_H|(t-2k\cdot2^{-n}))}}{\sqrt{\alpha_H \sinh{(\alpha_H2^{1-n})}}} && \text{if } (2k)2^{-n} \leq t < (2k+1)2^{-n}\\[1ex]
&\frac{\sigma_H \cdot \sinh{(|\alpha_H|(2(k+1)2^{-n}-t))}}{\sqrt{\alpha_H \sinh{(\alpha_H2^{1-n})}}} && \text{if } (2k+1)2^{-n} \leq t < 2(k+1)2^{-n}\\[1ex]
&0 && \text{otherwise,}
\end{aligned}\right.
\end{equation}

\noindent for $0 \leq 2k < 2^n$ with timescale $\tau_H$, scaling parameters $\sigma_H$ and $\alpha_H$, and where

\begin{equation}
    \lambda_{0,0}(t) = \frac{\sigma_H \cdot \exp(-\alpha_H/2)\sinh(|\alpha_H|t)}{\sqrt{\alpha_H \cdot \sinh{(\alpha_H)}}}.
\end{equation}

A parameter $H \in [0,1]$ describes how the amplitude of these basis functions scale with the resolution of the support, $\Delta_H$. For a multi-dimensional version, for $1 \leq i \leq 2n-1$ and $1 \leq j \leq n$, define the matrices,

\begin{align}
\vec{\vect M}_n(i,j) &= \mathbbm{1}\{i=2j-1\} + \mathbbm{1}\{i/2=j\} + \mathbbm{1}\{i/2= j-1\}\\
\vec{\vect U}_n(i,j) &= \mathbbm{1}\{i \text{ is even} \} \\
\vec{\vect V}_n(i,j) &= \mathbbm{1}\{i \text{ is odd} \}.
\end{align}

For a multivariate process of dimension $M$ and $2^N$ time points, let the value of the process at each time point be $\vec{x}_t \in \mathbbm{R}^m$. For a resolution $n$, arrange the values of the process at the dyadic points $(0 \cdot 2^N/2^{n},0 \cdot 2^N/2^{n},...,2^{n} \cdot 2^N/2^{n})$ into distinct matrices indexed by $n$,

\begin{align}
\vec{\vect X}_{n} = \begin{bmatrix}
\vec{x}_{0 \cdot 2^N/2^{n}}  \\
\vec{x}_{1 \cdot 2^N/2^{n}}  \\
\vdots  \\
\vec{x}_{2^{n} \cdot 2^N/2^{n}}
\end{bmatrix}.
\end{align}

For numerical reasons, the endpoints are clamped such that $\vec{\vect X}_0 = [0]_{2,m}$ and the full process $\vec{\vect X}_N$ can then be computed for $0 \leq n \leq N-1$ efficiently by the recursive matrix operations,

\begin{align}
\overbrace{\begin{bmatrix}
\vec{x}_{0 \cdot 2^N/2^{n+1}}  \\
\vec{x}_{1 \cdot 2^N/2^{n+1}}  \\
\vec{x}_{2 \cdot 2^N/2^{n+1}}  \\
\vdots  \\
\vec{x}_{2^{n+1} \cdot 2^N/2^{n+1}} \\
\end{bmatrix}}^{\displaystyle \vec{\vect X}_{n+1}} &= \bigg(\frac{1}{2}\cosh^{-1}{(\Delta_H/2\tau_H)} \vec{\vect U}_{n+1} + \vec{\vect V}_{n+1} \bigg) \odot \vec{\vect M}_{n+1} \overbrace{\begin{bmatrix}
\vec{x}_{0 \cdot 2^N/2^{n}}  \\
\vec{x}_{1 \cdot 2^N/2^{n}}  \\
\vec{x}_{2 \cdot 2^N/2^{n}}  \\
\vdots  \\
\vec{x}_{2^{n} \cdot 2^N/2^{n}} \\
\end{bmatrix}}^{\displaystyle \vec{\vect X}_{n}}\\
&+ \alpha_H [\tau_H \tanh{(\Delta_H/2\tau_H)}]^H \sqrt{\frac{1}{\tau_H}} 
\underbrace{\begin{bmatrix}
\vec{0}_m \\
\vec{\xi}_{1 \cdot 2^N/2^{n+1}} \\
\vec{0}_m  \\
\vec{\xi}_{3 \cdot 2^N/2^{n+1}} \\ 
\vdots \\
\vec{\xi}_{(2^{n+1}-1) \cdot 2^N/2^{n+1}} \\
\vec{0}_m \\
\end{bmatrix}}_{\displaystyle \vec{\vect \Xi}_{n+1}}  \label{eq:sample_mvoup}
\end{align}

\noindent where for fixed $n$ $\vec{\vect \Xi}_{n}$ contains in its odd rows $iid$ normal random vectors, $\vec{\xi}_t$, obeying $\mathcal{N}(\vec{0}_m,\vect \vec{\Sigma}_{m,m}(n))$ and zeros otherwise. That is, each resolution $n$ has a characteristic dependence structure parameterized by the covariance matrices $\vec{\vect \Sigma}_{m,m}(n)$ for $1 \leq n \leq N$. $\vec{\vect \Sigma}_{m,m}(0)$ is not defined since the endpoints are clamped. For background inputs into the HH-type model system, we used a constant covariance matrix for all $n$. As was previously done in the text, the Vine Beta method, with its parameter fixed to $0.1$, was used to generate a random covariance matrix with strong positive and negative associations in three dimensions. Sampling is made much more efficient by the following. First, for a process of dimension $M$ sample~Eq.\eqref{eq:sample_mvoup} instead for a process of $M\cdot N_{trial}$ with the covariance matrix for the $M$-dimensional process replaced with the block matrix $\vec{\vect \Sigma}_{N_{trial}m,N_{trial}m}(n) = \vec{\vect I} \otimes \vect \Sigma_{m,m}(n)$ where $\otimes$ is the Kronecker product and $\vec{\vect I}$ the identity matrix. Implementing~Eq.\eqref{eq:sample_mvoup} recursively is efficient up to matrices of moderate size, and finally, then every $j$-th column for $j \in \{1,2,...m\}$ of $\vect X_{2^N}$ can be concatenated together $N_{trial}$  times.

When we used the above process as a stimulus in the simulated neural perturbation experiments, both $H$ and $\tau_H$ were varied as indicated in Section~\ref{sec:opto_fail}.

\subsection{Rationale for monosynaptic confidence interval algorithm}
\label{app:algorithm_rationale}

Synaptic inference is commonly performed on low-dimensional and easy-to-visualize objects, the CCG in particular, and the most immediate objection to the theory outlined in previous sections is that it may be computationally prohibitive. Point estimation via Eq.~\eqref{eq:pointestimator} is quite simple, but for confidence intervals, this objection may be reasonable because inference is performed on the sequence of spike counts and synchrony in small temporal intervals, which is of much higher dimension and grows with the duration of the spike trains. Earlier, this objection was neutralized via a principled algorithm. Here, we describe the algorithms in prose and highlight their rationale.

Given an observation $\vect R$ and $\vect T$ all the probabilities in Eq.~\eqref{eq:rankord_inhib} can be obtained as well as $\Tilde{\vect J}^{-}_h$ and $\Tilde{\vect J}^{+}_h$ for any $h$. As mentioned in the main text, a naive but easy-to-grasp strategy to compute confidence intervals would be to begin with the two-tailed hypothesis $H_0: \theta_{syn}=0$ enacted through convolution of the distributions with success probabilities $q(\vect R,Z_i)$ for $i \in \Tilde{\vect J}^{-}_0$ and then evaluation of tail areas given the observed value of $N_{S(\vect R)}(\vect T)$. Then, if we reject the null hypothesis at the upper tail proceed to test positive values for $H_0: \theta_{syn}=h$ in the sequence $h \in (1,2,3,...)$ recomputing the tail probability each time for distributions arising from $\Tilde{\vect J}^{-}_h$ until we fail to reject. If, instead, we fail to reject the null hypothesis at the left tail, proceed to test negative values $h \in (-1,-2,-3,...)$ until we fail to reject. The process then needs to be repeated for $\Tilde{\vect J}^{+}_h$ except for the case $h=0$. Each value $h$ tested is expensive because we must compute a sum of independent random indicators and a central question is how to reuse computations most effectively across these tests. For most data of reasonable size a much faster strategy is to apply standard binary search~\citep{lehmer1960teaching} to the sequence $(0,1,2,...,N_{S(\vect R)}(\vect T))$ (i.e., the values of $h$ to test) with the search query being the location of adjacent values in the sequence for which one and not the other fails to reject the null hypothesis $H_0: \theta_{syn} = h$.

The next question is how exactly to convolve the distributions that emerge from every step of binary search so that the tail area can be evaluated; that is, we must compute the \textit{cdf} of a sum of independent but not necessarily identically distributed indicators. While in practice this is often computed with the \textit{Fast-Fourier transform} (FFT), FFT can have very large relative errors for small tail probabilities~\citep{keich2005sfft}. In contrast, \textit{direct convolution} (DC) is the most accurate method and uses only the convolution definition of the distribution function of a sum of random variables. While its runtime complexity is $O(N^2)$, typically motivating the use of FFT, \citet{biscarri2018simple} observe that DC is still most efficient when convolving a small number of vectors or many vectors of small dimensions (e.g., Bernoulli vectors) suggesting DC and FFT can work in concert in a divide-and-conquer scheme. Through both theoretical considerations and experimentation, we adopt a similar mixed approach, also taking inspiration from~\citet{peres2021exactly}. These studies address the general problem of convolving independent indicators, while the problem here is to construct confidence intervals, and there are various other domain-specific constraints that we can exploit. For example, a special feature of our problem is we may assume many of the random indicators to be convolved will have equal success probabilities~\citep{jeck2015closed}. 

The overview is as follows. First, as a first pass, we compute a conservative confidence interval using Chernoff bounds to estimate tail probabilities at every iteration of binary search. Then, in a second pass of binary search, we refine the Chernoff confidence interval on a much-narrowed search space by computing tail probabilities with a mix of DC and FFT accompanied by a method for recovering the relative accuracy of FFT via exponential tilting~\citep{wilson2016accurate}. The strategy also exploits redundant structure in various ways to minimize computations. We digress to quickly introduce Chernoff's bound to those unfamiliar.

\begin{remark}
The following is a well-known result. Let $X_i \sim Be(p_i)$ for $i \in \{1,2,...n\}$. Consider the condition the sum is above some bound $t$,
\begin{align}
    \sum_i X_i  \geq t.
\end{align}
\noindent By multiplying through by a constant $\lambda$, exponentiating, and applying Markov's inequality the generic Chernoff bound is obtained,
\begin{align}
\label{eq:genecher}
    P\bigg(\sum_i X_i  \geq t\bigg) \leq exp(-\lambda t)\E[exp(\lambda \sum_i X_i)].
\end{align}
\noindent In the case of sums of independent indicators a bit more work can yield the result,
\begin{align}
    P\bigg(\sum_i X_i  \geq t\bigg) \leq (\mu e/t)^te^{-\mu}
\end{align}
\noindent such that $ln(t/\mu) = \lambda$ and $\mu = \sum_i p_i$~\citep{vershynin2018high}.
\end{remark}

Denote $\mathcal{\Tilde{C}}_{CF}$ as confidence intervals for $\theta_{syn}$ obtained by using Chernoff's bound as a tail probability estimate. If we substitute computation of the exact \textit{cdf} with Chernoff's bounds, it is guaranteed that $\mathcal{\Tilde{C}}_{CF}\supseteq \mathcal{\Tilde{C}}$. Furthermore, unlike the exact \textit{cdf}, iterative computation of Chernoff's bound is very cheap for binary search on the sequence $(1,2,...,N_{S(\vect R)}(\vect T))$ relative to its imprecision. 

After $\mathcal{\Tilde{C}}_{CF}$ is calculated, the search space is significantly reduced and we can now choose a more accurate method to implement a second pass of binary search on the sequence $(CF_{L},CF_{L}+1,...,CF_{U}-1,CF_{U})$ where $CF_{L}$ and $CF_{U}$ are the lower and upper confidence bounds for $\theta_{syn}$ obtained via Chernoff's bound. The first pass with Chernoff's bound also guarantees that we need the lower confidence interval to be at least $CF_{L}$, meaning we are assured that at least $CF_{L}$ random variables \textit{will not} need to be convolved in the second pass. On the other hand, the first pass assures that at least $N_{S(\vect R)}(\vect T) - CF_{U} + |\vect T \setminus S(\vect R)|$ random variables \textit{will} need to be convolved and hence we can design an algorithm that only computes those convolutions once then reuses the result. 

We will describe the process for computing exact confidence intervals that ensues for the case of $h>0$ and for the tail corresponding to $\Tilde{\vect J}^{+}_h$. The other cases are analogous, and we will conclude by highlighting alterations. For the remainder of this section, we will describe the strategy in words to supplement the description with rationale. Where ambiguity might be present here, the reader can refer to the precise description as summarized in Algorithm~\ref{alg:convolution} and Algorithm~\ref{alg:confidence}.

Given $CF_{L}$ and $CF_{U}$, the $N_{S(\vect R)}(\vect T) - CF_{U} + |\vect T \setminus S(\vect R)|$ random indicators that Chernoff tells us must be summed are those with success probabilities $q(Z_i)$ for $i \in \Tilde{\vect J}^{+}_{CF_{U}}$. The second pass of binary search will test values for $h$ in $(CF_{L},CF_{L}+1,...,CF_{U}-1,CF_{U})$ 
and, for each new value $h$, the distributions with success probabilities
$q(Z_i)$ for $i \in \Tilde{\vect J}^{+}_{h} \setminus \Tilde{\vect J}^{+}_{CF_{U}}$ will need to be appended to previous computations to obtain the tail probability.

Let us now ask how we can efficiently compute the initial convolution that Chernoff tells us must contribute to the final result and is quite likely to comprise the bulk of the final result, with the foresight that subsequent iterations will need to reincorporate this bulk into various distinct new tail probability computations until the second pass of binary search halts. We may ask if we can choose a number $L_{div}$ such that the success probabilities $q(Z_i)$ for $i \in \Tilde{\vect J}^{+}_{CF_{U}}$ can be divided into $L_{div}$ groups. In particular, we wish to divide them into $L_{div}$ groups such that each group contains precisely the same mixture of constituent success probabilities. To reiterate once more, we require that each group is an identical mixture of perhaps unequal Bernoulli vectors. If we then calculate the sum of the random indicators within each of $L_{div}$ groups, the result will be $L_{div}$ \textit{iid} distributions as an intermediary step. This strategy has three advantages that easily compensate for the additional overhead of determining the $L_{div}$ groups. First, we only need to compute the within-group sum of indicators once, not $L_{div}$ times, since each group is \textit{iid}. Second, this one convolution needed will contain a small number of random variables of low dimensions to the extent $L_{div}$ is large, and thus, we can exploit DC in the regime it is efficient~\citep{biscarri2018simple} with great payoff. Third, since the $L_{div}$ groups are \textit{iid}, to obtain the distribution for the sum over groups, we can use highly efficient $L_{div}$-fold convolution power, meaning we need only compute one FFT rather than $L_{div}$ FFTs to obtain the final result ($L_{div}$ also cannot be too large to avoid numerical errors in FFT but the algorithm has robust performance for a significant range, say $L_{div} \in [2^2,2^6]$).

Finding $L_{div}$ $iid$ groups is easy and useful to the extent that the initial mixture of success probabilities contains a small number of unique values relative to the total number of indicators. This condition is highly applicable to the scientific context. Clearly, if every initial success probability were unique, it would be impossible to divide them into $L_{div}$ groups giving rise to $L_{div}$ \textit{iid} intermediary sums. Even if there are many repeated success probabilities in the initial mixture, it is unlikely to find a $L_{div}$ in a helpful range to achieve the objective since that requires the frequencies of each unique occurring Bernoulli vector to have $L_{div}$ as a common divisor. In contrast, it is easy to split the initial mixture into $L_{div}$ \textit{iid} groups if we tolerate one group of ``residual'' success probabilities to be triaged and dealt with as a special case. This is the approach we take.

From the probabilities $q(Z_i)$ for $i \in \Tilde{\vect J}^{+}_{CF_{U}}$, compute the unique success probabilities and the frequency of occurrence for each unique vector. Then divide each of the frequencies by $L_{div}$ and round down. The resulting numbers will be how many random indicators of each unique success probability will contribute to one of $L_{div}$ intermediary sums. Rather than using DC to convolve all the Bernoulli vectors in this one group, instead firstly directly compute binomial distributions for the underlying unique groups of \textit{iid} Bernoulli vectors with the number of trials for each binomial as the frequency of the unique occurring Bernoulli vectors divided by $L_{div}$ rounded down. Then, use DC on the resulting binomial vectors; at this stage, DC will still be efficient if $L_{div}$ is large enough. Denote $\vec{p}$ as this resulting probability vector that will be the input to $L_{div}$-fold convolution downstream. In the steps just described, the operation of dividing the frequency of each unique vector in the initial mixture by $L_{div}$ and rounding down produces ``leftovers'' - to become the triaged group - because of rounding. Denote the probability vector of the triaged convolution sum as $\vec{a}$. This concludes the preparation for the second pass; the distributions with success probabilities $q(Z_i)$ for $i \in \Tilde{\vect J}^{+}_{CF_{U}}$ are now encoded in $\vec{a}$, $\vec{p}$, and $L_{div}$.

During the second pass of binary search, $\vec{a}$ is itself immutable but convolved per iteration with distributions that arise from new hypotheses $H_0: \theta_{syn} = h$ producing as a result a temporary vector $\vec{g}$. Specifically, for each new $h$, $\vec{g}$ is obtained by convolving $\vec{a}$ with distributions with success probabilities $q(Z_i)$ for $i \in \Tilde{\vect J}^{+}_{h} \setminus \Tilde{\vect J}^{+}_{CF_{U}}$. These new distributions are convolved with $\vec{a}$ using DC since the dimensions are likely to still be small to the extent Chernoff's bound is close to the exact answer. 

Finally, for each $h$ tested on the second pass, $\vec{p}$ and $\vec{g}$ are sent to an FFT-based step with an exponential tilt to correct for the errors typically induced by FFT. Denote the exponentially tilted versions of $\vec{p}$ and $\vec{g}$ as $\vec{p}_{\hat{s}}$ and $\vec{g}_{\hat{s}}$, respectively. Specifically, the convolution for each $h$ occurs via $L_{div}$-fold convolution power for $\vec{p}_{\hat{s}}$ and point-wise multiplication of the result with $\vec{g}_{\hat{s}}$ in the frequency domain (each embedded in the dimension of the final result). It is important to mention that one might argue we only need to compute the FFT of $\vec{p}_{\hat{s}}$ once after the first pass, and hence $\vec{p}_{\hat{s}}$ can then sit idly in the frequency domain during the second pass waiting to be point-wise multiplied with new versions of $\vec{g}_{\hat{s}}$ for each new test $h$ encountered by the second pass of binary search. However, this does not work because every new convolution requires a new exponential tilting parameter. That is, $\vec{p}_{\hat{s}}$ is a temporary vector that changes each iteration because a new tilting parameter is required. While this prohibits such reuse~\citep{wilson2016accurate}, it offers extraordinarily large increases in accuracy in return. We prioritize accuracy over this decrease in speed at this juncture. Overall, the algorithm is still competitively fast.

The iterative FFT convolution with exponential tilting is detailed and given its own description in Algorithm~\ref{alg:convolution}. The overall procedure is then summarised in Algorithm~\ref{alg:confidence} for an excitatory lower confidence bound. There are three more cases: the excitatory upper bound as well as inhibitory lower and upper bounds. The other cases are very similar but we quickly remark here on the differences. First, a lower bound of $\theta_{syn}$ corresponds to the right tail probability of the test statistic whereas an upper bound for $\theta_{syn}$ corresponds to a left tail probability of the test statistic. Exponential tilting is suitable for computing right tail probabilities, but for any random variable $X$ the left tail probability can be obtained by computing the right tail probability of $-X$. Last, the initial implementation of binary search using Chernoff's bound for excitatory intervals searches $h$ on the sequence $(1,2,...,N_{S(\vect R)}(\vect T))$ but for inhibition the sequence searched is $(1,2,...,|\vect G|)$.

Algorithm~\ref{alg:confidence} begins with the assumption that the null hypothesis $H_0: \theta_{syn} = 0$ has already been rejected and code posted online tests $H_0: \theta_{syn} = 0$ and breaks if it fails to be rejected. The philosophy behind this decision is that in large-scale neural recordings, most neurons will be unconnected, and computing confidence intervals for all pairwise interactions will be expensive and not insightful. A very rapid approximation, such as the normal approximation, can be used for the test $H_0: \theta_{syn} = 0$, and a p-value can be calibrated for the precision a specific question requires. The confidence intervals can then be applied to significant pairs and still may include $0$ when the exact method is employed. Code posted online also gives the option to return confidence intervals computed with p-values obtained from the normal approximation which is much faster for large spike trains. Preliminary numerical experiments suggest spike trains of moderate size will yield normal approximations very close to the exact solution. However, the proximity to the exact solution depends upon many factors and the tolerance for error depends on the scientific question, necessitating specific use case evaluations.

\subsection{Simulation software and hardware}
To ensure careful handling of frozen noise inputs, all numerical simulations of dynamical systems were programmed from scratch and integrated with modified Euler's method in Python. Long simulations were optimized often through parallelization and concatenation. This was done in Google Colab using the NVIDIA T4 GPU.

\section*{Acknowledgements}
We thank Jonathan Platkiewicz, Gyorgy Buzsaki, Daniel English, Uri Keich, and Thibaud Taillefumier for helpful conversations that guided this research.

\let\oldbibitem\bibitem
\renewcommand{\bibitem}[2][]{\oldbibitem[#1]{#2}\vspace{0.5ex}}
\addcontentsline{toc}{section}{References} 
\bibliography{monosynapse}

\end{document}